\documentclass[11pt]{article}
\usepackage[margin=1.1in]{geometry}
\usepackage{amsmath,amssymb,amsthm}
\usepackage{graphicx}
\usepackage{float}
\usepackage[hidelinks]{hyperref}
\hypersetup{
  pdftitle={The 11/6 supremum of the Wang-Sitters rounding scheme for graph balancing},
  pdfauthor={Adam Y. Shavit},
  pdfsubject={Approximation algorithms; graph balancing; scheduling on unrelated machines},
  pdfkeywords={graph balancing, restricted assignment, approximation algorithm,
               linear programming relaxation, rounding, Wang-Sitters, makespan}
}

\newtheorem{theorem}{Theorem}
\newtheorem{lemma}[theorem]{Lemma}
\newtheorem{proposition}[theorem]{Proposition}
\newtheorem{corollary}[theorem]{Corollary}
\theoremstyle{definition}

\theoremstyle{remark}
\newtheorem{remark}[theorem]{Remark}
\newtheorem{measurement}[theorem]{Measurement}

\newcommand{\leverid}{arXiv:2609.10004}
\newcommand{\OPT}{\mathrm{OPT}}
\newcommand{\ALG}{\mathrm{ALG}}

\title{The $11/6$ supremum of the Wang--Sitters rounding scheme\\
for graph balancing}
\author{Adam Y. Shavit\\
\small Hunter College and the Graduate Center, CUNY\\
\small \texttt{as1127@hunter.cuny.edu} \quad ORCID 0009-0008-1235-0995}
\date{\today}

\begin{document}
\maketitle

\begin{abstract}
Wang and Sitters' $11/6$-approximation for graph balancing is not one algorithm
but a set of permitted executions: Step~1 may return any feasible solution of
the relaxation and Step~3 any of the many ways to match the remaining jobs into
the slots the rounding opens. We determine exactly what that latitude permits:
ratios arbitrarily close to $11/6$, and none reaching it, so $11/6$ is the least
constant that bounds every permitted run, and no run attains it. We then determine the worst-case guarantee as a function of the big-job
threshold $\beta$, measured against the optimum itself. On Wang and Sitters' own range $\tfrac12 < \beta < 1$ the
guarantee is \emph{exactly}
$\max\{\tfrac32 + \tfrac\beta2,\ \tfrac52 - \beta\}$. We then extend the same
eligibility rule to $0 < \beta \le \tfrac12$ --- outside the range they state,
and where a big job's two shares can both reach the threshold, so Step~2
acquires a third choice --- and determine the guarantee there as well:
\emph{exactly} $\tfrac32 + (1-\beta)\lfloor 1/\beta\rfloor$, hence unbounded as
$\beta$ falls. The worst-case ratio is therefore known at every threshold in
$(0,1)$, and attained at none. Consequently $2/3$ is the unique optimal
threshold, and the guarantee jumps at one half rather than degrading smoothly. A companion note \cite{Paper2} asks what does
\emph{not} fix the constant.
\end{abstract}

\section{Introduction}

An approximation algorithm is usually a function: instance in, schedule out. The
algorithm this note is about is not. Wang and Sitters' three steps are ``solve
the relaxation'', ``assign the big jobs above a threshold'' and ``round the rest
by a Shmoys--Tardos slot matching'', and two of the three do not determine their
own output. At the least feasible threshold the relaxation's feasible region is
in general not a single point, so two solvers may return different solutions;
and the slot rounding takes \emph{a} perfect matching of jobs to slots, of which
there are typically many. Their $11/6$ bound holds for every such run, which is
what makes it correct. It does not say what the worst run is, and they exhibit
no instance attaining it. This note answers that question, and the answer turns
out to be about the relaxation and the rounding rather than about their
algorithm: the worst run permitted by the specification approaches $11/6$
against the optimum and no run attains it. The complementary question --- what
the \emph{best} run buys --- belongs to the companion note \cite{Paper2}, which
answers it with the same constant: $11/6$ is the least constant for which
\emph{some} valid matching is always within that factor of the threshold. Optimizing the matching step alone therefore cannot bring the worst-case
constant to $7/4$; that half of the conclusion is established in the companion
note.

\paragraph{Computational status.}
Steps 1--3 as stated run in polynomial time: the linear program is polynomial,
Step~2 is a scan, and a valid matching is found by augmenting paths.
Section~\ref{sec:below} extends the eligibility rule to $\beta \le \tfrac12$,
outside the range Wang and Sitters give and where Step~2 acquires a choice; that
extension is introduced here. The companion
paper \cite{Paper2} instead studies a modified Step~3 that returns a
makespan-\emph{minimizing} valid matching; we do not know whether that selection
can be made in polynomial time; it is implemented there by exhaustive
enumeration, and no polynomial-time implementation is claimed for it.

\begin{figure}[htbp]\centering
\includegraphics[width=0.98\textwidth]{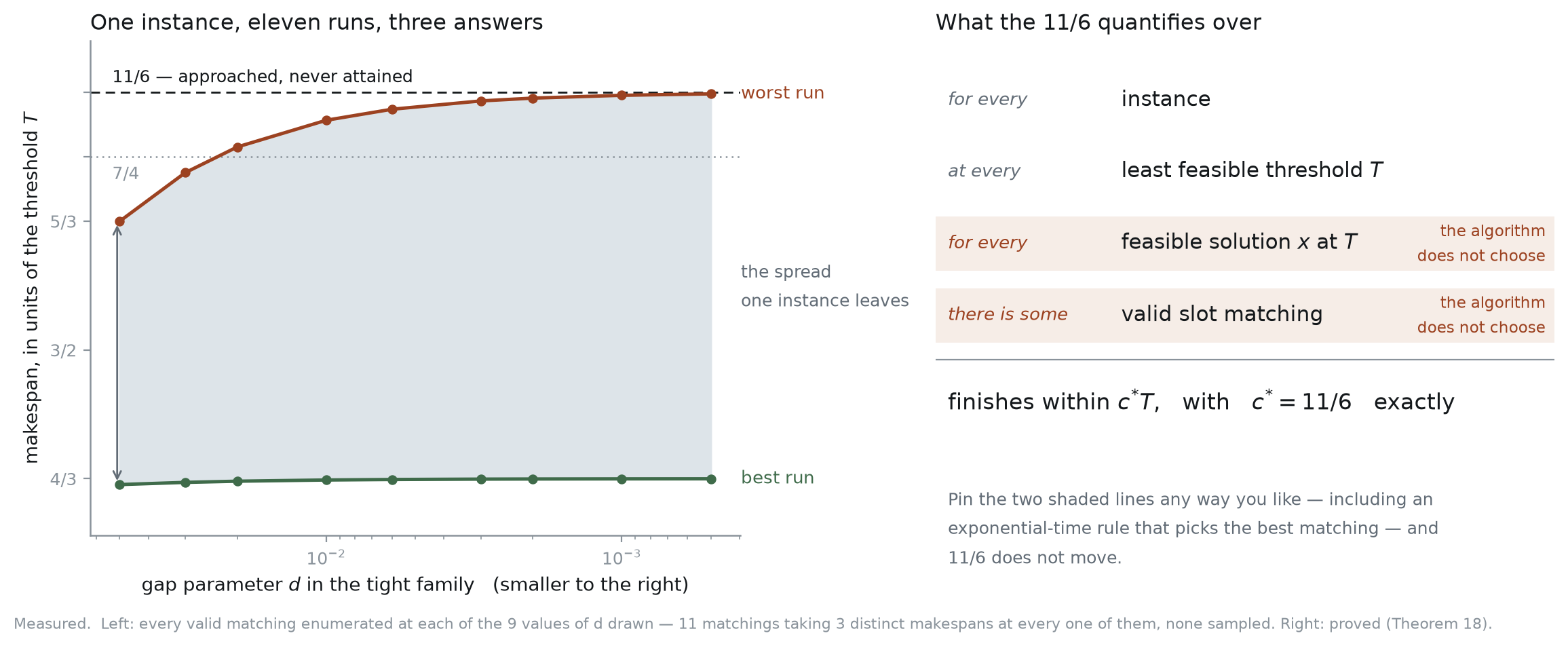}
\caption{What the freedom is, and what the constant quantifies over.
\textbf{Left:} the six-job family of Theorem~\ref{thm:tight} at nine values of
the gap $d$. At each one the slot structure is built and \emph{every} valid
matching is enumerated rather than sampled: eleven of them, taking three
distinct makespans. The band is the whole of what the scheme may return on that
one instance, at one threshold, from one feasible relaxation solution --- its
floor tending to $4/3$ and its ceiling climbing toward $11/6$, which
Proposition~\ref{prop:ceiling} shows is never reached. Both ends are limits in
$\delta$: at $\delta = 10^{-3}$ the floor is $8015/6012 = 1.33317$, which
approaches $\tfrac43$ from below as $\delta \downarrow 0$. The band is drawn
against the relaxation target $T$, on a family whose optimum is
$\approx \tfrac43 T$; the three-job family of Theorem~\ref{thm:ratio}, which
carries the same picture against the \emph{optimum}, has eight valid matchings
and a floor at $1 = \OPT$ (Measurement~\ref{meas:eightmatchings}).
\textbf{Right:} the quantifier structure of the threshold-relative constant, proved in the companion note \cite{Paper2}. The two
shaded lines are the choices the scheme does not make, and pinning them ---
including by a rule that takes the best matching --- leaves
$c^{*} = 11/6$ where it is. \textbf{Status: the band is measured}, by
exhaustive enumeration at the nine values of $d$ drawn and at no others; the
constants it approaches, and the right-hand statement, are proved.}
\label{fig:outputisaset}
\end{figure}

In \emph{graph balancing} each job may be assigned to at most two
machines and carries the same size on both. An instance is therefore a
weighted multigraph, a schedule is an orientation of its edges, and the
objective is to minimize the maximum weighted in-degree. Ebenlendr,
Kr\v{c}\'al and Sgall introduced the problem \cite{EKS08} and gave a
polynomial-time $1.75$-approximation \cite{EKS14}; to our knowledge that
remains the best known polynomial-time ratio. Unless P${}={}$NP no
polynomial-time algorithm has ratio strictly smaller than $3/2$
\cite{AJMOZ11}; the lower bound was first presented in \cite{EKS08}. Jansen and Rohwedder \cite[Thm.~1]{JR19} proved
that the configuration linear program has integrality gap at most
$1.749$ for graph balancing. Their argument is a local search not
known to terminate in polynomial time; in their own words it ``does not
give a polynomial time approximation algorithm,'' though it does let the
optimum be estimated within $1.749 + \epsilon$ in polynomial time. The
best \emph{approximation ratio} for graph balancing therefore still
stands at $1.75$, and the present note does not improve it. Graph
balancing is a special case of the restricted assignment problem, and
for that problem the same authors give a quasi-polynomial-time
approximation \cite{JR17ipco} and a configuration linear-program gap
bound of $11/6$ \cite[Theorem~1.1]{JR17soda}, tabulated at
\cite[Table~1]{JR19}; both appear in their extended joint version
\cite[\S4 and Theorem~3.4]{JR20}, which is a different paper from
\cite{JR17soda} despite the shared authors and year.

Wang and Sitters \cite{WS16} observed that a much simpler algorithm
already comes close. Their Section~3 runs to little more than a page:
find a feasible solution of the linear program of \cite{EKS14}; assign
each \emph{big} job (size above half the target) whose fractional share
on some machine reaches $\beta = 2/3$ to that machine; hand the rest to
the Shmoys--Tardos rounding \cite{ST93}. They prove a ratio of
$11/6 \approx 1.833$ and remark that the algorithm and analysis are
``very easy.'' They exhibit no instance showing the bound is attained.
The ratio is worse than \cite{EKS14}'s, and \cite{JR19}, assessing the
algorithm, records its advantage as simplicity.

That analysis bounds every run of what we will call the \emph{Wang--Sitters
rounding scheme}, and the scheme has more than one run. Step~1 asks for \emph{a} feasible relaxation solution and
Step~3 for \emph{any} valid slot matching: each has many, and
nothing in the scheme chooses between them. The scheme is therefore not a
function of its input, and its makespan on a fixed instance is a set of numbers
rather than one of them (Figure~\ref{fig:outputisaset}). This note exercises
that latitude and measures what it costs.

\paragraph{Contributions.}
\begin{enumerate}
\item \textbf{The approximation ratio --- against the optimum --- is tight, and
$11/6$ is a supremum that is never attained} (Section~\ref{sec:tight}). Theorem~\ref{thm:ratio} gives a
three-job, three-machine family with $\OPT = T = 1$ on which a feasible
relaxation solution and a valid matching produce makespan $11/6 - \delta$.
Proposition~\ref{prop:ceiling} proves the matching converse: for every
instance, every feasible relaxation solution and every valid matching, the
makespan is \emph{strictly} below $\tfrac{11}{6}T$. Together these fix
the supremum at $11/6$ and show it is not attained.

\item \textbf{A weaker, relaxation-relative family}
(Section~\ref{sec:lprel}). Theorem~\ref{thm:tight} gives a six-job family
whose relaxation solution is forced up to $O(\delta)$, and on which the
algorithm comes within $O(\delta)$ of $\tfrac{11}{6}T$ without reaching
it, by Proposition~\ref{prop:ceiling}. Its true optimum is
$\tfrac43 + O(\delta)$, so the ratio it forces against the optimum tends to
$\tfrac{11}{8} = 1.375$ and is below it at every positive $\delta$. With the orientation costs deleted, this weaker
phenomenon is already present in both gap instances of Schwartz and
Yeheskel \cite[Lemma~12]{SY21}, which is where the priority for it lies
(Section~\ref{sec:sy}). The six-job family is also not
$\beta$-independent: for $\beta \le 2/3 - \delta$ the big job is assigned
in Step~2 and the ratio collapses.

\item \textbf{The threshold, over its whole range} (Section~\ref{sec:sy} and
Section~\ref{sec:below}). Proposition~\ref{prop:beta} bounds the guarantee at
threshold $\beta$ from below by $g(\beta) = \max\{\tfrac32 + \tfrac\beta2,\
\tfrac52 - \beta\}$ on $(\tfrac12, 1)$; the matching ceiling is \cite{WS16}'s own
$\beta$-dependent computation, restated with its cases and its strictness in
Proposition~\ref{prop:betaceil}, so the guarantee there is exactly $g(\beta)$
(Corollary~\ref{cor:betaexact}). The lower bound is what is new above one half. At or below one half the analysis changes
rather than degrades: every big job is seized in Step~2
(Lemma~\ref{lem:allseized}), a machine may be handed $\lfloor 1/\beta\rfloor$ of
such jobs, and Theorem~\ref{thm:belowfamily} constructs instances where
the least threshold at which the relaxation is feasible, written
$T_{\mathrm{LP}}$, already equals the optimum --- so the ratio
$\lfloor 1/\beta\rfloor + \tfrac12$ it reaches is a ratio
against the optimum rather than against an externally chosen feasible target;
Theorem~\ref{thm:belowceil} caps that at
$\tfrac32 + (1-\beta)\lfloor 1/\beta\rfloor$, and a nested family --- one in
which a single machine, the \emph{collector}, is handed every big job the
threshold allows and then has its remaining load budget spent on further slots
behind them --- meets that cap in the limit,
so the guarantee below one half is exactly the ceiling
(Theorem~\ref{thm:belowexact}), approached and never attained. Hence $\tfrac23$
minimizes over all of $(0,1)$ (Corollary~\ref{cor:wholerange}), and
Corollary~\ref{cor:optwholerange} puts the two branches together as a single
function of $\beta$, measured against the optimum: the worst-case ratio of the
scheme is known at every threshold and attained at none.

\item \textbf{What does \emph{not} fix the constant} (companion note
\cite{Paper2}). Optimising Step~3 perfectly --- a makespan-minimizing matching oracle that
returns a makespan-minimizing valid matching --- leaves the worst-case constant
against the optimum at exactly $11/6$, and restricting Step~1 to a
\emph{vertex} of the relaxation leaves it there too. Both $7/4$ statements about
the oracle are false. Those results, and the reduction and search work behind
them, are stated and proved in \cite{Paper2}; this note cites them and does
not repeat them.
\end{enumerate}

\paragraph{Status of the analytic and computational results.}
Statements labelled Theorem, Proposition, Lemma or Corollary are proved for the
parameter ranges they state. Statements labelled \emph{Measurement} report what
the deposited code returned at stated parameters, and serve as corroboration or
illustration. The threshold results hold throughout their intervals rather than
at sampled points: Proposition~\ref{prop:beta} is proved for every
$\beta \in (\tfrac12, 1)$, each branch carried by a family covering an interval
of thresholds, with Measurement~\ref{meas:sy} as a control on that algebra; and
the results below one half in Section~\ref{sec:below} are proved at every such
threshold, their arithmetic checked in exact rationals rather than floating
point (Measurement~\ref{meas:belowhalf}) because several steps turn on
equalities. Measurement~\ref{meas:ceilingoffbeta}, which counts random instances
breaking the $\tfrac{11}{6}$ ceiling below $\beta = \tfrac12$, is a sweep: each
named witness is certified by exact recomputation, while the counts around it
are a sample. Appendix~\ref{app:restatement} carries a \emph{restatement check} on selected
central results: a re-reading of each statement against its proof, to verify
that the quantifiers it states are the ones the proof establishes. The check is
supplementary rather than exhaustive.

\section{Preliminaries}\label{sec:prelim}

The scheme is run at a \emph{target} $T$ and asks whether the jobs can be
scheduled within it. Write $p_j$ for the size of job $j$ and $P_T(I)$ for the
set of feasible solutions of the relaxation below, on instance $I$ at target
$T$. The \emph{least feasible target} is
\[
T_{\mathrm{LP}}(I) \ =\ \min\,\{\, T > 0 \ :\ P_T(I) \neq \varnothing \,\} ,
\]
a minimum rather than an infimum. Two facts make it one.

\emph{Feasibility is monotone.} Raising the target only weakens constraints: the
load bound grows, the support condition frees variables as fewer jobs have
$p_j > T$, and the set of big jobs $\{j : p_j > T/2\}$, which the third
constraint counts, can only shrink. Those two sets change only at the finitely
many numbers $p_j$ and $2p_j$, and on each interval between consecutive ones
they are constant. Each such interval is \emph{closed at its left end}, because
$p_j > T$ and $p_j > T/2$ both fail at the breakpoint itself.

\emph{The infimum is attained.} Write $T^{*}$ for it and take
$T_n \downarrow T^{*}$ with $x^{(n)} \in P_{T_n}(I)$. The $x^{(n)}$ lie in the
compact cube $[0,1]^{E}$, so a subsequence converges to some $x^{*}$, and the
assignment equalities and the load inequalities pass to the limit. Only the
big-job constraint needs more, since it sums over a set that moves with the
target. For all large $n$ the interval $(T^{*}, T_n]$ contains no breakpoint, so
$\{j : p_j > T_n/2\} = \{j : p_j > T^{*}/2\}$: the constraint at $T^{*}$ is the
same constraint, not a stronger one. Hence $x^{*} \in P_{T^{*}}(I)$.

\textbf{Binary search on $T$ approximates that value and is how this note
computes it; it is not what defines it.} The distinction matters because the set
of big jobs --- and hence the relaxation itself --- changes when $T$ crosses a
breakpoint $2p_j$.

$T_{\mathrm{LP}}$ is at most the true optimum $\OPT$, because an integral
schedule of makespan $\OPT$ is itself a feasible fractional solution at $\OPT$. \textbf{The two need not be equal, and keeping them apart is what most of
this note is about.} Throughout, $T$ is
normalized to $1$.

The relaxation is that of \cite{EKS14}, also used by \cite{WS16}. The
variable $x_{ij} \in [0,1]$ is the fraction of job $j$ placed on machine $i$,
and is supported on the at most two machines job $j$ is allowed to use:
\[
\sum_i x_{ij} = 1 \ \ \forall j, \qquad
\sum_j x_{ij} p_j \le 1 \ \ \forall i, \qquad
\sum_{j : p_j > 1/2} x_{ij} \le 1 \ \ \forall i ,
\]
together with $x_{ij} = 0$ whenever $p_j > 1$. The third family --- at
most one big job's worth of fraction per machine --- is what separates
this relaxation from the assignment relaxation. It is \emph{not} the
configuration linear program, which is a strictly stronger relaxation with a
variable per machine and configuration, and nothing in this note is a claim about rounding
that one. The distinction matters twice over: the relaxation displayed above
has integrality gap exactly $7/4$, as recorded below, while the
bounds on the configuration linear program that this note cites for priority are
separate results about a different program. Job sizes are
\emph{positive}. That is standard for the problem, since an instance is a
weighted multigraph, but we state it because one step needs it:
Lemma~\ref{lem:chain}'s strictness --- that a machine's last slot carries
positive weighted mass --- fails on a machine whose
last slot carries only jobs of size zero, and that strictness is what
case~(S) of Proposition~\ref{prop:ceiling} uses to be strict rather than
weak. The relaxation's integrality gap is
exactly $7/4$: the upper bound is the algorithm of \cite{EKS14} and the
lower bound their three-path family, three long odd paths joining two
vertices with weights alternating $1$ and $1/2 - \varepsilon$ and a
dedicated load $1/4$ everywhere.

\paragraph{The scheme, and what an execution is.} All three steps below are
Wang and Sitters' \cite[\S3]{WS16}, not \cite{EKS14}'s; the two are different
algorithms with different ratios, and everything in this note is about theirs.
Call a job \emph{big} when its size exceeds $T/2$.

Because two of the three steps do not determine their own output, we fix the object of
study before stating anything about it. Write $P(I)$ for the set of feasible
solutions of the relaxation on instance $I$ at threshold $T$, and, for
$x \in P(I)$, write $\mathcal{M}(I, x)$ for the set of valid slot matchings
available in Step~3 after Step~2 has run on $x$. The \emph{execution space} of
the scheme on $I$ is
\[
  \mathcal{E}(I) \;=\; \{\,(x, M) \;:\; x \in P(I),\ M \in \mathcal{M}(I, x)\,\},
\]
and $\ALG(I, x, M)$ is the makespan the scheme produces on that execution. Two
coordinates suffice at the published threshold and at every $\beta > \tfrac12$;
at or below one half Step~2 acquires a choice of its own and the triple
$(x, A, M)$ replaces the pair, as Section~\ref{sec:below} sets out. The
quantity this note computes is
\[
  R_{\mathrm{WS}} \;=\; \sup_{I}\ \sup_{(x,M)\,\in\,\mathcal{E}_{T_{\mathrm{LP}}(I)}(I)}
  \frac{\ALG(I, x, M)}{\OPT(I)} ,
\]
where the subscript records that the execution space is evaluated at the least
feasible target, which is where the scheme is run: the numerator is a makespan
at that target and the denominator is the true optimum. $\mathcal{E}(I)$ is written without the subscript below only where the
target is fixed by the surrounding text.
We distinguish three objects throughout, and refer to them by these names.

\smallskip
\centerline{\begin{tabular}{lll}
\hline
Name & Step~1 returns & Step~3 returns\\
\hline
the \emph{scheme} (this note's object) & any $x \in P(I)$ & any $M \in \mathcal{M}(I,x)$\\
an \emph{implementation} & one $x$, solver-determined & one $M$, routine-determined\\
the \emph{oracle} & any $x \in P(I)$ & a makespan-minimizing $M$\\
\hline
\end{tabular}}
\smallskip

\noindent
Every theorem below is about the first or the third. Nothing here is a theorem
about the second: an implementation realizes one execution per instance, and
which one depends on a solver and a matching routine that the published
specification does not fix. Where this note reports what a solver actually
returned, that is evidence about one implementation and is labelled as a
measurement, never as a bound.

\paragraph{The steps.}
\begin{description}
\item[\textbf{Step 1.}] Return \emph{a} feasible solution $x$ of the
relaxation above.
\item[\textbf{Step 2.}] Assign big job $j$ to machine $i$ whenever
$x_{ij} \ge \beta$, with $\beta = 2/3$. Such a job is scheduled on $i$ and
takes no further part: delete its fractions from $x$ on both of its machines.
No machine receives two, because the relaxation's third constraint gives each
machine at most one big job's worth of fraction and $\beta > 1/2$ --- a step
that is exactly as strong as its hypothesis, and Section~\ref{sec:below} is
about what the scheme does once the hypothesis goes.
\item[\textbf{Step 3.}] Build the \emph{slot structure} of \cite{ST93},
defined immediately below, from what is left of $x$, and return \emph{any}
valid matching of the surviving jobs to slots (Figure~\ref{fig:freedoms}).
\end{description}
Steps~1 and~3 each return \emph{a} feasible object rather than a uniquely
specified one. Wang and Sitters prove that every run finishes within
$\tfrac{11}{6}T$.

\begin{figure}[htbp]\centering
\includegraphics[width=0.95\textwidth]{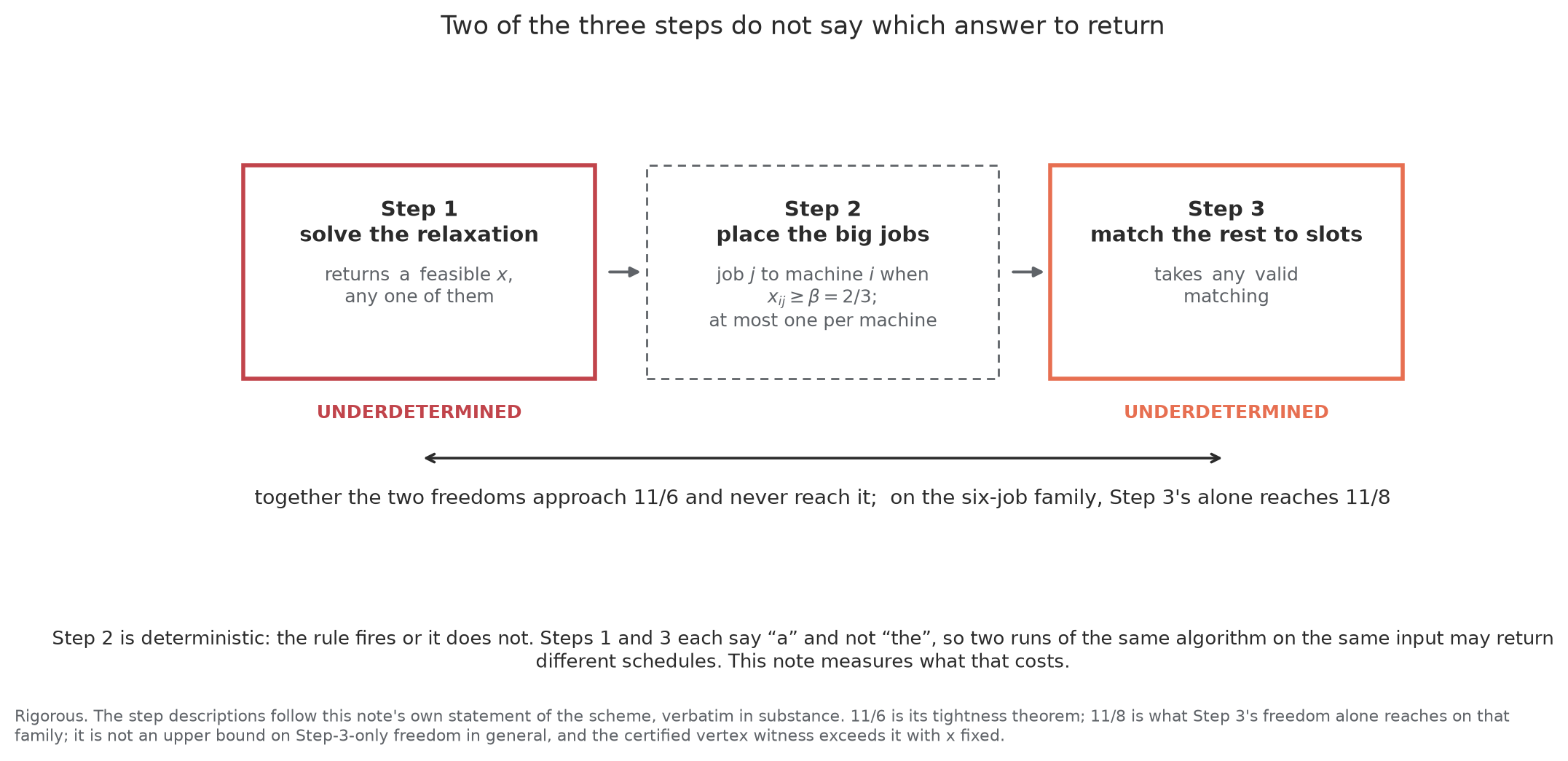}
\caption{The scheme in three steps, with the two that do not determine their own
answer marked. At the published threshold $\beta = 2/3$, Step~2 is a rule that
either fires or does not; below one half it acquires a choice of its own, which
Section~\ref{sec:below} treats.
Steps~1 and~3 each return \emph{a} feasible object rather than \emph{the}
one, so two runs on the same input may return different schedules. The
note quantifies that cost: $11/6$ when both steps are free
(Theorem~\ref{thm:ratio}), and a ratio tending to $11/8$ on the family that
isolates Step~3's latitude (Theorem~\ref{thm:tight}).}
\label{fig:freedoms}
\end{figure}

The slot structure is built as follows: on each machine,
order its fractional jobs by nonincreasing size and cut the fractional
mass into unit slots, the last one shorter (its length is fixed in the
per-machine notation below); a job is adjacent
to every slot its own fraction touches. Slots and job fractions are
half-open intervals $[\,\cdot\,,\cdot\,)$, and \emph{touches} means
nonempty intersection; a job whose fraction ends exactly where a slot
begins is not adjacent to that slot (Figure~\ref{fig:slots}). Proposition~\ref{prop:ceiling} below is false
under the closed reading, so we fix the convention here.

\begin{figure}[htbp]\centering
\includegraphics[width=0.95\textwidth]{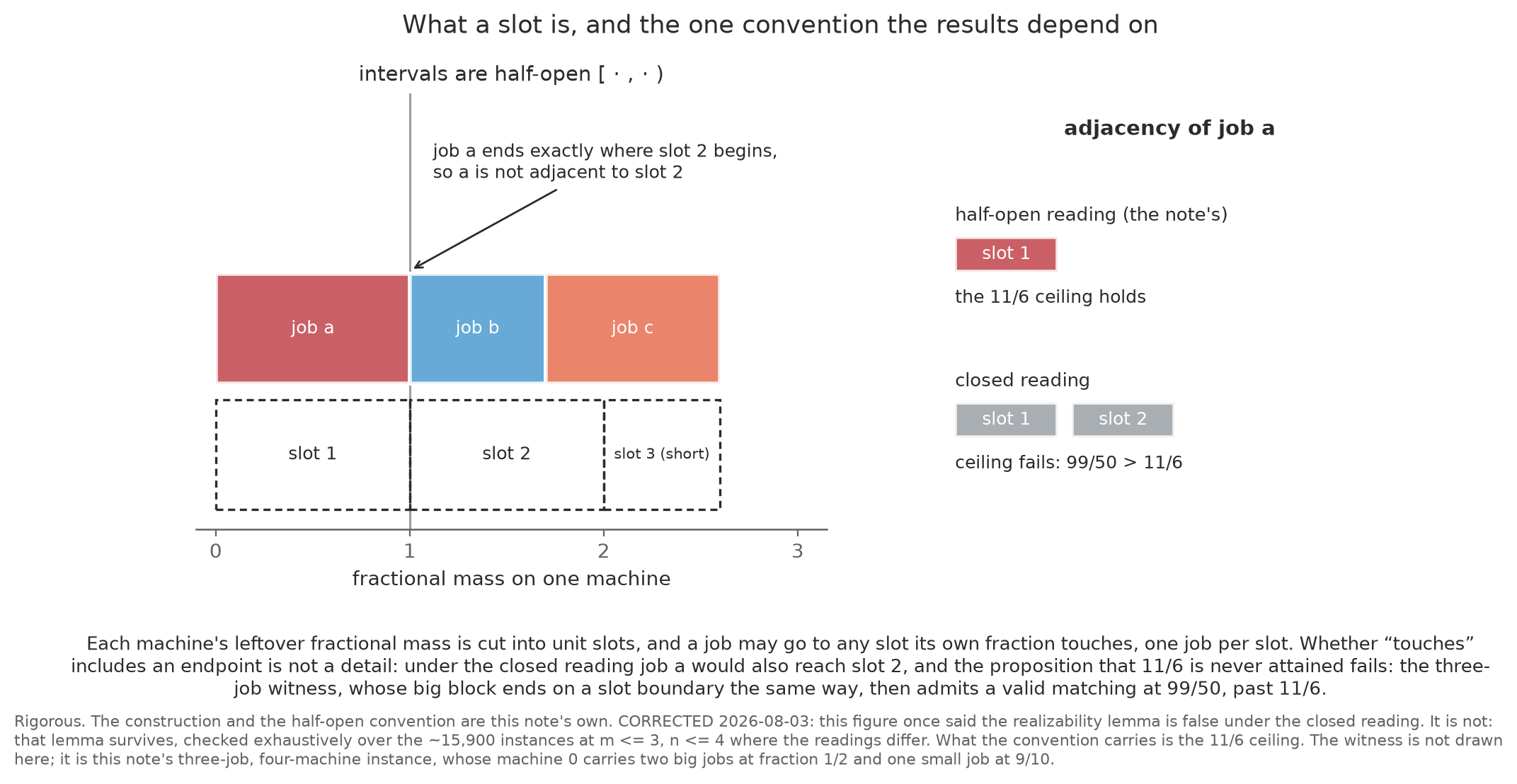}
\caption{The slot structure, and the half-open convention Section~\ref{sec:prelim} fixes. Left: a
machine's leftover fractional mass, cut into unit slots, with three job
fractions laid across it; job~$a$ ends exactly where slot~2 begins, so
under the half-open reading it is not adjacent to slot~2. Right: the two
readings disagree about that single edge, and the disagreement is not
cosmetic: Proposition~\ref{prop:ceiling} is false under the closed reading,
though an exhaustive exact-rational check over the small instances where the
two readings can differ ($m \le 3$, $n \le 4$, fraction denominators at most
$4$) finds no counterexample to the realizability lemma of \cite{Paper2} in that
finite test class.}
\label{fig:slots}
\end{figure}

\emph{It is also \cite{ST93}'s own convention, not a choice of ours, and the
strictness is visible in their construction.} Building the bipartite graph
$B(x)$ they set $k_i = \lceil \sum_j x_{ij} \rceil$, order the fractions by
nonincreasing $p_{ij}$, fill slot $v_{i1}$ to a total of exactly one and
continue with the smaller jobs, so that the \emph{last} slot is the short one;
and at each boundary they add the edge $(v_{i,s+1}, w_{j_s})$ carrying a job
into the next slot only ``if $\sum_{j \le j_s} x_{ij} > s$'' \cite[p.~464]{ST93},
a \emph{strict} inequality. A job whose fraction ends exactly at a slot
boundary therefore receives no edge to the following slot, which is precisely
the half-open reading. Two consequences matter here: the closed reading is not an alternative rendering of
\cite{ST93} but a departure from it, and $\lceil\,\cdot\,\rceil$ gives a
machine of fractional mass exactly $2$ exactly two slots, which is the count
the pigeonhole argument of \cite{Paper2} runs on. A \emph{valid matching}
assigns each remaining job to one adjacent slot, at most one job per slot;
one exists by Hall's theorem.
The half-open convention matters for one result below and not for another.
It is \emph{not} needed for the realizability lemma of \cite{Paper2} \emph{as measured}, though we do not prove that: the
lemma's proof uses positive-length contact between a job's interval and a slot,
which is what half-openness supplies and the closed reading does not, so the
proof as written does not transfer. What supports the claim is
an exhaustive exact-rational check over
the instances where the two readings can differ at all---those with a job
interval ending exactly on a slot boundary---found every job--slot edge
realizable in both, roughly $15{,}900$ such instances at $m \le 3$,
$n \le 4$ (\texttt{p29amb\_convention\_probes/}). The convention is needed for
Proposition~\ref{prop:ceiling}. Under the closed reading that
proposition is false, and the case that breaks it is small enough to state
in full: three jobs on four machines at $T = 1$, two big jobs of size
$99/100$ split $1/2$--$1/2$ from machine~0 onto private partners, and one
job of size $1/200$ taking $9/10$ of its share on machine~0. Machine~0's
big-job share is then exactly $1$, so its big block ends precisely on the
first slot boundary; no share reaches $\beta = 2/3$, so Step~2 assigns nothing. Half-open there are six valid
matchings, the worst finishing at $199/200$. The closed reading adds two
edges, and it is the second of them that matters: because machine~0's big
block ends exactly on the slot boundary, the second big job becomes adjacent
to slot~2. That is precisely the step case~(N1) of
Proposition~\ref{prop:ceiling} takes when it passes from \emph{all big fraction
lies in slot~1} to $s_z \le \tfrac12$ for $z \ge 2$. The fraction does still lie
in slot~1 under either reading; what the closed reading adds is that a job whose
fraction \emph{ends} at the boundary counts as adjacent to slot~2 anyway, and
that job has size $\tfrac{99}{100}$. There are then ten valid matchings, and the one that puts both big
jobs on machine~0 finishes at $99/50 = 1.98 > 11/6$. The small job's own extra
edge --- to slot~1 --- changes nothing on its own: adding it alone leaves
every matching at $199/200$. So the half-open reading is what makes the ceiling
hold. It is drawn in Figure~\ref{fig:halfopen}.

\begin{figure}[htbp]\centering
\includegraphics[width=0.95\textwidth]{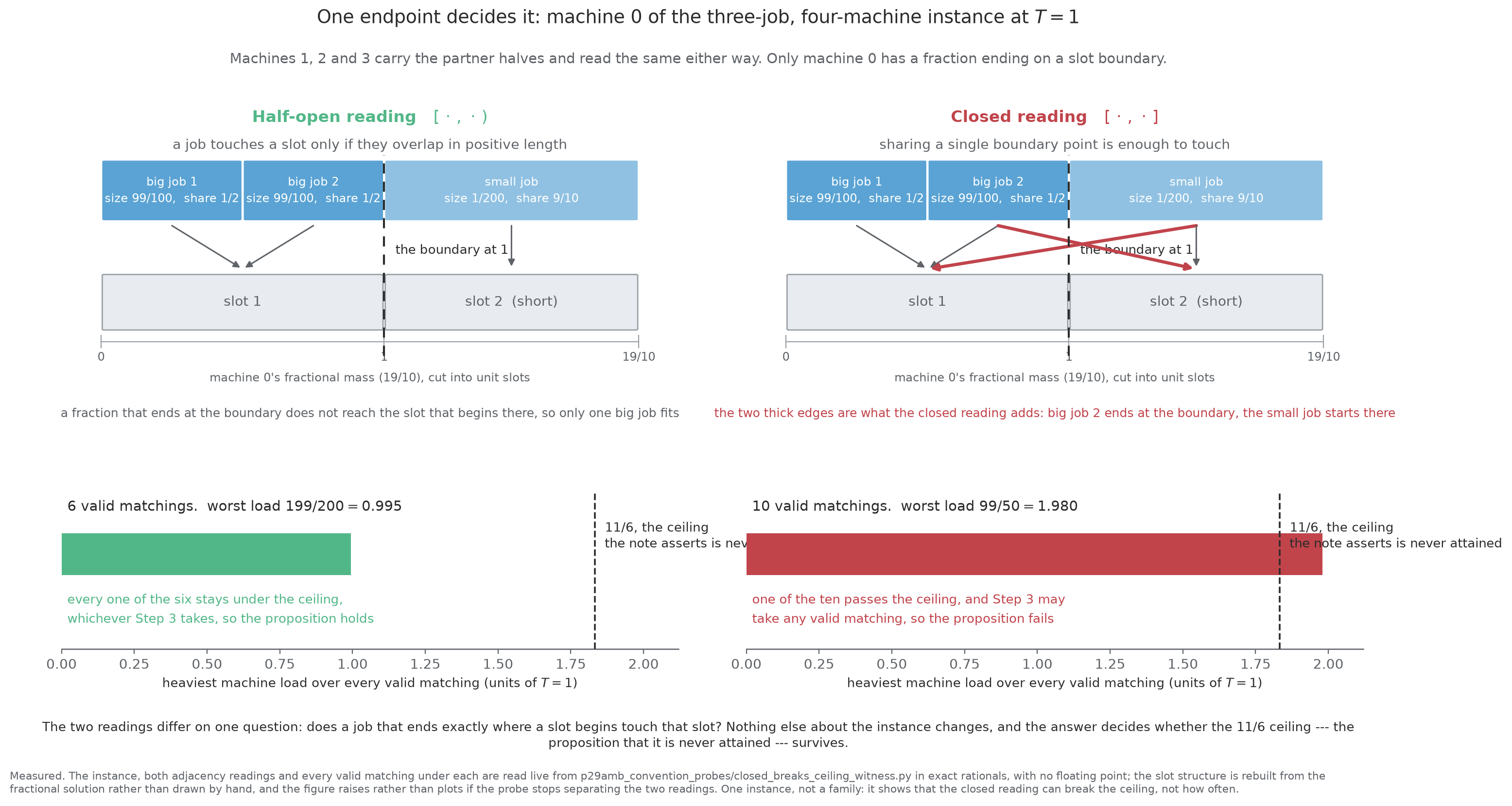}
\caption{The three-job counterexample of Section~\ref{sec:prelim}, drawn under
both readings, showing why the ceiling of Proposition~\ref{prop:ceiling} needs the
half-open convention.
Machine~$0$'s two big fractions end exactly at the first slot boundary. Under
the half-open reading that edge is absent (shown as a crossed ghost, so both
panels compare the same geometry); under the closed reading it is present,
the second big job becomes adjacent to slot~2, and both big jobs can be placed on
machine~$0$. Six valid matchings with worst makespan $199/200$ become ten
with worst $99/50 > \tfrac{11}{6}$. \textbf{Status: measured} by exhaustive
enumeration over this one instance, which shows the closed reading
\emph{can} break the ceiling, not how often.}
\label{fig:halfopen}
\end{figure}

\paragraph{Per-machine notation.}
Fix a machine $i$ and a valid matching. Write $b_i$ for the size of the
job assigned to $i$ in Step~2 (zero if none); $L'_i$ for the relaxation load
on $i$ of the jobs that entered the slot structure. Then
$L'_i \le 1 - \beta b_i$: machine $i$'s load constraint gives
$\sum_j x_{ij} p_j \le 1$, and the job Step~2 took contributed at least
$\beta b_i$ to that sum, since its share reached $\beta$. Write
$M_i$ for the total fraction on $i$ of those same jobs and $K_i = \lceil M_i \rceil$ for
the number of slots; $s_z$ for the largest size adjacent to slot $z$; and
$w_z$ for the size-weighted mass of slot $z$, so that
$\sum_{z \le K_i} w_z = L'_i$. Recall that a job is \emph{big} when its size
exceeds $T/2$, which is $1/2$ here.

\begin{lemma}[chain bound]\label{lem:chain}
For every machine $i$ carrying at least one slot job and every valid
matching:
\begin{enumerate}
\item[(i)] $w_{K_i} > 0$, and $s_z \le w_{z-1}$ for $2 \le z \le K_i$,
hence $\sum_{z=2}^{K_i} s_z \le L'_i - w_{K_i}$;
\item[(ii)] the load of $i$ is at most $b_i + s_1 + L'_i - w_{K_i}$.
\end{enumerate}
\end{lemma}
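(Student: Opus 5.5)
The plan is the standard Shmoys--Tardos chain argument, adapted to the half-open slot convention and to positive job sizes. I would first fix machine $i$ and lay its surviving fractions out in nonincreasing order of size on $[0, M_i)$, with slot $z$ occupying $[z-1, \min(z, M_i))$. Then $w_z = \int_{z-1}^{\min(z,M_i)} p(t)\,dt$, where $p(t)$ is the size of the job whose fraction covers position $t$; the function $p$ is nonincreasing and strictly positive on $[0, M_i)$.

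For the positivity claim in (i), note that $K_i = \lceil M_i\rceil$ and $M_i > 0$, because $i$ carries a slot job. So the last slot $[K_i - 1, M_i)$ is a nonempty interval. On it $p(t) > 0$ throughout, since sizes are positive, and hence $w_{K_i} > 0$. This is exactly where positivity of the $p_j$ enters, as the preliminaries warn.

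For the chain inequality, let $2 \le z \le K_i$ and take any job $j$ adjacent to slot $z$. Under the half-open convention, $j$'s interval has nonempty intersection with $[z-1, \cdot\,)$. Since its interval also lies inside $[0, M_i)$, it contains some point $t \ge z-1$. By monotonicity, $p_j = p(t) \le p(t')$ for every $t' \in [z-2, z-1)$. Averaging over that unit-length interval, which is exactly slot $z-1$, gives $p_j \le w_{z-1}$. Taking the maximum over adjacent jobs yields $s_z \le w_{z-1}$.

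The point to watch is exactly the boundary case. A job whose fraction ends at $z-1$ is not adjacent to slot $z$, and this is precisely what allows $p_j$ to be compared to the whole of slot $z-1$ from below rather than to a job partly outside it. I would state this explicitly to make clear where the half-open reading is used. Summing the chain inequality over $z = 2, \dots, K_i$ gives
\[
\sum_{z=2}^{K_i} s_z \le \sum_{z=1}^{K_i-1} w_z = L'_i - w_{K_i},
\]
where the equality uses $\sum_{z \le K_i} w_z = L'_i$.

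For (ii), the load of $i$ consists of three parts: the Step~2 job of size $b_i$ (or $0$ if there is none), plus at most one slot job per slot, because a valid matching uses each slot at most once. The job matched to slot $z$ is adjacent to it, so it has size at most $s_z$. The load is therefore at most $b_i + s_1 + \sum_{z\ge 2} s_z$, and (i) bounds the last sum. I do not expect a genuine obstacle anywhere in this argument. The only delicate step is the adjacency-at-boundary argument in (i), which is where the convention must be invoked carefully.
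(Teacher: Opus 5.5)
Your proof is correct and is essentially the paper's argument. The nonempty last slot of positive-size jobs gives $w_{K_i} > 0$, and the nonincreasing order plus averaging over the full unit slot $z-1$ gives $s_z \le w_{z-1}$: the paper argues via the first job touching slot $z$, you via an arbitrary adjacent job and a point $t \ge z-1$ of its interval, which comes to the same thing. Part (ii) then follows from the one-job-per-slot cap, as in the paper.

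One correction to your commentary: the chain inequality does not actually depend on the half-open reading. A job whose fraction ends exactly at $z-1$ occupies the tail of slot $z-1$, so every position of that slot is still covered by that job or by a larger one. The convention matters later, in case (N1) of Proposition~\ref{prop:ceiling}: there it keeps a big job ending on the slot-$1$ boundary from being adjacent to slot~$2$, which is what secures $s_z \le \tfrac12$ for $z \ge 2$.
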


\begin{proof}
Both inequalities are \cite{WS16}'s. Their equation~(1) reads: ``Since jobs
are ordered by $\prec_i$, the job matched to $v^i_z$ has processing time at
most $p^i_z \le L^i_{z-1}$ for all $z \ge 2$'', which is the inequality
$s_z \le w_{z-1}$ of~(i); the display immediately following it bounds the
total load of machine $i$ by $p^i_1 + \sum_{z=1}^{y_i-1} L^i_z$, where their
$y_i$ is our slot count $K_i$ and their $L^i_z$ our $w_z$, which
is~(ii) without the final term. What is ours is only the strictness
$w_{K_i} > 0$, and the consequent subtraction of $w_{K_i}$, which is what
later gives the strict inequality of Proposition~\ref{prop:ceiling}. We give
the argument in the present notation for self-containedness.

Slot $K_i$ carries fractional mass $M_i - (K_i - 1) > 0$ of jobs of
positive size, so $w_{K_i} > 0$; this is the one place the positivity fixed
in Section~\ref{sec:prelim} is used. For $2 \le z \le K_i$, slot $z-1$ is
full, i.e.\ has length exactly $1$. Let $j^\star$ be the first job in
the machine's nonincreasing-size order that touches slot $z$, so
$p_{j^\star} = s_z$. Every job occupying mass in slot $z-1$ precedes
$j^\star$ in that order or equals it, hence has size at least $s_z$;
integrating over the unit length of slot $z-1$ gives
$w_{z-1} \ge s_z$. Summing over $z = 2, \dots, K_i$ and using
$\sum_{z \le K_i - 1} w_z = L'_i - w_{K_i}$ gives (i). For (ii), the job
matched to slot $z$ has size at most $s_z$, and at most one job is
matched per slot, so the load is at most
$b_i + \sum_{z \le K_i} s_z \le b_i + s_1 + L'_i - w_{K_i}$.
\end{proof}

\section{The approximation ratio is tight at \texorpdfstring{$11/6$}{11/6}}\label{sec:tight}

\begin{theorem}\label{thm:ratio}
For every $\delta \in (0, 1/6)$ there is an instance on three machines
with three jobs, whose optimum is $\OPT = 1$ and whose relaxation is
feasible at $T = 1$ and at no smaller $T$, together with a feasible
solution of the relaxation at $T = 1$ and a valid slot matching, such
that Steps 2--3 with $\beta = 2/3$ produce makespan
$\tfrac{11}{6} - \delta$. Hence
\[
\sup \ \frac{\ALG}{\OPT} \ \ge \ \frac{11}{6},
\]
the supremum being over instances, over feasible solutions returned by
Step~1, and over valid matchings taken by Step~3.
\end{theorem}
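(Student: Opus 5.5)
The plan is a direct construction: three jobs on three machines $M_1, M_2, M_3$, with one machine $M_1$ made to realize every term of the chain bound of Lemma~\ref{lem:chain}(ii) at once. That bound reads $b_1 + s_1 + L'_1 - w_{K_1}$, with $L'_1 \le 1 - \beta b_1$. With $b_1 = 1$ and $\beta = 2/3$ it gives $1 + s_1 + \tfrac13$, so I aim for $s_1 = \tfrac12$ and a second slot carrying a job of size just under $\tfrac13$.

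\textbf{Which case to build.} First I rule out the case where no job is seized on $M_1$ in Step~2. Slots after the first can only be reached by non-big jobs: two big jobs have combined share at most $1$ by the third constraint, so they never spill past slot~1. That case therefore reaches only about $\tfrac32$. The extremal case is therefore a big job seized on $M_1$.

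\textbf{The instance.} I would take the following three jobs.
\begin{itemize}
\item Job $A$ has size $1$, is allowed on $M_1, M_2$, and has $x_{1A} = 2/3$, $x_{2A} = 1/3$. It is seized on $M_1$ in Step~2, since $2/3 \ge \beta$.
\item Job $B$ has size exactly $\tfrac12$, so it is \emph{not} big and Step~2 ignores it whatever its shares. It is allowed on $M_1, M_3$, with a tiny share $x_{1B} = \varepsilon$ and $x_{3B} = 1-\varepsilon$.
\item Job $C$ is small, of size $c$, allowed on $M_1, M_2$, with $x_{1C} = 1 - \varepsilon/2$ and the rest on $M_2$.
\end{itemize}
The size $c$ is chosen so that $M_1$'s load is exactly $1$, that is $\tfrac23 + \tfrac{\varepsilon}{2} + c(1-\tfrac{\varepsilon}{2}) = 1$, giving $c = (\tfrac13 - \tfrac\varepsilon2)/(1-\tfrac\varepsilon2)$. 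As $\varepsilon$ runs over $(0,\tfrac23)$, $c$ decreases continuously from $\tfrac13$ toward $0$. So for each $\delta \in (0,\tfrac16)$ there is an $\varepsilon$ with $c = \tfrac13 - \delta$.

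\textbf{Checking the instance.} The routine verifications are these.
\begin{itemize}
\item Feasibility at $T=1$: loads are at most $1$ on all three machines, and each machine carries at most one unit of big fraction (only $A$ is big).
\item $T_{\mathrm{LP}} = 1$, because the support condition forbids $x_{iA} > 0$ when $p_A > T$, so every $T<1$ is infeasible.
\item $\OPT = 1$, via $A \to M_2$, $B \to M_3$, $C \to M_1$, whose loads are $1, \tfrac12, c$.
\end{itemize}

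\textbf{The slot structure on $M_1$.} After Step~2 removes $A$, machine $M_1$ holds $B$ (mass $\varepsilon$) and then $C$ (mass $1-\varepsilon/2$) in nonincreasing-size order. The total mass is $1 + \varepsilon/2$, so $K_1 = 2$. Slot~1 is touched by $B$ and $C$, and slot~2 by $C$ only; $C$'s interval crosses the boundary with positive length, so the half-open convention grants that edge. I take the valid matching that puts $B$ in slot~1 and $C$ in slot~2 of $M_1$. The other machines' slots may stay empty, since a valid matching needs only to cover the jobs, and $B$, $C$ are each adjacent to a distinct slot of $M_1$. Machine $M_1$ then finishes at $1 + \tfrac12 + c = \tfrac{11}{6} - \delta$, while $M_2$ and $M_3$ carry nothing. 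The displayed supremum bound follows by letting $\delta \downarrow 0$.

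\textbf{Main obstacle.} The main obstacle is the trap that B must not be big. If $B$ had size above $\tfrac12$ with a small share on $M_1$, its share on $M_3$ would exceed $\tfrac23$ and Step~2 would seize it there. If it had a large share on $M_1$, the big-job constraint would collide with $A$'s $2/3$. Size exactly $\tfrac12$, which the strict definition of ``big'' leaves non-big, escapes both. I would double-check this boundary choice and the adjacency edges exhaustively, enumerating all valid matchings, before writing the final version.
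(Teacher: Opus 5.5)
Your construction is correct, but it reaches $\tfrac{11}{6}$ through a different case of Proposition~\ref{prop:ceiling} than the paper does.

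\textbf{The two routes.} The paper's proof lives in case (N1). Its big job $q$ has shares $\tfrac13+\delta$ and $\tfrac23-\delta$, both strictly below $\beta$, so Step~2 fires nowhere. The jobs $q,h,r$, of sizes $1,\tfrac12,\tfrac13-\delta$, then fill three slots of one machine. You work in case (S) instead. There $A$ is seized at share exactly $\tfrac23$, and a rider of size exactly $\tfrac12$ plus a filler occupy two slots behind it. In effect this is the $m=1$, $\alpha=\beta=\tfrac23$ instance of Lemma~\ref{lem:nested}. It fits on three machines because $p_A=1$ pins $T_{\mathrm{LP}}$ by itself and the filler shares $A$'s partner machine. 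All of your checks go through, and $\varepsilon = 6\delta/(2+3\delta)$ gives $c=\tfrac13-\delta$ explicitly.

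\textbf{What each route buys.} The paper's family has three advantages:
\begin{itemize}
\item It is insensitive to the Step~2 boundary convention. Under a strict test $x_{ij}>\beta$ your $A$ is not seized and your instance collapses to about $\tfrac43$. You can repair this by raising $A$'s share to $\tfrac23+\rho$ and shrinking $c$ to match.
\item It needs no parameter to solve for.
\item It keeps working for every $\beta > \tfrac23-\delta$.
\end{itemize}
Your family keeps working for every $\beta \le \tfrac23$ instead. It also shows that the supremum is approached through the seized-job case as well, so both branches of $g$ genuinely meet at $\tfrac23$.

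\textbf{A false claim in your motivation.} The proof never uses it, but it is wrong: the case with no job seized on $M_1$ does not stop near $\tfrac32$. You are right that slots $z\ge 2$ carry only non-big jobs. Slot~1, however, may still hold an \emph{unseized} big job of size up to $1$ whose two shares both lie in $(\tfrac13,\tfrac23)$. Case (N1) then permits load up to $2-x_q/2 \to \tfrac{11}{6}$, and that is exactly the case the paper's own proof realizes. Your bound of $\tfrac32$ covers case (N0) only.
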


\begin{proof}
Machines $\{0,1,2\}$. Jobs: $q$ of size $1$ allowed on $\{0,1\}$; $h$ of
size $1/2$ allowed on $\{0,2\}$; $r$ of size $1/3 - \delta$ allowed on
$\{0\}$ only.

\emph{The optimum.} The schedule $q \mapsto 1$, $h \mapsto 2$,
$r \mapsto 0$ has loads $1/3 - \delta$, $1$ and $1/2$ on machines $0$, $1$
and $2$, so $\OPT \le 1$;
and $\OPT \ge p_q = 1$. Hence $\OPT = 1$. Since $p_q = 1$ the relaxation
is infeasible for any $T < 1$, so the binary search returns $T = 1$.

\emph{A feasible relaxation solution.} Take
\[
x_{q0} = \tfrac13 + \delta, \quad x_{q1} = \tfrac23 - \delta, \quad
x_{h0} = \tfrac23, \quad x_{h2} = \tfrac13, \quad x_{r0} = 1 .
\]
Machine $0$'s load is
$(\tfrac13 + \delta) \cdot 1 + \tfrac23 \cdot \tfrac12 +
1 \cdot (\tfrac13 - \delta) = 1 \le T$; machine $1$'s is
$\tfrac23 - \delta$; machine $2$'s is $\tfrac16$. The only big job is
$q$, and $x_{q0}, x_{q1} \le 1$, so the big-job constraints hold. The
point is feasible.

\emph{Step 2 is dodged.} Both of $q$'s shares, $\tfrac13 + \delta$ and
$\tfrac23 - \delta$, are strictly below $\beta = 2/3$, so no job is
assigned in Step~2 and all three enter the slot structure.

\emph{The slot structure and a bad matching.} Machine $0$ carries
fractional mass
$(\tfrac13 + \delta) + \tfrac23 + 1 = 2 + \delta > 2$, hence three
slots. Sorted by nonincreasing size, $q$ occupies
$[0, \tfrac13 + \delta)$, $h$ occupies $[\tfrac13 + \delta, 1 + \delta)$
and $r$ occupies $[1 + \delta, 2 + \delta)$. So $q$ touches slot $1$;
$h$ touches slots $1$ and $2$; $r$ touches slots $2$ and $3$. Matching
$q$ to slot $1$, $h$ to slot $2$ and $r$ to slot $3$ uses each slot at
most once and respects adjacency, so it is a valid matching, and it puts
all three jobs on machine $0$, for load
$1 + \tfrac12 + \tfrac13 - \delta = \tfrac{11}{6} - \delta$.

Since $\OPT = 1$, the ratio realized is $\tfrac{11}{6} - \delta$; letting
$\delta \downarrow 0$ gives the supremum claim.
\end{proof}

\begin{remark}[what the family does and does not require]
The family exercises the freedom in Step~1 as well as the freedom in
Step~3. The exhibited $x$ is not a vertex of the relaxation polytope, and the
reason is a segment through it rather than a count of tight constraints ---
the assignment equalities are tight at \emph{every} feasible point, so counting
them settles nothing. Moving $(x_{q0}, x_{h0})$ by $(t, -2t)$ keeps the point
feasible for $|t|$ small, in both directions, which is what a vertex forbids: the perturbed point satisfies the same equalities and leaves
every load and big-job budget within its bound, which is one line of
arithmetic. Step~2 removes $q$ at both endpoints of that segment: at
$x_{h0} = 1$ one has $x_{q1} = 5/6 - \delta \ge \beta$, and at
$x_{h0} = 0$ one has $x_{q0} = 2/3 + \delta \ge \beta$, the other way. The bad region is the
open sliver $x_{q0} \in (1/3,\, 1/3 + 2\delta)$, of width $2\delta$. So
the theorem is a statement about the scheme as Wang and Sitters state
it (``find \emph{a} feasible solution''), and it would not follow for a
variant that pinned Step~1 to a vertex. The supremum is nonetheless
$\tfrac{11}{6}$ there as well: the companion note \cite{Paper2} exhibits, for
every $g \ge 2$, a point that \emph{is} a vertex of the relaxation and at which
every valid matching finishes at $\tfrac{11}{6} - \tfrac1{4g}$. So a solver
returning a basic feasible solution --- which is what a simplex implementation
does, and what makes this family's own $x$ unreachable in practice --- does not
escape the constant; it only escapes this witness.
\end{remark}

\begin{measurement}\label{meas:eightmatchings}
On the family of Theorem~\ref{thm:ratio}, the exhaustive enumeration of
all valid matchings (there are $8$) gives worst-case makespan exactly
$\tfrac{11}{6} - \delta$ and best-case makespan exactly $1 = \OPT$, for
$\delta \in \{10^{-2}, 10^{-3}, 10^{-4}\}$: measured
$1.8233333$, $1.8323333$, $1.8332333$. Solving Step~1 with GLOP instead --- a run reported here and not
regenerated by the file this measurement names ---
returns an integral $x$ on this instance and makespan $1$, which is why
the adversarial point must be supplied explicitly.
\texttt{p29amb\_ratio\_family\_check.py} regenerates this row in exact rational
arithmetic and writes \texttt{p29amb\_ratio\_family\_check.json}.
\end{measurement}

\begin{proposition}[$11/6$ is never attained]\label{prop:ceiling}
Let the relaxation be feasible at $T$, let $x$ be any feasible solution,
and let any valid matching be taken in Step~3 with $\beta = 2/3$. Then
every machine's load is \emph{strictly} less than $\tfrac{11}{6}T$.
Consequently the supremum in Theorem~\ref{thm:ratio} equals $11/6$ and is
not attained.
\end{proposition}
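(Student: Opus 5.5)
Normalize $T=1$ and fix a machine $i$. We will bound its load with Lemma~\ref{lem:chain}(ii), which gives load at most $b_i + s_1 + L'_i - w_{K_i}$ with $w_{K_i} > 0$. Machines with no slot job carry only $b_i \le 1$, which is fine. The proof then splits on whether Step~2 assigned a big job to $i$, and, if not, on how much big-job fraction remains in the slot structure.

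\emph{Case (S): Step~2 assigned a job of size $b_i > 0$ to $i$.} Then $b_i \le 1$, and as recorded in the per-machine notation, $L'_i \le 1 - \beta b_i = 1 - \tfrac23 b_i$. Every job surviving in slots on $i$ is small. It cannot be big: it would have to place some fraction on $i$, but the big-job constraint leaves at most $1-\tfrac23$ for it, and the alternative would need the big job's share on $i$ to be positive while the seized job already took at least $\tfrac23$. So $s_1 \le \tfrac12$. Then
\[
\text{load} < b_i + \tfrac12 + 1 - \tfrac23 b_i = \tfrac32 + \tfrac13 b_i \le \tfrac{11}{6}.
\]
The inequality is strict because of $w_{K_i} > 0$. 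This is the use of Lemma~\ref{lem:chain}'s strictness that the paper flags. I should double-check the claim $s_1 \le \tfrac12$. If a big job $j$ survived on $i$, then $x_{ij} \le 1 - x_{i,\text{seized}} \le \tfrac13$. In that case $s_1 \le 1$ and the bound degrades. So I would instead argue more carefully: a surviving big job on $i$ uses at most $\tfrac13$ of slot~1, and the remaining slots have $s_z \le w_{z-1}$. A cruder but safe route is to keep $s_1 \le 1$ and use $L'_i \le 1 - \tfrac23 b_i - (\text{that big job's load})$. I expect this subcase to need an exact accounting: total $\le b_i + p_j + (L'_i - w_{K_i})$ with $L'_i \le 1 - \tfrac23 b_i$, and $b_i, p_j \le 1$, $b_i > \tfrac12$.

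\emph{Case (N): nothing assigned in Step~2, so $b_i = 0$.} Then $L'_i \le 1$. If $s_1 \le \tfrac12$ (case N0), the load is $< \tfrac32$. Otherwise slot~1 touches a big job. All big fractions on $i$ are below $\tfrac23$ each and sum to at most $1$, and being largest they are placed first. They therefore lie in $[0,1)$, i.e.\ in slot~1, and under the half-open convention no big job is adjacent to slot~2. So $s_z \le \tfrac12$ for $z \ge 2$ (case N1). This is exactly where the closed reading fails, as the counterexample in Section~\ref{sec:prelim} shows. Now let $y \ge \tfrac13$ be the total big fraction, with each big share less than $\tfrac23$; $y \ge \tfrac13$ holds since $x_{ij} < \tfrac23$ forces the other share above $\tfrac13$ only on the other machine, so really $y>0$. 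Write $y$ for the big mass on $i$, so the big load is at least $y/2$. Then $w_1 \ge$ the mass of slot~1, giving $s_2 \le w_1$, and $L'_i - w_1 \le 1 - w_1$. The load is then at most
\[
s_1 + s_2 + \sum_{z \ge 3} s_z \le 1 + \min\{\tfrac12, w_1\} + (L'_i - w_1 - w_{K_i}).
\]
Balancing gives at most $1 + \tfrac12 + (1 - w_1 - \tfrac12)$ when $w_1 \ge \tfrac12$. Since $w_1$ includes $s_1 x_{ij}$ plus small mass, the key inequality to reach is $w_1 \ge \tfrac23$-ish: the job attaining $s_1$ either has share at least $\tfrac13$ there (otherwise its other share exceeds $\tfrac23$ and Step~2 would have fired on it elsewhere), giving $w_1 \ge \tfrac13 s_1 + \ldots$. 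I would optimize this small LP over $(s_1, \text{share})$ and expect the maximum $1 + \tfrac12 + \tfrac13 = \tfrac{11}{6}$, reached only in the limit, with strictness again from $w_{K_i} > 0$.

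The main obstacle is making this last optimization exact: tracking how the $\tfrac13$ lower bound on $s_1$'s share (from Step~2 not firing) feeds $w_1$, and hence the chain, including the subcase $K_i \le 2$. Finally, the "consequently" clause combines the strict bound, which gives load $< \tfrac{11}{6}T_{\mathrm{LP}} \le \tfrac{11}{6}\OPT$, with Theorem~\ref{thm:ratio}.
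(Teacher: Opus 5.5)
You have the paper's case structure. Splitting the no-Step-2 case on $s_1 \le \tfrac12$, rather than on whether the job \emph{matched} to slot~$1$ is big, is a legitimate variant and works equally well. But both cases that carry the constant are left open, and in each the route you sketch does not reach $\tfrac{11}{6}$.

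\textbf{Case (S).} You abandon $s_1 \le \tfrac12$ and fall back on $s_1 \le 1$, which yields only $b_i + 1 + (1 - \tfrac23 b_i) \le \tfrac73$. The ``degraded'' subcase is in fact empty. A big job $j$ that survives Step~2 has \emph{both} shares below $\tfrac23$, and since its shares sum to $1$ over at most two machines, both exceed $\tfrac13$. So $x_{ij} > \tfrac13$, which contradicts the $x_{ij} \le \tfrac13$ you derived from the budget. This is Wang and Sitters' observation, which the paper invokes, and you use the same fact yourself in case (N). With it, $s_1 \le \tfrac12$ and your bound $\tfrac32 + \tfrac13 b_i$ stands.

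\textbf{Case (N1).} Your bound treats $s_2$ and $w_1$ separately, and that loses too much. From $w_1 \ge x_q s_1$ with $x_q > \tfrac13$ and $s_2 \le \tfrac12$, you get only $s_1 + \tfrac12 + 1 - \tfrac13 s_1 \le \tfrac{13}{6}$. Nor is $w_1 \ge \tfrac23$ the right target: $w_1$ can be near $\tfrac13$ when $s_2$ is tiny. The missing inequality couples the two quantities:
\begin{itemize}
\item For $K_i \ge 2$, slot~$1$ is full.
\item The largest job $q$ (size $s_1$, share $x_q < 1$) occupies $[0, x_q)$.
\item Every job meeting $[x_q, 1)$ precedes or equals the first job touching slot~$2$ in the nonincreasing order, so it has size at least $s_2$.
\end{itemize}
Hence $w_1 \ge x_q s_1 + (1-x_q) s_2$. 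Substituting into load $\le s_1 + s_2 + (L'_i - w_1 - w_{K_i})$ gives load $\le (1-x_q)s_1 + x_q s_2 + L'_i - w_{K_i} < 2 - \tfrac{x_q}{2}$. Since $x_q > \tfrac13$, the load is $< \tfrac{11}{6}$. No separate $K_i = 2$ analysis is needed, because the tail sum is empty. The case $K_i = 1$ is trivial, since the load is then at most $s_1 \le 1$.

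With these two steps supplied, your argument becomes the paper's.
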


\noindent
The step from this bound to that supremum needs one hypothesis, which we
state because it is not automatic: the theorem's ratio is $\ALG/\OPT$ while the
proposition bounds $\ALG/T$, and the two agree only when $T \le \OPT$. That
holds here because $T$ is the \emph{least} threshold at which the relaxation
is feasible, and any integral schedule of makespan $\OPT$ is in particular a
feasible fractional solution at $\OPT$. Without it the conclusion is false
rather than merely unproved: an instance run at a threshold above its
optimum can push $\ALG/\OPT$ well past $11/6$, and
Theorem~\ref{thm:tight} exhibits an instance with $\OPT > T$, so the
distinction is live in this note and not a technicality.

\begin{proof}
Normalize $T = 1$ and fix a machine $i$. If $i$ carries no slot job its
load is $b_i \le 1$. Otherwise Lemma~\ref{lem:chain}(i) gives
$w_{K_i} > 0$, and we distinguish three cases. They are exhaustive and
disjoint for the trivial reason: either $i$ holds a Step-2 job or it does
not, and if it does not, the job matched to slot~$1$ is either big or is not.
What needs justifying is not the split but case~(S)'s use of it, and the
justification is \cite{WS16}'s observation that a machine holding a Step-2 job
has no unassigned big fraction at all (an unassigned big job has both shares
in $(1/3, 2/3)$, while a Step-2 job already consumes at least $2/3$ of the
machine's big-job budget).

\emph{(N0) No Step-2 job, and the slot-$1$ occupant is small or absent.}
Here $b_i = 0$, and the case hypothesis says that the job \emph{matched}
to slot $1$ has size at most $1/2$, or that no job is matched there. It
does not say $s_1 \le 1/2$: $s_1$ is the largest size \emph{adjacent} to
slot $1$, and a big job adjacent to slot $1$ may be matched on its other
machine. So we stop one step short of Lemma~\ref{lem:chain}(ii) and use
its proof directly. The load of $i$ is the sum over slots of the size
matched there; slot $1$ contributes at most $1/2$ by the case hypothesis,
and slot $z$ contributes at most $s_z$ for $z \ge 2$. Bounding the latter
sum by Lemma~\ref{lem:chain}(i),
\[
\text{load} \ \le\ \tfrac12 + \sum_{z=2}^{K_i} s_z
\ \le\ \tfrac12 + L'_i - w_{K_i}
\ <\ \tfrac12 + 1 \ =\ \tfrac32 ,
\]
using $L'_i \le 1$ and $w_{K_i} > 0$. For $K_i = 1$ the middle sum is
empty and $L'_i - w_{K_i} = 0$, so the bound holds there as well.

\emph{(N1) No Step-2 job, and a big job $q$ is matched to slot $1$.}
Write $x_q = x_{iq}$. If $K_i = 1$ the load is $p_q \le 1 < 11/6$, so
assume $K_i \ge 2$. All big fraction lies in slot $1$, because the
big-job budget is $1$ and bigs sort first; so $s_z \le 1/2$ for
$z \ge 2$. This step is \cite[Observation~3]{SY21}, whose statement is
exactly it: ``Let $e$ be an edge in $\mathrm{slot}(u,i)$ such that
$i > 1$. Then $p_e \le 1/2$,'' proved by the same
sorting-and-budget argument. Slot $1$ is then full, of length $1$, of which $q$ occupies
$x_q$ and the
remainder is occupied by jobs of size at least $s_2$, whence
$w_1 \ge x_q p_q + (1 - x_q) s_2$. Using
$\sum_{z \ge 3} s_z \le \sum_{z=2}^{K_i - 1} w_z = L'_i - w_1 - w_{K_i}$
from Lemma~\ref{lem:chain}(i),
\[
\text{load} \ \le\ p_q + s_2 + L'_i - w_1 - w_{K_i}
\ \le\ (1 - x_q) p_q + x_q s_2 + L'_i - w_{K_i}
\ <\ (1 - x_q) + \tfrac{x_q}{2} + 1 ,
\]
using $p_q \le 1$, $s_2 \le 1/2$, $L'_i \le 1$ and $w_{K_i} > 0$. So the
load is $< 2 - x_q/2$. Because $q$ was not assigned in Step~2, both of
its shares are below $\beta = 2/3$, so $x_q > 1/3$ and the load is
$< 2 - \tfrac16 = \tfrac{11}{6}$.

\emph{(S) A Step-2 job of size $p$.} Every slot job is small, so
$s_1 \le 1/2$; the Step-2 job contributes at least $\tfrac23 p$ to the
relaxation load of $i$, so $L'_i \le 1 - \tfrac23 p$. Lemma~\ref{lem:chain}(ii)
gives load $< p + \tfrac12 + 1 - \tfrac23 p = \tfrac32 + \tfrac{p}{3}
\le \tfrac{11}{6}$, since $p \le 1$.

In every case the load is strictly below $11/6$. The final sentence
follows: Theorem~\ref{thm:ratio} gives ratios $11/6 - \delta$ for every
small $\delta > 0$, and no run attains $11/6$.
\end{proof}

\noindent
The same case split, kept parametric instead of relaxed to $11/6$, gives a
bound that is sometimes much better and identifies exactly which instance
data control it.

\begin{proposition}[an interpolating ceiling]\label{prop:interp}
Fix an instance, a feasible $x$ at $T$, and Step~2's output. Write
$b_{\max} = \max_i b_i$ over the Step-2 jobs ($0$ if there are none),
$p_{\max}$ for the largest job entering the slot structure, and
\[
  \sigma \;=\; \max\{\, s_2(i) \;:\; i \text{ carries no Step-2 job and }
  K_i \ge 2 \,\},
\]
with $p_{\max} := 0$ if no job enters the slot structure and $\sigma := 0$ if
the set defining it is empty --- two conditions, not one, since the slot
structure can be non-empty while no machine qualifies for $\sigma$. Then \emph{every} valid
matching has makespan
\[
  <\ \max\Bigl\{\ \tfrac32 T + \tfrac{b_{\max}}{3},\
                  T + \tfrac{2p_{\max} + \sigma}{3}\ \Bigr\} ,
\]
written so that both sides are makespans: $b_{\max}$, $p_{\max}$ and $\sigma$
are sizes, so a factor of $T$ outside the braces would be dimensionally wrong
except under the normalization $T = 1$ this note uses everywhere else. Both
terms are at most $\tfrac{11}{6}T$. In particular, if
$2p_{\max} + \sigma \le \tfrac94$ and $b_{\max} \le \tfrac34$, then every
valid matching is within $\tfrac74 T$.
\end{proposition}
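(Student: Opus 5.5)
The plan is to rerun the three-case split of Proposition~\ref{prop:ceiling} at $T=1$, but stop before relaxing each case to $11/6$, keeping the instance parameters instead. Fix a machine $i$. If it carries no slot job its load is $b_i \le 1$, which is below both terms. Otherwise Lemma~\ref{lem:chain}(i) gives $w_{K_i} > 0$, and we split into the same three cases (N0), (N1), (S). We then check each case against one of the two terms of the maximum.

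\emph{Cases (S) and (N0).} In case (S) the argument is verbatim that of Proposition~\ref{prop:ceiling}. We have $s_1 \le 1/2$ and $L'_i \le 1 - \tfrac23 b_i$, so Lemma~\ref{lem:chain}(ii) gives
\[
\text{load} \;<\; \tfrac32 + \tfrac{b_i}{3} \;\le\; \tfrac32 + \tfrac{b_{\max}}{3}.
\]
Case (N0) gives load $< \tfrac32$ with no change to the earlier argument, which is also below the first term.

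\emph{Case (N1).} This is the only case with real content. A big job $q$, of size $p_q \le p_{\max}$ since it entered the slot structure, is matched to slot~1. When $K_i = 1$ the load is $p_q \le p_{\max} < 1 + \tfrac{2p_{\max}+\sigma}{3}$. When $K_i \ge 2$, machine $i$ carries no Step-2 job, so it belongs to the set defining $\sigma$, and hence $s_2 \le \sigma$. The earlier chain still applies:
\[
\text{load} \;\le\; (1-x_q)\,p_q + x_q s_2 + L'_i - w_{K_i} \;<\; 1 + (1-x_q)\,p_{\max} + x_q \sigma .
\]
Here the main obstacle appears: we must bound this over $x_q \in (1/3, 2/3)$, and the expression is linear in $x_q$ with slope $\sigma - p_{\max}$. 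Observation~3 of \cite{SY21} gives $s_2 \le 1/2 < p_q$, and the same argument gives $\sigma \le p_{\max}$ whenever $\sigma > 0$ is witnessed by a slot-2 job. The slope is therefore nonpositive, and the bound is largest as $x_q \downarrow 1/3$. Since $x_q > 1/3$ strictly, this yields
\[
\text{load} \;<\; 1 + \tfrac23 p_{\max} + \tfrac13 \sigma,
\]
which is the second term. The one point to verify is that the intermediate step $w_1 \ge x_q p_q + (1-x_q)s_2$ is used with the actual $s_2(i)$, and only afterwards relaxed to $\sigma$. That relaxation is legitimate because the coefficient $x_q$ is positive.

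\emph{The closing claims.} With $p_{\max} \le 1$ and $\sigma \le 1/2$, the second term is at most $1 + \tfrac23 + \tfrac16 = \tfrac{11}{6}$. With $b_{\max} \le 1$, the first term is at most $\tfrac32 + \tfrac13 = \tfrac{11}{6}$. For the $7/4$ corollary, $b_{\max} \le \tfrac34$ gives $\tfrac32 + \tfrac14 = \tfrac74$, and $2p_{\max} + \sigma \le \tfrac94$ gives $1 + \tfrac34 = \tfrac74$. Both are immediate arithmetic, and the inequalities are strict, hence within $\tfrac74 T$.
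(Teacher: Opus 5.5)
Your proposal is correct and follows the paper's proof: you use the same three-case split, with (S) and (N0) absorbed by the first branch and (N1) yielding the second by monotonicity in $x_q$ together with $x_q > \tfrac13$. The only difference is cosmetic: you relax $p_q \to p_{\max}$ and $s_2 \to \sigma$ before optimizing over $x_q$, justifying the slope by $\sigma \le \tfrac12 < p_q \le p_{\max}$, whereas the paper optimizes first using $p_q > \tfrac12 \ge s_2$ and relaxes afterwards.
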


\begin{proof}
Partition the machines exactly as in Proposition~\ref{prop:ceiling} and
keep each case's bound unrelaxed. A machine with no slot job carries load at most $T$. (Machines with
$K_i = 1$ need no separate treatment --- the three cases already cover them ---
and they are not bounded by $T$: a Step-2 job of size $1$ on a machine whose
only slot takes a job of size $\tfrac12$ reaches $\tfrac32$.) Case (S) gives
$\tfrac32 + b_i/3 \le \tfrac32 + b_{\max}/3$. Case (N0) has no big job
matched to slot~$1$, so the parametrised bound below is not even defined
there; its own bound is $< \tfrac32 \le \tfrac32 + b_{\max}/3$, and it is
absorbed by the first branch. Case (N1) gives
$1 + p_q - x_q(p_q - s_2)$, which decreases in $x_q$ because
$p_q > 1/2 \ge s_2$, so $x_q > 1/3$ bounds it by
$1 + (2p_q + s_2)/3 \le 1 + (2p_{\max} + \sigma)/3$ (for $K_i = 1$ there is no
$s_2$, and the load is $p_q \le T$, absorbed by the first branch). Taking the
maximum
over the cases gives the display, and $p_{\max} \le 1$, $\sigma \le 1/2$,
$b_{\max} \le 1$ give the two ceilings.
\end{proof}

\noindent
\emph{This proposition depends on the half-open convention exactly as
Proposition~\ref{prop:ceiling} does.} Two steps of the proof use it: the
monotonicity step reads ``decreases in $x_q$ because $p_q > 1/2 \ge s_2$'',
and the final ceiling uses $\sigma \le 1/2$. Both are the same fact ---
$s_z \le \tfrac12$ for $z \ge 2$ --- which is exactly what the closed reading
breaks, since under it a big job whose fraction ends on the slot boundary
becomes adjacent to slot~2. Under the closed reading $\sigma$ can reach
$\tfrac{99}{100}$ on the Preliminaries' own counterexample, and then the
monotonicity step, the claim that both terms are at most $\tfrac{11}{6}T$, and
the $\tfrac74$ corollary all fail together.

\begin{remark}
Case (N1) with $x_q > 1/3$ recovers Wang and Sitters' Case~2 bound, and
the family of Theorem~\ref{thm:ratio} realizes it: there
$x_q = 1/3 + \delta$, the bound reads $11/6 - \delta/2$, and the
achieved load is $11/6 - \delta$. The three sizes $1$, $\tfrac12$,
$\tfrac13$ are the three terms of the Wang--Sitters case analysis --- the
slot-$1$ job may be as large as $T$, the slot-$2$ job is bounded by the
previous slot's mass, and so on --- realized simultaneously on one
machine, which their proof leaves open.
\end{remark}

\begin{measurement}\label{meas:corroborate}
Two independent corroborations, both from the deposited audit.
(a) A $13$-machine, $23$-job instance found by plateau walk
(\texttt{p29amb\_ratio\_\allowbreak audit\_\allowbreak walk\_\allowbreak 6000\_\allowbreak s20260802.json}) has $\OPT = 2$,
relaxation threshold $2$, and makespan $3.6571$ under the faithful algorithm with
\emph{GLOP's own} relaxation solution: ratio $1.82855$, with no adversarial
choice made anywhere. The \emph{ratio} reproduces on re-execution; the instance
does not. The walk seeds itself from the artifacts already present in its
directory, so a clean re-run reaches an equally extremal but different instance:
one job size $0.2439$ where the deposited artifact has $0.271$. The claim therefore rests on the deposited instance and its independent
certificate rather than on re-running the search: the instance is recorded
exactly, and certified below. This is
evidence that the phenomenon of Theorem~\ref{thm:ratio} is not an
artifact of hand-picking $x$. That instance is no longer only evidence:
its solver solution is now certified a vertex and its numbers are exact,
and it is the certified vertex witness of \cite{Paper2}.
(b) Over $4{,}000$ random instances with random-objective relaxation solutions,
taking the scheme's own Step-3 matching on each instance,
the largest ratio of makespan to the linear-programming threshold was
$1.6918$, and $1.7143$ on a second seed. These runs record ratios only; they do
not evaluate the per-machine bounds of Proposition~\ref{prop:ceiling}, which are
checked in the companion note \cite{Paper2} and nowhere else. Both
figures are read from
\texttt{p29amb\_ratio\_audit\_random\_4000\_s20260801.json} and its
\texttt{s20260802} companion, and the first reproduces bit-for-bit on
re-execution.
\end{measurement}

\begin{remark}[a number in the shipped artifacts that does not reproduce]
The artifact \texttt{p29amb\_ratio\_\allowbreak audit\_\allowbreak walk\_\allowbreak 6000\_\allowbreak s20260801.json}
records a ratio $1.9993$ against the relaxation's target for the faithful
algorithm. That exceeds $11/6$ and so contradicts \cite{WS16}; on
re-execution the same instance gives $1.4895$. We treat the recorded
value as a numerical artifact of the bisected target, and no claim here
rests on it. We report audit maxima below only where we have
re-executed them.
\end{remark}

\section{A weaker family, tight only against the relaxation}\label{sec:lprel}

The next family is tight against the relaxation's target $T$ but not against the
optimum. We keep it because its relaxation solution is essentially forced, which
makes it a cleaner illustration of the slot mechanism.
Figure~\ref{fig:tight} draws both panels of what follows: the family
under an arbitrary valid matching, and the same instance under the
best-matching variant.

\begin{figure}[htbp]\centering
\includegraphics[width=0.95\textwidth]{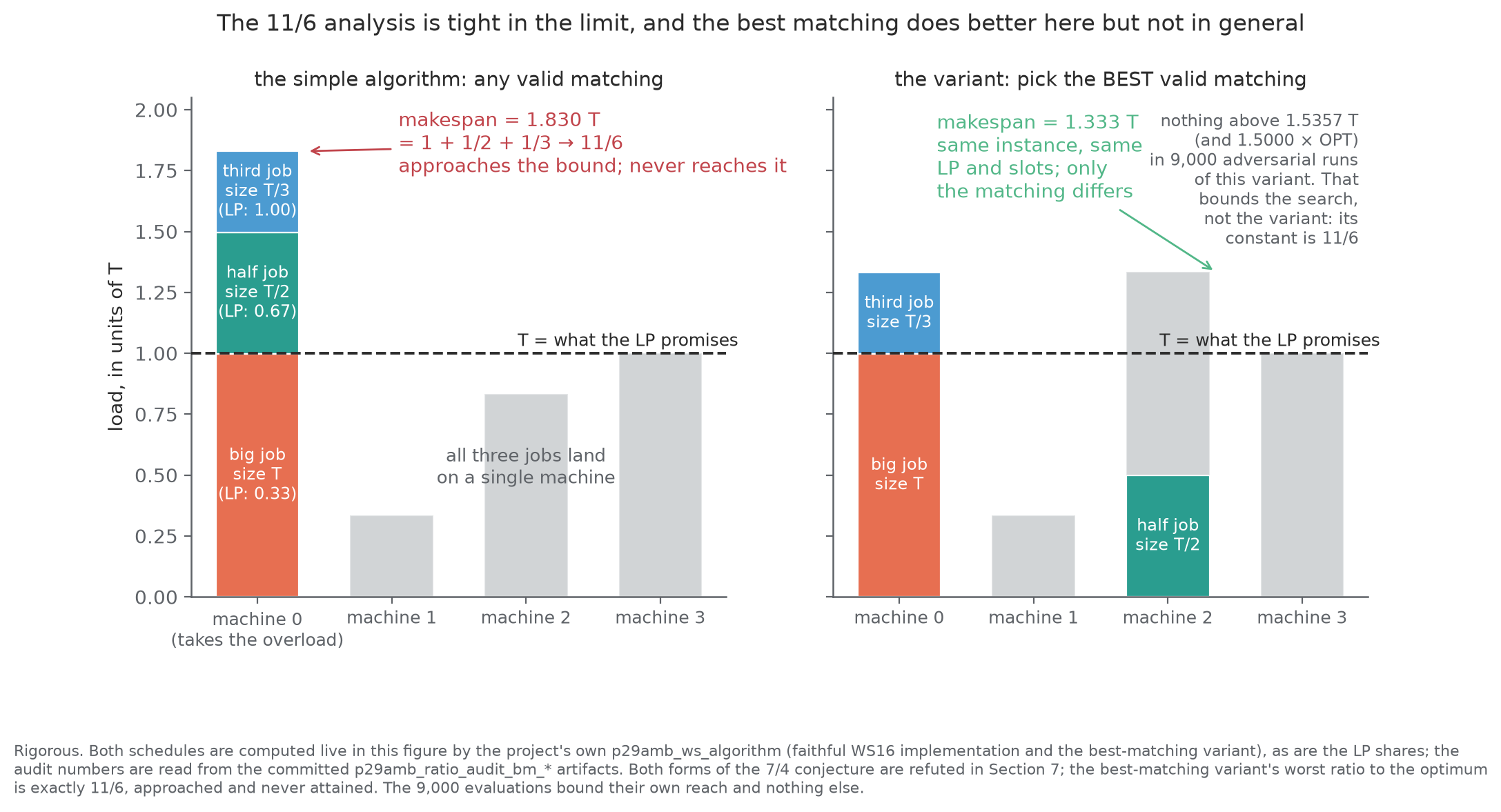}
\caption{The six-job family of Theorem~\ref{thm:tight} (left): the three
slots of one machine absorb sizes $T$, $T/2$ and $T/3$, the three terms
of the Wang--Sitters analysis, realized at once. Right: the same instance
under the best-matching variant. The figure is drawn against the relaxation
target $T$; on this family the optimum is $\approx \tfrac43 T$, so the left panel
shows relaxation-relative tightness, not an approximation ratio of $11/6$. For
the approximation ratio see Theorem~\ref{thm:ratio}, whose three-job
family has the same slot picture with the forced singleton jobs that saturate
the partner machines removed.}
\label{fig:tight}
\end{figure}

\begin{theorem}\label{thm:tight}
For every $\delta \in (0, 1/15]$ there is a six-job instance on four machines,
feasible for the relaxation at $T = 1 + 2\delta$ and at no smaller
target, on which some valid matching yields makespan
$(1 + \tfrac12 + \tfrac13)T - O(\delta) = \tfrac{11}{6}T - O(\delta)$.
Moreover the feasible region is contained in a box of width $O(\delta)$: any feasible
$x$ has
\[
x_{q0} \in [\tfrac13 + \delta,\ \tfrac13 + \tfrac32\delta], \qquad
x_{h0} \in [\tfrac23 + \delta,\ \tfrac23 + 2\delta], \qquad
x_{r0} = 1 .
\]
The instance does \emph{not} witness an approximation ratio of $11/6$:
its optimum is $\OPT = \tfrac43 + \tfrac52\delta > T$, so the ratio
realized against the optimum is
$\tfrac{11}{6} / (\tfrac43 + \tfrac52\delta) \to \tfrac{11}{8} = 1.375$
as $\delta \downarrow 0$.
\end{theorem}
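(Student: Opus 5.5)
The plan is to build the instance as the three-job family of Theorem~\ref{thm:ratio} with forcing jobs added. Machines $\{0,1,2,3\}$. Job $q$ of size $1$ is allowed on $\{0,1\}$, job $h$ of size $\tfrac12$ on $\{0,2\}$, and job $r$ of size $\approx\tfrac13$ on $\{0\}$ only. Three further jobs are dedicated singletons: one fixed on machine~$1$ and one fixed on machine~$2$, which saturate the partner machines, and a sixth job that supports the claim about the least feasible target. I would first try placing the sixth job on machine~$3$ with size exactly $T = 1+2\delta$, so that $T_{\mathrm{LP}}\ge 1+2\delta$ is immediate. If that makes $\OPT$ come out wrong, I would instead let it share machine~$0$ with $r$.

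The sizes are read off from the box the statement claims. Put the other dedicated jobs at the target $T=1+2\delta$.
\begin{itemize}
\item For $x_{q0}\ge\tfrac13+\delta$, take dedicated load $d_1$ on machine~$1$ with $T-d_1=\tfrac23-\delta$, that is $d_1=\tfrac13+3\delta$.
\item For $x_{h0}\ge\tfrac23+\delta$, take $d_2$ on machine~$2$ with $\tfrac12(\tfrac13-\delta)=T-d_2$.
\item Choose $p_r$ so that machine~$0$ is exactly saturated at the lower-left corner of the box. The load constraint of machine~$0$, $x_{q0}+\tfrac12x_{h0}+p_r\le T$, then yields the upper ends of the two intervals: $x_{q0}\le\tfrac13+\tfrac32\delta$ when $x_{h0}$ is at its minimum, and $x_{h0}\le\tfrac23+2\delta$ when $x_{q0}$ is at its minimum.
\end{itemize}
The constraint $x_{r0}=1$ is forced because $r$ is allowed only on machine~$0$. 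After this I would tune the three constants so that all of the stated endpoints come out exactly. The big-job constraints should be checked to be slack: $q$ is the only big flexible job, and the dedicated jobs contribute at most one unit of big fraction to their own machines. The box argument then rests only on the three load rows.

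Next come Step~2 and Step~3, in the manner of Theorem~\ref{thm:ratio}. Since $\delta\le\tfrac1{15}$, the box keeps both of $q$'s shares strictly inside $(\tfrac13,\tfrac23)$, so Step~2 fires for no job of size $>T/2$. In particular nothing fires for $q$, and the dedicated big jobs take their own machines, which does not affect machine~$0$. Machine~$0$ then has fractional mass $x_{q0}+x_{h0}+1>2$, so it has three slots. In sorted order $q$ touches slot~$1$, $h$ straddles slots $1$ and $2$, and $r$ straddles slots $2$ and $3$. The matching $q\mapsto1$, $h\mapsto2$, $r\mapsto3$ is valid and gives load $1+\tfrac12+p_r=\tfrac{11}{6}T-O(\delta)$.

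The two global claims remain. For minimality of the target, at any $T'<1+2\delta$ I would sum the load rows of machines $0,1,2$. Their dedicated loads are fixed, and $q$ and $h$ contribute their full sizes across these machines, so the total mass $1+\tfrac12+p_r+d_1+d_2$ exceeds $3T'$. This needs a check that no breakpoint $2p_j$ lies in between and enlarges the big set. For $\OPT$ I would enumerate the four orientations of $q$ and $h$; the dedicated jobs are fixed. The best orientation should give $\tfrac43+\tfrac52\delta$, for instance $q$ on machine~$0$ with $r$, or $h$ on machine~$2$ on top of $d_2$, and this requires adjusting the constants. I expect this joint tuning to be the main obstacle: the sizes must make the box endpoints exact, keep $T_{\mathrm{LP}}=1+2\delta$, and produce exactly $\OPT=\tfrac43+\tfrac52\delta$. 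I would fix the box first, then solve for the sixth job's size from the $\OPT$ equation, and verify everything in exact rationals.
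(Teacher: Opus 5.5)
Your instance is the paper's up to one change: the same $q$ and $h$, the same saturating singletons $d_1 = \tfrac13 + 3\delta$ and $d_2 = T - \tfrac12(\tfrac13 - \delta)$, a pin of size exactly $T$ alone on machine~3, and the same three-slot matching on machine~0. The paper allows $r$ on $\{0,3\}$ and lets the pin force $x_{r3} = 0$; you fix $r$ to machine~0. That change is harmless and slightly better. Your four orientations cost $\tfrac{11}{6}$, $\tfrac43 + \tfrac52\delta$, and $\tfrac43 + 3\delta$ twice, so $\OPT = \tfrac43 + \tfrac52\delta$ holds for all $\delta \le \tfrac15$. In the paper, $r$ can escape to machine~3, so the branch $q, h \mapsto 0$ costs only $\tfrac32$; that is exactly where the hypothesis $\delta \le \tfrac1{15}$ comes from. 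The ``joint tuning'' you expect to be the main obstacle does not arise. With the box-forced constants and the pin on machine~3, $\OPT$ and $T_{\mathrm{LP}}$ both come out right and nothing is left to solve for. So drop the fallback of moving the sixth job onto machine~0.

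Two steps are wrong as written, though your own plan already contains the repair for both.

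First, saturating machine~0 at the lower-left corner means $p_r = \tfrac13 + \tfrac\delta2$. That collapses the feasible set to the corner and contradicts the upper endpoints you derive in the next sentence. The value that produces the stated box is $p_r = \tfrac13$, which leaves slack $\tfrac\delta2$ at the corner (load $1 + \tfrac32\delta$). That corner is then the explicit feasible point you still need to exhibit.

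Second, the load-summing argument for minimality of the target fails. Machines $0,1,2$ carry total size $3 + \tfrac{11}{2}\delta < 3T$. In fact every constraint except the pin's is satisfiable already at $T' = 1 + \tfrac{11}{6}\delta$, at the corner of the corresponding box. Minimality comes from the pin alone, as in the paper: at $T' < T$ the rule $x_{ij} = 0$ for $p_j > T'$ leaves the pin no machine.

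There are also two smaller slips. Step~2 does seize the dedicated big jobs (the machine-2 singleton and the pin); it just does not seize $q$. And the big-job budget on machine~1 is slack only because $d_1 < T/2$, i.e.\ $\delta < \tfrac1{12}$. Were $d_1$ big, it would force $x_{q1} = 0$ and destroy the box.
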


\begin{proof}
Take machines $\{0,1,2,3\}$ and jobs: $q$ of size $1$ on $\{0,1\}$; $h$
of size $1/2$ on $\{0,2\}$; $r$ of size $1/3$ on $\{0,3\}$; and three
forced singletons, of sizes $T - (2/3 - \delta)$ on machine $1$,
$T - \tfrac12(1/3 - \delta)$ on machine $2$, and $T$ on machine $3$.

The singleton on machine $3$ has size $T$, so $x_{r3} = 0$ and
$x_{r0} = 1$. The singleton on machine $1$ leaves capacity
$2/3 - \delta$ there, so $x_{q1} \le 2/3 - \delta$ and
$x_{q0} \ge 1/3 + \delta$. The singleton on machine $2$ leaves capacity
$\tfrac12(1/3 - \delta)$, and $h$ has size $1/2$, so
$x_{h2} \le 1/3 - \delta$ and $x_{h0} \ge 2/3 + \delta$. Machine $0$'s
load constraint reads $x_{q0} + \tfrac12 x_{h0} + \tfrac13 \le T$, i.e.
$x_{q0} + \tfrac12 x_{h0} \le \tfrac23 + 2\delta$; combined with the two
lower bounds this forces $x_{q0} \le \tfrac13 + \tfrac32\delta$ and
$x_{h0} \le \tfrac23 + 2\delta$, giving the stated box. That box is derived from
\emph{necessary} conditions, so it does not by itself say the relaxation is
feasible; one point exhibits that. Take
$x_{q0} = \tfrac13 + \delta$, $x_{h0} = \tfrac23 + \delta$ and $x_{r0} = 1$,
so $x_{q1} = \tfrac23 - \delta$ and $x_{h2} = \tfrac13 - \delta$, each exactly
at the capacity its singleton leaves. Machine $0$'s load is then
\[
\bigl(\tfrac13 + \delta\bigr) + \tfrac12\bigl(\tfrac23 + \delta\bigr) + \tfrac13
\ =\ 1 + \tfrac32\delta \ \le\ 1 + 2\delta \ =\ T ,
\]
and its big-job budget is $x_{q0} = \tfrac13 + \delta \le 1$, since $q$ is the
only big job. So the relaxation is feasible at $T$; and it is infeasible
below $T$ because of the size-$T$ singleton.

Machine $0$'s fractional mass is $x_{q0} + x_{h0} + 1 \in
(2,\, 2 + \tfrac72\delta]$, so it carries three slots. Sorted by
nonincreasing size, $q$ occupies $[0, x_{q0})$, $h$ occupies
$[x_{q0}, x_{q0} + x_{h0})$ and $r$ the next unit, so $q$ touches slot
$1$, $h$ touches slots $1$ and $2$, and $r$ touches slots $2$ and $3$.
Sending $q, h, r$ to slots $1, 2, 3$ is a valid matching putting all
three on machine $0$, for load
$1 + \tfrac12 + \tfrac13 = \tfrac{11}{6}$ against $T = 1 + 2\delta$,
i.e.\ $\tfrac{11}{6}T - O(\delta)$.

It remains to compute $\OPT$. The schedule $q \mapsto 0$,
$h \mapsto 2$, $r \mapsto 3$ has loads $1$ on machine $0$,
$\tfrac13 + 3\delta$ on machine $1$,
$T - \tfrac12(\tfrac13 - \delta) + \tfrac12 = \tfrac43 + \tfrac52\delta$
on machine $2$, and $T + \tfrac13 = \tfrac43 + 2\delta$ on machine $3$,
so $\OPT \le \tfrac43 + \tfrac52\delta$. Conversely, every schedule puts
$q$ on machine $0$ or $1$ and $h$ on machine $0$ or $2$. If
$q \mapsto 1$, machine $1$ carries $1 + \tfrac13 + 3\delta =
\tfrac43 + 3\delta > \tfrac43 + \tfrac52\delta$. If $q \mapsto 0$ and
$h \mapsto 0$, machine $0$ carries at least $\tfrac32$, which exceeds
$\tfrac43 + \tfrac52\delta$ for $\delta < 1/15$ and equals it at
$\delta = 1/15$. This is where the theorem's hypothesis is needed: above
$1/15$ this branch becomes the cheaper one and $\OPT$ is no longer
$\tfrac43 + \tfrac52\delta$. If $q \mapsto 0$ and
$h \mapsto 2$, machine $2$ carries exactly
$\tfrac43 + \tfrac52\delta$. So $\OPT = \tfrac43 + \tfrac52\delta$,
which exceeds $T = 1 + 2\delta$, and the realized ratio against the
optimum is $\tfrac{11}{6}/(\tfrac43 + \tfrac52\delta)$.
\end{proof}

\begin{measurement}\label{meas:sixjob}
At $\delta = 10^{-3}$: $T = 1.002$, relaxation threshold $1.002$, algorithm
makespan $1.8333333$, so $\ALG/T = 1.8296740$; but $\OPT = 1.3358333$
and therefore $\ALG/\OPT = 1.3724267$. At $\delta = 10^{-2}$ and
$10^{-4}$ the corresponding ratios to the optimum are $1.3496933$ and
$1.3747422$. The $O(\delta)$-box of Theorem~\ref{thm:tight} is what its proof
establishes; the deposit does not regenerate it separately, and this note
claims no independent numerical check of it.
Under the best-matching oracle the same instances give makespan exactly
$\OPT$, so the ratio to the optimum is exactly $1.0000$; the familiar $4/3$ is
the ratio to $T$, and even that only in the limit --- at $\delta = 10^{-3}$ it
is $\OPT/T = (\tfrac43 + \tfrac52\delta)/(1 + 2\delta) = 8015/6012 = 1.33317$,
tending to $\tfrac43$ from below as $\delta \downarrow 0$. \texttt{p29amb\_meas\_sixjob\_check.py} regenerates these
ratios; it is the only entry point that prints them.
\end{measurement}

\begin{remark}[the family is not $\beta$-independent]\label{rem:beta}
This family does not bind for every admissible
$\beta$. The big job's shares are $x_{q0} \approx
1/3 + \delta$ and $x_{q1} \approx 2/3 - \delta$, so Step~2 assigns it
exactly when $\beta \le 2/3 - \delta$, and then the ratio collapses.
Measured at $\delta = 10^{-3}$, sweeping $\beta$ over
$\{0.50, 0.51, \dots, 0.99\}$: for $\beta \le 0.66$ the big job is
assigned in Step~2 and $\ALG/T = 1.3336660$ ($\ALG/\OPT = 1.0003743$);
for $\beta \ge 0.67$ it is not, and $\ALG/T = 1.8296740$
($\ALG/\OPT = 1.3724267$). Wang and Sitters allow $1/2 < \beta < 1$, so
the family binds on $(2/3 - \delta,\, 1)$ only. The family of
Theorem~\ref{thm:ratio} has the same limitation, for the same reason.
Neither family shows that no choice of $\beta$ helps.
Proposition~\ref{prop:beta}, at the end of Section~\ref{sec:sy}, settles the
question against the relaxation's value, for every $\beta \in (\tfrac12, 1)$; Theorem~\ref{thm:optexact} closes it against the optimum there as well, and Theorem~\ref{thm:belowfamily} closes it against the optimum at every $\beta \le \tfrac12$. It
is stated there because its proof needs that section's reading of
\cite{SY21}'s two instances.
\end{remark}

\section{Relation to Schwartz and Yeheskel}\label{sec:sy}

Schwartz and Yeheskel \cite{SY21} study graph balancing with
\emph{orientation costs}, and place \cite{WS16} as the $(11/6, 3/2)$
point of a bicriteria frontier. Their Lemma~12 shows the analysis of
their Algorithm~2 with threshold function $f_\alpha$ is tight on the
threshold scale, at
$\max\{3/2 + \alpha/2,\ 5/2 - \alpha\}$, which is $11/6$ at
$\alpha = 2/3$, and it exhibits two instances (their Figure~4).

One might expect the mechanism there to be the cost objective, so that
without those costs the construction would not bind. It does bind. We
rebuilt both Figure-4 instances in
plain graph balancing with all orientation costs deleted, ran Step~2 with
$\beta = \alpha$, and enumerated every valid slot matching.

\begin{measurement}\label{meas:sy}
Everything below is \texttt{p29amb\_sy\_figure4.py}, which rebuilds both
instances from the text of their Lemma~12, deletes the orientation costs
and changes nothing else. Their proof states the relaxation's solution is
unique at $T = 1$---each vertex carries fractional load exactly $1$ and the
graph is acyclic---so the slot structure is determined and is built in exact
rational arithmetic rather than read off a floating-point solve; uniqueness is taken from \cite{SY21}'s own argument rather than inferred
numerically, and solving independently in GLOP returns the same point, which is
used here as corroboration. Makespans are exact rationals and the valid
matchings are enumerated in full.
\emph{(a)} At $\alpha = \beta = 2/3$ and $\epsilon = 10^{-3}$, with costs
deleted:
Figure~4(a) ($5$ machines, $13$ jobs after splitting the self-loops as
\cite{SY21} do; $33$ valid matchings) has worst valid matching
$1.8328333$ against $T = 1$, matching their claimed
$1.5 + 0.5\alpha - 0.5\epsilon$ exactly; Figure~4(b) ($3$ machines, $8$
jobs, $6$ valid matchings) has worst valid matching $1.8323333$ against
$T = 1$, matching their claimed $2.5 - \alpha - \epsilon$
exactly.\footnote{The proof of their Lemma~12 concludes ``the makespan is
$2.5\alpha - \epsilon$''; we read this as a typo for $2.5 - \alpha -
\epsilon$, which is what the lemma's own statement and their Figure~4
caption give, and what the instance in fact realizes; $2.5\alpha -
\epsilon$ would be $1.666$ at $\alpha = 2/3$, against the measured
$1.832$.}
\emph{(b)} Sweeping $\alpha \in \{0.50, 0.51, \dots, 0.99\}$ with
$\beta = \alpha$, both reproduce their claimed value at every $\alpha$; in
particular Figure~4(b) binds for every $\alpha \in [1/2, 1)$, covering the
whole range on which our own families fail (Remark~\ref{rem:beta}). One
caveat: Figure~4(a) places the big edge at fraction \emph{exactly} $\alpha$
on $v_1$, so Wang and Sitters' non-strict test $x_{ij} \ge \beta$ assigns it
in Step~2 and the instance collapses to $1.3338$; with the threshold applied
strictly, as in \cite{SY21}'s Local Step, it binds as stated. Figure~4(b)
binds under the non-strict test as well, but only on the decreasing branch,
so above $\alpha = 2/3$ it gives less than $11/6$.
\emph{(c)} The edit of Proposition~\ref{prop:beta}---moving a load $\sigma$
from $q_u$ into $q_{v_1}$---restores Figure~4(a) under the non-strict test.
At $\alpha \in \{0.67, 0.70, 0.75, 0.80, 0.90, 0.99\}$ and
$\sigma \in \{10^{-4}, 10^{-3}, 10^{-2}\}$, all $18$ shifted instances have
a unique relaxation solution, leave the big edge unassigned by Step~2, and
have worst valid matching exactly $\tfrac32 + \tfrac\alpha2 -
\tfrac\epsilon2 - \sigma$; that value exceeds $11/6$ in every case except
$\alpha = 0.67$ with $\sigma = 10^{-2}$, where $\sigma$ is larger than the
available headroom $\tfrac\alpha2 - \tfrac13 = 0.0017$. In all $18$ the Shmoys--Tardos
rounding in nonincreasing size order, run in the deposited script's own
traversal and with no adversarial tie-break, returns that worst value itself.

That phrasing is deliberate: it is tempting to call this \emph{the
deterministic algorithm}, and it is not one. Shmoys--Tardos fixes the
job order and the slot cutting; it does not fix which augmenting path the
matching step takes. On the certified vertex witness of \cite{Paper2} the
same rounding gives $36571/20000$ scanning machines one way and $26571/20000$
the other --- the worst valid matching against the best, a swing of exactly
$\tfrac12$ --- and on the six-job family of Theorem~\ref{thm:tight} it gives
$4009/3000$ in one traversal order and $\tfrac{11}{6}$ in another. So naming a
standard algorithm for Step~3 does not remove the freedom this note is about;
it moves it into the matching routine's traversal, where it is easier to miss
because the algorithm has a name. A claim about ``the'' Shmoys--Tardos output
needs its traversal stated, and every such claim here states it.
Deposited as \texttt{p29amb\_sy\_\allowbreak figure4\_\allowbreak
perturbed.json} and \texttt{p29amb\_sy\_\allowbreak figure4\_\allowbreak
sweep.json}.
\end{measurement}

So the relaxation-relative phenomenon of Theorem~\ref{thm:tight} is not new: it
is \cite{SY21}'s Lemma~12 with the costs removed, and their statement is
the stronger one, holding for every $\alpha \in [1/2,1)$ where ours holds
only above $2/3 - \delta$. What remains new is
Theorem~\ref{thm:ratio}: neither Figure-4 instance is tight against the
\emph{optimum}. Measured at $\alpha = 2/3$, $\epsilon = 10^{-3}$, their
optima are $1.3333333$ and $1.3323333$, giving worst-case ratios
$\ALG/\OPT$ of $1.3746250$ and $1.3752815$: the same $1.37$ regime as
our six-job family, and far from $11/6$.

That distinction bears on a sentence of theirs, and we want to be careful
about how much. Opening their Section~3.2, shortly after Lemma~12, they
write: ``It is important to note that Lemma~12 implies that using
Algorithm~2 with \textbf{LP} and the threshold function $f_\alpha$ cannot
achieve an approximation better than $11/6$ with respect to the makespan.''

Two readings are available, and this section is about the difference
between them. Schwartz and Yeheskel define an $(\alpha,\beta)$-approximation
relative to a \emph{target} $T$---``given a target makespan $T$, it outputs
an orientation with makespan at most $\alpha T$''---and they locate $T$ by
binary search on the relaxation's feasibility. Read in their own terms, the
sentence says that with \textbf{LP} as the relaxation no threshold function
$f_\alpha$ certifies better than $11/6$; Lemma~12 gives that immediately,
the sentence is true, and the next clause---``To this end we strengthen
\textbf{LP}''---is exactly the response it calls for. It is worth recording what
that strengthening \emph{is}, because it is about the same object this note
measures. \cite{SY21} keep \textbf{LP} and add, for every vertex $u$ and every
$S \subseteq \delta(u)$ with $\sum_{e \in S} p_e > T$ and $|S| \le k$,
\[
  \sum_{e \in S} x_{e,u} \;\le\; |S| - 1
  \tag{Set}
\]
---in their words, ``if a collection of edges $S$ touching $u$ has total weight of
more than $T$ then not all edges in $S$ can be chosen''---and they use $k = 3$, so
no separation oracle is needed. They add that ``these constraints cannot be
inferred from the original relaxation of [3]'', which is \cite{EKS14}. At $|S| = 2$
(Set) says that two edges whose weights sum above $T$ cannot both be oriented into
$u$: a per-vertex constraint on jobs of \emph{any} size, where \cite{EKS14}'s
(Star $v$) constrains only the big ones. We think this is the
natural reading, and on it there is nothing to repair. Read instead as a
statement about $\ALG/\OPT$, the sentence is still true, but Lemma~12 does
not supply it: Lemma~12 is tight against $T$, and on both Figure-4
instances the worst valid matching reaches $1.83$ times $T$ and only $1.37$
times $\OPT$. Schwartz and Yeheskel show elsewhere that they state
$\OPT$-relative bounds explicitly when they mean them: their Theorem~6
reads ``every orientation has a makespan of at least $11/6 - \epsilon$''.

\begin{remark}[what a per-vertex strengthening can buy, and what it cannot]\label{rem:strengthen}
The natural response to everything below is to strengthen the relaxation until the
freedom costs less, and (Set) is one instance of exactly that. It is useful to say
in one place how far that can go, because the answer is known and it bounds this
note's subject.

\cite{VW14} give the general form of these cuts and settle their strength in one
step. Their own family, stated for any $\alpha \in \mathbb{N}^{+}$: if the jobs of a
set $S$ have total processing time on $i$ above $\alpha T$, then at least $\alpha$ of
them run elsewhere, so $\sum_{j \in S} x_{ij} \le |S| - \alpha$. (Set) is the case
$\alpha = 1$. And their Proposition~1 characterises \emph{every} cut that speaks
about one machine at a time --- any valid $\sum_j \alpha_j x_{ij} \le \delta_{\alpha,i}$
with $\delta_{\alpha,i} = \max\{\sum_{j \in S} \alpha_j : S$ feasible on $i\}$ --- and
proves all of them are implied by the configuration linear program. \cite{EKS14}'s
(Star $v$) is that family with $\alpha$ the indicator of the big jobs; so is any
other per-vertex cut one might write down here.

Worth noticing, since it is not how either paper puts it: \emph{(Star $v$) is
already a clique inequality}. The big edges at $v$ pairwise conflict --- two of
them exceed $T$ together --- so they form a clique in the graph on $\delta(v)$
that joins two edges when their weights sum above $T$, and (Star $v$) is that
clique's inequality. The clique is not in general maximal: a medium edge
conflicting with every big one extends it. So the relaxation this note works with
already contains one instance of a family it does not name, at one particular
non-maximal clique.

Two consequences, and neither is about this note's own results. First, no
strengthening of this shape can carry the relaxation past the configuration
program, whose integrality gap for graph balancing lies between $3/2$ --- witnessed,
at its smallest instance, in \cite{Shavit26} --- and $1.749$ \cite[Thm.~1]{JR19},
the upper end being a gap bound rather than an approximation ratio, as the
Introduction records. Second, and in the other direction, the configuration
program already implies every such cut, since no single configuration contains two
jobs that cannot share a machine. \textbf{Any per-vertex strengthening of LP therefore lives strictly between
\cite{EKS14}'s relaxation and the configuration program}, and what this note measures --- what the rounding may do with the
freedom \textbf{LP} leaves --- is a fact about the lower end of that interval.
Whether a strengthening in that interval improves the rounding's constant is open,
and it is not a question this note answers.
\end{remark}

The statement that survives is narrower, and it is what this section uses: no
instance in \cite{SY21} witnesses $\ALG/\OPT$ near $11/6$ for plain graph
balancing: their Theorem~6 is an integrality gap for the larger model
$\mathrm{GB}^u(0.5)$, their Theorem~3 is bicriteria and relative to
$\mathrm{LP}_k$, and their Figure-4 instances are at $1.37$ against the
optimum. Theorem~\ref{thm:ratio} supplies that witness.

One identification belongs here as a qualification rather than a claim.
Their Algorithm~2 with $f_\alpha$ is
\emph{a} member of the \cite{WS16} family with $\beta = \alpha$, not the
same procedure: their Local Step tests $x > f(p)$ strictly where
\cite{WS16} test $x_{ij} \ge \beta$---the difference Measurement~\ref{meas:sy}
encounters on \cite{SY21}'s Figure~4(a)---and their Global Step takes a
\emph{minimum-cost} matching where \cite{WS16} take any valid one. With the
costs deleted the second difference disappears; the first does not.

Their $11/6$ integrality-gap lower bound (their Theorem~6) is for
$\mathrm{GB}^u(0.5)$ --- unrelated weights and hyperedges --- a strictly
larger model than the one here, in which the gap of the relaxation used
in this note is $7/4$ \cite[\S4.1]{EKS14}.

\begin{proposition}[the threshold is already optimal]\label{prop:beta}
For $\beta \in (\tfrac12, 1)$ let $\mathcal{E}_\beta(I)$ be the execution space
of the scheme that differs from \cite{WS16} only in using threshold $\beta$ in
Step~2, and let
\[
  R_T(\beta) \;=\; \sup_{I}\ \sup_{(x,M)\,\in\,\mathcal{E}_\beta(I)}
  \ \frac{\ALG(I, x, M)}{T_{\mathrm{LP}}(I)}
\]
be its guarantee against the relaxation's own value. Then
\[
  R_T(\beta) \ \ge\ g(\beta) \ :=\
  \max\{\tfrac32 + \tfrac\beta2,\ \tfrac52 - \beta\}
  \qquad\text{for every } \beta \in (\tfrac12, 1),
\]
and $R_T(\tfrac23) = \tfrac{11}{6}$. Since $g$ is minimized only where its two
branches cross, $g(\beta) > \tfrac{11}{6}$ for every $\beta \ne \tfrac23$, so no
choice of $\beta$ has a guarantee below $\tfrac{11}{6}$ and $\beta = \tfrac23$
is the unique minimizer of $R_T$ on $(\tfrac12, 1)$. Both statements are
quantified over the whole interval; nothing here is sampled.
\end{proposition}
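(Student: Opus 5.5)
The plan is to prove the lower bound $R_T(\beta) \ge g(\beta)$ branch by branch, with one instance family for each branch. Both families are parametrised so that each covers a whole interval of thresholds. The equality $R_T(\tfrac23) = \tfrac{11}{6}$ then follows by combining that lower bound at $\beta = \tfrac23$ (where both branches equal $\tfrac{11}{6}$) with Proposition~\ref{prop:ceiling}. That proposition bounds every run by $\tfrac{11}{6}T$ with $T = T_{\mathrm{LP}}$. The minimisation statement is elementary: $\tfrac32 + \tfrac\beta2$ is strictly increasing, $\tfrac52 - \beta$ is strictly decreasing, and they cross exactly at $\beta = \tfrac23$. Hence $g(\beta) > \tfrac{11}{6}$ for all $\beta \ne \tfrac23$, and so $R_T(\beta) \ge g(\beta) > \tfrac{11}{6} = R_T(\tfrac23)$.

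For the decreasing branch $\tfrac52 - \beta$, I would generalise the three-job family of Theorem~\ref{thm:ratio}. The big job $q$ has size $1$ on $\{0,1\}$, with shares $x_{q0} = 1-\beta+\epsilon$ and $x_{q1} = \beta - \epsilon$, both strictly below $\beta$ because $\beta > \tfrac12$. Machine~0 then carries a small job $h$ of size $\tfrac12$ with share $\beta - \epsilon'$, chosen so that $h$ straddles the boundary of slot~1 into slot~2. Machine~0 also carries a job $r$ placed in slot~3, whose size is bounded by the slot-2 mass; the load budget $1$ fixes that mass. The bad matching sends $q, h, r$ to slots $1,2,3$ on machine~0, for load $1 + \tfrac12 + (1 - x_{q0} - \tfrac12 x_{h0} - \ldots)$. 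The sizes and shares must be tuned so that the total approaches $2 - x_{q0}/2$, i.e.\ the case-(N1) bound $(1-x_q)p_q + x_q s_2 + L'_i$ with $x_q \downarrow 1-\beta$, which gives $\tfrac32 + \tfrac{\beta}{2}$… Before committing, I need to recheck which branch (N1) actually produces. With $x_q > 1-\beta$, (N1) gives load $< 2 - x_q/2 \to \tfrac32 + \tfrac\beta2$. So the (N1) family carries the \emph{increasing} branch. Case (S) gives $p + \tfrac12 + 1 - \beta p \to \tfrac52 - \beta$ at $p = 1$, so it carries the decreasing one. I will therefore build the two families accordingly. For the increasing branch, Theorem~\ref{thm:ratio}'s family is used with $x_{q0} = 1 - \beta + \delta$ and the remaining sizes rescaled so that slot~2 holds a size-$\tfrac12$ job and slot~3 absorbs the leftover load. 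For the decreasing branch, I take a Step-2 job of size $1$ with share exactly $\beta$ on machine~$i$. The residual load budget $1-\beta$ is filled by a size-$\tfrac12$ job (to reach slot~1 as $s_1 = \tfrac12$) followed by small jobs whose slot masses chain, so that the matched sizes total nearly $\tfrac12 + (1-\beta)$. This is what \cite{SY21}'s Figure~4(b) does, per Measurement~\ref{meas:sy}.

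For each family I would then verify four things in order. First, feasibility of the exhibited $x$, covering the load and big-job constraints. Second, that $T_{\mathrm{LP}} = 1$, which I plan to force via a job of size exactly $1$. Third, that Step~2 assigns exactly the intended jobs under the non-strict test $x_{ij} \ge \beta$. Fourth, that the claimed matching respects half-open adjacency, with the makespan computed as $g$'s branch minus $O(\epsilon)$.

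The main obstacle will be the non-strict test. \cite{SY21}'s Figure~4(a) puts the big share at exactly $\alpha$ and collapses under Wang and Sitters' test. I expect to fix this by the load-shifting edit mentioned in Measurement~\ref{meas:sy}(c): moving mass $\sigma$ so that the big share sits strictly below $\beta$ at a cost of $O(\sigma)$. The delicate part is ensuring that the edit keeps every needed job adjacent to its target slot for all $\beta \in (\tfrac12, 1)$ simultaneously, and not only near $\tfrac23$.
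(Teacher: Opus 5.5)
Your proposal is correct. Its skeleton matches the paper's: one family per branch, each uniform in $\beta$; $R_T(\tfrac23)\le\tfrac{11}{6}$ from Proposition~\ref{prop:ceiling}; and the minimiser from the transversal crossing of the two branches. (Your write-up should keep only the corrected branch assignment, (N1)-type for $\tfrac32+\tfrac\beta2$ and (S)-type for $\tfrac52-\beta$.)

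The difference is in the families. The paper's proof uses \cite{SY21}'s Figure-4 instances with costs deleted. Figure~4(b) carries $\tfrac52-\beta$: the big edge is seized at share $\beta+\tfrac\varepsilon2$, a half-size edge sits at share $\varepsilon$, and two loop halves follow. Figure~4(a) carries $\tfrac32+\tfrac\beta2$, and it needs the $\sigma$-shift because at $\sigma=0$ its big edge sits at share exactly $\beta$ and the non-strict test seizes it. There, $T_{\mathrm{LP}}=1$ comes from every machine being loaded to exactly $1$.

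For the increasing branch you instead generalise Theorem~\ref{thm:ratio}'s chain with $x_{q0}=1-\beta+\delta$. Made concrete, with $x_{h0}=\beta$, $p_r=\tfrac\beta2-\delta$ and $0<\delta<\min\{2\beta-1,\tfrac\beta2\}$, this is exactly the family the paper introduces only later, for Theorem~\ref{thm:optexact}. Your decreasing family is essentially Lemma~\ref{lem:nested} at $m=1$. You should pin it down: a single filler of size $1-\beta-\tfrac\mu2$ at share~$1$ suffices, giving two slots, rider to slot~1 and filler to slot~2, and load $\tfrac52-\beta-\tfrac\mu2$.

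Your route buys three things. First, the instances are smaller. Second, the obstacle you flag in your last paragraph never arises for your own families. The shares of $q$, namely $1-\beta+\delta$ and $\beta-\delta$, lie strictly below $\beta$ by construction, and a share of exactly $\beta$ is seized by $x_{ij}\ge\beta$, which is the test the proposition's scheme uses. So no $\sigma$-edit is needed. Adjacency is also visibly uniform in $\beta$: $q$ on $[0,1-\beta+\delta)$, $h$ on $[1-\beta+\delta,1+\delta)$, $r$ on $[1+\delta,2+\delta)$. Third, since $\OPT=T_{\mathrm{LP}}=1$ on both your families, you obtain the optimum-relative lower bound in the same stroke. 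The Figure-4 instances, by contrast, have optima well above $T_{\mathrm{LP}}$, about $\tfrac43$ at $\beta=\tfrac23$.

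What the paper's route buys is attribution. It shows that \cite{SY21}'s own instances, with costs deleted, already carry the threshold-relative bound across the whole interval.
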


\begin{proof}
The lower bound is carried by two families, each covering an interval of
thresholds rather than a point. Both are \cite{SY21}'s Figure-4 instances with
the orientation costs deleted, the second one perturbed as below; what is new
is not the instances but that neither is evaluated at a sampled $\beta$. The
reason an interval argument is available at all is that every number in either
family --- job sizes, fractional shares, self-loop weights --- is an
\emph{affine} function of $\beta$, so each decision the scheme makes is a
comparison between affine functions: which jobs Step~2 seizes, how many slots
$\lceil\,\cdot\,\rceil$ gives a machine, and which slots a job's fraction
touches. Fix the sign of finitely many such comparisons and the slot structure
is combinatorially the same at every $\beta$ in the interval, with each
machine's load a single affine function of $\beta$. We therefore never take a
maximum over matchings: we name one valid matching and one machine, and bound
that machine's load, which is all a lower bound on $R_T$ needs.

\emph{Two preliminaries, stated because they are what make the families
witnesses rather than examples.} First, in each family the exhibited $x$ loads
every machine to exactly $1$. Summing $\sum_j x_{ij} p_j = 1$ over the $m$
machines and using $\sum_i x_{ij} = 1$ gives $\sum_j p_j = m$, so at any
$T < 1$ the load constraints already fail and $T_{\mathrm{LP}} = 1$: the ratios
below are against the least feasible target, as $R_T$ requires. Second, $R_T$
takes a supremum over the feasible solutions Step~1 may return, so exhibiting
one feasible $x$ suffices; we do not need \cite{SY21}'s uniqueness argument
here. It survives as a robustness remark --- it is why an implementation that
solves the relaxation its own way meets the same witness, which
Measurement~\ref{meas:sy} checks against GLOP --- not as a step.

\emph{The decreasing branch.} Fix $\beta \in (\tfrac12, 1)$ and
$\varepsilon \in (0, 1 - \beta)$. Take \cite{SY21}'s Figure~4(b) with the costs
deleted: vertices $u, v_1, v_2$; edges $(u,v_1)$ of size $1$ and $(u,v_2)$ of
size $\tfrac12$; self loops $q_u = 1 - \beta - \varepsilon$,
$q_{v_1} = \beta + \tfrac\varepsilon2$,
$q_{v_2} = \tfrac12 + \tfrac\varepsilon2$, each split into two halves as
\cite{SY21} split theirs. Put
$x_{(u,v_1),u} = \beta + \tfrac\varepsilon2$,
$x_{(u,v_1),v_1} = 1 - \beta - \tfrac\varepsilon2$,
$x_{(u,v_2),u} = \varepsilon$, $x_{(u,v_2),v_2} = 1 - \varepsilon$, and each
loop half wholly on its own vertex. Every share is positive because
$\varepsilon < 1 - \beta$, the three machine loads are
$(\beta + \tfrac\varepsilon2) + \tfrac\varepsilon2 + (1 - \beta - \varepsilon)
= 1$, $(1 - \beta - \tfrac\varepsilon2) + (\beta + \tfrac\varepsilon2) = 1$ and
$\tfrac12(1 - \varepsilon) + (\tfrac12 + \tfrac\varepsilon2) = 1$, and the only
big job is $(u,v_1)$, whose fraction is at most $1$ on each of its two
machines; the largest loop half is
$\tfrac12(\beta + \tfrac\varepsilon2) < \tfrac12$. So $x$ is feasible at
$T = 1$.

Step~2 seizes $(u,v_1)$ onto $u$, since
$x_{(u,v_1),u} = \beta + \tfrac\varepsilon2 \ge \beta$, and the choice is not
ambiguous: its other share is below $\beta$ because $\beta > \tfrac12$. That
leaves $u$ carrying a base load of $1$. The survivors with positive share on
$u$ are the half-size edge, of size $\tfrac12$ and share $\varepsilon$, and
$u$'s two loop halves, of size $\tfrac12(1 - \beta - \varepsilon) < \tfrac12$
and share $1$ each. Their total fractional mass is $2 + \varepsilon$, so $u$
gets $\lceil 2 + \varepsilon\rceil = 3$ slots; nonincreasing size puts the
half-size edge first, on $[0, \varepsilon)$, and the loop halves on
$[\varepsilon, 1+\varepsilon)$ and $[1+\varepsilon, 2+\varepsilon)$. Under the
half-open convention of Section~\ref{sec:prelim} the first touches slot~$1$ and
the other two touch $\{1,2\}$ and $\{2,3\}$, so sending them to slots $1$, $2$
and $3$ respectively is valid; the remaining jobs fit because $v_1$'s two loop
halves have fractional mass exactly $2$ and hence a slot each, and $v_2$, whose
mass $3 - \varepsilon$ gives it three slots, needs only two of them once the
half-size edge is matched on $u$. Machine $u$ then carries
\[
  1 \;+\; \tfrac12 \;+\; (1 - \beta - \varepsilon)
  \;=\; \tfrac52 - \beta - \varepsilon .
\]
None of the comparisons above involved $\beta$ except through
$\beta > \tfrac12$ and $\varepsilon < 1 - \beta$, so this holds simultaneously
at every threshold in $(\tfrac12, 1)$. Letting $\varepsilon \downarrow 0$ gives
$R_T(\beta) \ge \tfrac52 - \beta$ for every such $\beta$.

\emph{The increasing branch.} Fix $\beta \in (\tfrac12, 1)$ and
$\varepsilon, \sigma > 0$ with $\varepsilon < 1 - \beta$ and
$\sigma < \min\{2\beta - 1,\ \tfrac\beta2 - \tfrac\varepsilon2\}$. Take
\cite{SY21}'s Figure~4(a) with the costs deleted and one load moved, which is
the edit Measurement~\ref{meas:sy}\,(c) reports: vertices $u, v_1, v_2, u', v'$;
edges $(u,v_1)$ and $(u',v')$ of size $1$ and $(u,v_2)$ of size $\tfrac12$;
self loops
$q_u = \tfrac\beta2 - \tfrac\varepsilon2 - \sigma$,
$q_{v_1} = 1 - \beta + \sigma$,
$q_{v_2} = \tfrac12 + \tfrac\beta2 + \tfrac\varepsilon2$,
$q_{u'} = 1 - \beta - \varepsilon$, $q_{v'} = \beta + \varepsilon$, halved as
before; and
$x_{(u,v_1),u} = 1 - \beta + \sigma$, $x_{(u,v_1),v_1} = \beta - \sigma$,
$x_{(u,v_2),u} = \beta + \varepsilon$,
$x_{(u,v_2),v_2} = 1 - \beta - \varepsilon$,
$x_{(u',v'),u'} = \beta + \varepsilon$,
$x_{(u',v'),v'} = 1 - \beta - \varepsilon$. At $\sigma = 0$ this is theirs.
All five loads are $1$; no job has size above $1$; the only big jobs are the two
size-$1$ edges, since the largest loop half is
$\tfrac14 + \tfrac\beta4 + \tfrac\varepsilon4 < \tfrac12$ when
$\beta + \varepsilon < 1$; and each of those two puts a fraction at most $1$ on
each of its machines. So $x$ is feasible at $T = 1$.

The shift is what the non-strict test costs them, and it is why $\sigma$ is
here. At $\sigma = 0$ the edge $(u,v_1)$ sits at fraction \emph{exactly}
$\beta$ on $v_1$, so \cite{WS16}'s test $x_{ij} \ge \beta$ seizes it and the
instance collapses --- Measurement~\ref{meas:sy}\,(b) reports the collapsed
value $1.3338$. With $\sigma > 0$ both of its shares are strictly below
$\beta$: the share at $v_1$ is $\beta - \sigma$, and the share at $u$ is
$1 - \beta + \sigma < \beta$ exactly because $\sigma < 2\beta - 1$. So Step~2
leaves that edge free and seizes only $(u',v')$, onto $u'$. The survivors with
positive share on $u$ are then $(u,v_1)$ of size $1$ and share
$1 - \beta + \sigma$, the half-size edge of share $\beta + \varepsilon$, and
$u$'s two loop halves of size
$\tfrac\beta4 - \tfrac\varepsilon4 - \tfrac\sigma2 < \tfrac12$ and share $1$
each. Their mass is $3 + \varepsilon + \sigma$, so $u$ gets $4$ slots, and
nonincreasing size lays them on $[0, 1-\beta+\sigma)$,
$[1-\beta+\sigma,\ 1+\varepsilon+\sigma)$,
$[1+\varepsilon+\sigma,\ 2+\varepsilon+\sigma)$ and
$[2+\varepsilon+\sigma,\ 3+\varepsilon+\sigma)$, touching slots $\{1\}$,
$\{1,2\}$, $\{2,3\}$ and $\{3,4\}$; sending them to slots $1, 2, 3, 4$ is
valid. The other machines' survivors fit for the same reason as before: $v_1$
and $v_2$ each get three slots and need two, and $u'$ and $v'$ each get two and
need two. Machine $u$ then carries
\[
  1 \;+\; \tfrac12 \;+\; \bigl(\tfrac\beta2 - \tfrac\varepsilon2 - \sigma\bigr)
  \;=\; \tfrac32 + \tfrac\beta2 - \tfrac\varepsilon2 - \sigma ,
\]
again at every $\beta$ in the interval at once. Letting $\varepsilon$ and
$\sigma$ tend to $0$ gives $R_T(\beta) \ge \tfrac32 + \tfrac\beta2$.

Neither family is sensitive to which form of the threshold test is used: in
both, the seized job's share exceeds $\beta$ strictly and the free job's shares
fall short of it strictly, so \cite{SY21}'s strict Local Step
$x > f(p)$ and \cite{WS16}'s $x_{ij} \ge \beta$ agree on every instance
above. The two branches together give $R_T(\beta) \ge g(\beta)$ on
$(\tfrac12, 1)$.

\emph{The value at $2/3$, and the minimizer.} At $\beta = \tfrac23$ the
guarantee is exactly $\tfrac{11}{6}$: Proposition~\ref{prop:ceiling} keeps
every machine strictly below $\tfrac{11}{6}T$ on every instance and every
execution, and Theorem~\ref{thm:ratio}'s family reaches $\tfrac{11}{6} - \delta$
for every $\delta$, on instances where $T_{\mathrm{LP}} = \OPT = 1$. The first
branch of $g$ increases and the second decreases, so $g$ is minimized where
they cross: $\tfrac32 + \tfrac\beta2 = \tfrac52 - \beta$ at $\beta = \tfrac23$,
with $g(\tfrac23) = \tfrac{11}{6}$, and the crossing is transversal, so
$g(\beta) > \tfrac{11}{6}$ strictly for every other $\beta$. Hence for every
$\beta \in (\tfrac12,1) \setminus \{\tfrac23\}$,
\[
  R_T(\beta) \ \ge\ g(\beta) \ >\ \tfrac{11}{6} \ =\ R_T(\tfrac23),
\]
which is both halves of the statement: no $\beta$ does better than
$\tfrac{11}{6}$, and $\tfrac23$ is the only threshold attaining it. The
minimizer is unique over the interval and not merely over a sample, because the
strict inequality above is proved at every $\beta$ and not checked at some.
\end{proof}

The proposition bounds $R_T$ from below, and one scope remark belongs with it:
the suprema $\tfrac52 - \beta$ and $\tfrac32 + \tfrac\beta2$ are approached as
$\varepsilon$ and $\sigma$ vanish, not realized by any single instance; the
families give $g(\beta)$ in the limit, in the same way Theorem~\ref{thm:ratio}
gives $\tfrac{11}{6}$ in the limit.

What the proposition does not do on its own is bound $R_T$ from \emph{above}
away from $\beta = \tfrac23$, and until now the note did not, because
Proposition~\ref{prop:ceiling} is a statement about $\beta = \tfrac23$. The
missing half costs nothing but bookkeeping: its proof never uses the value
$\tfrac23$ except at the last line of each case, so carrying $\beta$ through
instead of substituting it gives the ceiling at every threshold in the interval
--- and the ceiling that comes out is $g(\beta)$ itself.

\noindent
\textbf{Attribution.} The value of the next proposition is \emph{not} new. \cite{WS16}'s own analysis carries $\beta$
through both of its cases and ends at
``$\max\{2.5 - \beta,\ 1.5 + \beta/2\} = 11/6$'' for $\beta = 2/3$: the
$\beta$-dependent bound is theirs, and only its specialisation is what their
theorem states. What the proposition below adds is the case analysis made
explicit --- including the case their two-case argument folds together --- the
\emph{strictness} of every inequality, which is what
Corollary~\ref{cor:betaexact} needs to call $g(\beta)$ an unattained supremum,
and the matching families of Proposition~\ref{prop:beta}, which are new and are
what turn a bound into an exact value. We state it as a proposition rather than
quoting it because the strict form and the third case are used repeatedly below.

\begin{proposition}[the ceiling at every threshold above one half, after \cite{WS16}]\label{prop:betaceil}
Let $\beta \in (\tfrac12, 1)$, let the relaxation be feasible at $T$, let $x$ be
any feasible solution, and let any valid matching be taken in Step~3 of the
scheme run at threshold $\beta$. Then every machine's load is strictly below
$g(\beta)\,T$, with $g(\beta) = \max\{\tfrac32 + \tfrac\beta2,\ \tfrac52 - \beta\}$
as in Proposition~\ref{prop:beta}.
\end{proposition}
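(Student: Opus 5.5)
The plan is to rerun the three-case proof of Proposition~\ref{prop:ceiling} with $\beta$ carried symbolically instead of substituted at the end of each case. Normalize $T = 1$ and fix a machine $i$. If $i$ carries no slot job, its load is $b_i \le 1 < g(\beta)$. Otherwise Lemma~\ref{lem:chain}(i) gives $w_{K_i} > 0$, and I split exactly as before. The cases are: (N0) no Step-2 job and the slot-$1$ occupant small or absent; (N1) no Step-2 job and a big job $q$ matched to slot~$1$; (S) a Step-2 job of size $p$. The first thing I will check is that the split's justification survives at general $\beta > \tfrac12$. An unassigned big job has both shares in $(1-\beta, \beta)$. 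A Step-2 job consumes at least $\beta > \tfrac12$ of the machine's unit big-job budget. Hence no machine receives two Step-2 jobs.

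Case (N0) never uses $\beta$. It bounds the load by $\tfrac12 + L'_i - w_{K_i} < \tfrac32 < g(\beta)$, since $g(\beta) \ge \tfrac32 + \tfrac\beta2 > \tfrac32$.

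Case (N1) runs verbatim through the display. Using $s_z \le \tfrac12$ for $z \ge 2$ (the half-open convention and \cite[Observation~3]{SY21}), the load is $< 2 - x_q/2$ when $K_i \ge 2$. For $K_i = 1$ the load is $p_q \le 1$. Since $q$ survived Step~2, its share on the other machine is $< \beta$, so $x_q > 1 - \beta$. This gives load $< 2 - \tfrac{1-\beta}{2} = \tfrac32 + \tfrac\beta2 \le g(\beta)$.

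Case (S) uses $L'_i \le 1 - \beta p$ from the per-machine notation, together with $s_1 \le \tfrac12$. The latter holds because a machine holding a Step-2 job has no unassigned big fraction: the budget left over is $1 - \beta < \tfrac12$, while any unassigned big job has share $> 1 - \beta$ on each of its machines. Lemma~\ref{lem:chain}(ii) then gives load $< p + \tfrac12 + 1 - \beta p = \tfrac32 + (1-\beta)p \le \tfrac52 - \beta \le g(\beta)$, using $p \le 1$ and $1 - \beta > 0$.

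The step I expect to need the most care is the strict inequalities, because each needs an explicit reason. In (N1) and (S), strictness comes from $w_{K_i} > 0$, so the bound stays strict even when the final relaxation ($x_q > 1 - \beta$, or $p \le 1$) is met with equality. In (N1), the strict inequality $x_q > 1 - \beta$ needs the non-strict test $x_{ij} \ge \beta$: a share equal to $\beta$ would have been seized. I will also recheck that "all big fraction lies in slot~$1$" still needs only the unit big-job budget and not $\beta = \tfrac23$. It does, so the half-open convention enters exactly as before.

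Taking the maximum over the cases gives load $< g(\beta)$. The (N1) case supplies the increasing branch of $g$ and the (S) case the decreasing one, which matches the two families of Proposition~\ref{prop:beta}.
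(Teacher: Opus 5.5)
Your proposal is correct and follows the paper's proof essentially line for line. It runs the three cases of Proposition~\ref{prop:ceiling} with $\beta$ carried through: $x_q > 1-\beta$ in (N1) gives $\tfrac32 + \tfrac\beta2$, and in (S) the budget argument (any unassigned big share on $i$ exceeds the leftover budget $1-\beta$) gives $s_1 \le \tfrac12$, which with $L'_i \le 1-\beta p$ and Lemma~\ref{lem:chain}(ii) yields $\tfrac32 + (1-\beta)p \le \tfrac52 - \beta$, and you correctly take strictness from $w_{K_i} > 0$.
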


\begin{proof}
Normalize $T = 1$ and fix a machine $i$. Above $\tfrac12$ no machine holds two
Step-2 jobs, since two shares of at least $\beta$ sum to more than the big-job
budget of $1$; so the three cases of Proposition~\ref{prop:ceiling} are
exhaustive here for the same reason they are there, and we take them in turn
with $\beta$ carried rather than set to $\tfrac23$.

\emph{(N0) No Step-2 job, and the slot-$1$ occupant is small or absent.}
Nothing in that case mentions the threshold: the load is below
$\tfrac32$, which is below $\tfrac32 + \tfrac\beta2$.

\emph{(N1) No Step-2 job, and a big job $q$ is matched to slot $1$.} The case's
own computation gives load $< 2 - x_q/2$, and every step of it --- all big
fraction in slot~$1$, hence $s_z \le \tfrac12$ for $z \ge 2$; slot~$1$ full;
Lemma~\ref{lem:chain}(i) on the rest --- is a fact about the sort order and the
budget, with no $\beta$ in it. What the threshold controls is the last step
alone: $q$ was not seized, so both its shares are below $\beta$, and since they
sum to $1$ we get $x_q > 1 - \beta$ and load $< 2 - \tfrac{1-\beta}{2}
= \tfrac32 + \tfrac\beta2$.

\emph{(S) A Step-2 job of size $p$.} The machine carries no unassigned big
fraction, by \cite{WS16}'s observation in the form the general threshold needs:
an unassigned big job $q$ on $i$ has $x_{iq} > 1 - \beta$, the Step-2 job
already holds at least $\beta$, and $\beta + (1 - \beta) = 1$ exhausts the
budget. So $s_1 \le \tfrac12$, and the Step-2 job contributes at least
$\beta p$ to the relaxation load, leaving $L'_i \le 1 - \beta p$.
Lemma~\ref{lem:chain}(ii) then gives
load $< p + \tfrac12 + 1 - \beta p = \tfrac32 + (1 - \beta) p \le \tfrac52 - \beta$,
using $p \le 1$.

The largest of the three bounds is $g(\beta)$, and each is strict.
\end{proof}

\noindent
In words: above one half the worst machine is either one that seized a big job,
which costs $\tfrac52 - \beta$, or one that had to round a big job it could not
seize, which costs $\tfrac32 + \tfrac\beta2$; a high threshold makes the first
cheap and the second dear, and a low one does the reverse. The two costs are
equal at $\beta = \tfrac23$, and that crossing --- not the number $\tfrac{11}{6}$
--- is why the published threshold is the right one. Since the same two
expressions are what Proposition~\ref{prop:beta}'s families realize from below,
the guarantee is pinned.

\begin{corollary}[the guarantee above one half, exactly]\label{cor:betaexact}
$R_T(\beta) = g(\beta)$ for every $\beta \in (\tfrac12, 1)$, and the value is a
supremum that no execution attains. In particular
$R_T(\tfrac23) = \tfrac{11}{6}$, which is Proposition~\ref{prop:ceiling}
together with Theorem~\ref{thm:ratio}.
\end{corollary}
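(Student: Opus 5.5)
The corollary follows by sandwiching $R_T(\beta)$ between Proposition~\ref{prop:beta} and Proposition~\ref{prop:betaceil}. No new construction is needed; the work is in confirming that the two bounds are measured against the same quantity.

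\emph{Lower bound.} Fix $\beta \in (\tfrac12, 1)$. Proposition~\ref{prop:beta} gives $R_T(\beta) \ge g(\beta)$. Its families have $T_{\mathrm{LP}} = 1$, by the load-sum argument in its first preliminary, so the ratios they exhibit are against the least feasible target, as $R_T$ requires.

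\emph{Upper bound.} Fix any instance $I$ and take $T = T_{\mathrm{LP}}(I)$. This is a minimum, as shown in Section~\ref{sec:prelim}, so the relaxation is feasible there. Proposition~\ref{prop:betaceil} applies to every feasible $x$ and every valid matching $M$, and gives $\ALG(I,x,M) < g(\beta)\,T_{\mathrm{LP}}(I)$. Dividing yields ratio $< g(\beta)$ for every execution, so $R_T(\beta) \le g(\beta)$. Combined with the lower bound, $R_T(\beta) = g(\beta)$.

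\emph{Non-attainment.} A supremum over a set of values each strictly below $g(\beta)$ can equal $g(\beta)$ without being attained. Here the strictness is uniform over instances and executions, because Proposition~\ref{prop:betaceil} is strict per machine. That strictness ultimately rests on $w_{K_i} > 0$ in Lemma~\ref{lem:chain}, which in turn uses positive job sizes. Hence no execution attains $g(\beta)$.

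\emph{Specialisation and the main check.} At $\beta = \tfrac23$ both branches of $g$ equal $\tfrac{11}{6}$. There the statement reduces to Proposition~\ref{prop:ceiling}, which gives the strict ceiling, and Theorem~\ref{thm:ratio}, which gives the approaching family with $T_{\mathrm{LP}} = \OPT$.

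The only step needing care is that the ceiling is applied at $T_{\mathrm{LP}}(I)$, where the execution space $\mathcal{E}_\beta$ is defined. I expect no real obstacle beyond confirming that the feasibility hypothesis of Proposition~\ref{prop:betaceil} holds at that target, which the minimum argument of Section~\ref{sec:prelim} supplies.
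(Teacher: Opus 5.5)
Your proposal is correct and matches the paper's proof: the lower bound comes from Proposition~\ref{prop:beta}, the upper bound from Proposition~\ref{prop:betaceil} applied at $T = T_{\mathrm{LP}}(I)$ with strict inequality on every execution, and non-attainment follows from that strictness. One small precision: the strictness needs $w_{K_i} > 0$ from Lemma~\ref{lem:chain} only in case (S); in (N0) it already follows from $\tfrac32 < g(\beta)$, and in (N1) from $x_q > 1 - \beta$ alone.
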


\begin{proof}
Proposition~\ref{prop:beta} gives $R_T(\beta) \ge g(\beta)$ and
Proposition~\ref{prop:betaceil} gives $R_T(\beta) \le g(\beta)$, the latter with
strict inequality on every execution, so the supremum is $g(\beta)$ and is not
attained.
\end{proof}

Measurement~\ref{meas:sy} is now a control on the algebra above rather than the
evidence for it. Its fifty-point sweep and its eighteen shifted instances agree
with the two closed forms exactly, which is what one wants of a control; but
the proposition no longer rests on them, and its quantifier over
$(\tfrac12, 1)$ is discharged by the interval argument, not by the grid.

Our own Theorem~\ref{thm:ratio} gives less on the $\beta$-separation axis and
more on the optimum-relative one. Its family has big-job shares
$\tfrac13 + \delta$ and $\tfrac23 - \delta$, so Step~2 does not fire for any
$\beta > \tfrac23 - \delta$; the scheme is then Step~3 alone and returns
$\tfrac{11}{6} - \delta$ against $\OPT = T = 1$. That is $\tfrac{11}{6}$ and no
more, so it does not separate $\beta > 2/3$ from $\beta = 2/3$; but it holds
against the \emph{optimum}, where the edited Figure~4(a) reaches only about
$1.38$. That left the threshold question open on the optimum-relative scale,
and the next theorem closes it.

\begin{lemma}[the nested collector family]\label{lem:nested}
Let $\alpha \in (0, 1]$ and let $m \ge 1$ be an integer with
$m\alpha \le 1$ and $m\alpha > \tfrac12$. For every
$\varepsilon \in (0, \tfrac12)$ there is an instance with
$T_{\mathrm{LP}} = \OPT = T$, and on it a legal execution of the scheme at any
threshold $\beta$ with $0 < \beta \le \alpha$, whose makespan is
\[
m(1-\varepsilon) + \tfrac12 + b ,
\qquad
b \ =\ 1 - m\alpha(1-\varepsilon) - \tfrac{\varepsilon}{2} ,
\]
which collects to
\[
\tfrac32 + (1-\alpha)m \ -\ \varepsilon\bigl(m(1-\alpha) + \tfrac12\bigr) ,
\]
strictly below $\tfrac32 + (1-\alpha)m$ for every $\varepsilon > 0$ and
increasing to it as $\varepsilon \downarrow 0$. The execution is legal under the
strict reading $x_{ij} > \beta$ as well whenever $\beta < \alpha$.
\end{lemma}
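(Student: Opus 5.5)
The plan is to build the instance explicitly around one \emph{collector} machine $c$, normalize $T=1$, and exhibit one feasible $x$, one Step-2 choice and one valid matching that put everything on $c$. As in Proposition~\ref{prop:beta}, a lower bound needs only one execution, so I never maximize over matchings.

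\emph{The instance.} The collector $c$ receives three kinds of jobs.
\begin{itemize}
\item[(a)] Big jobs $g_1,\dots,g_m$ of size $1-\varepsilon$; $g_k$ is allowed on $\{c,d_k\}$ with a private partner $d_k$. They are big because $\varepsilon<\tfrac12$.
\item[(b)] One job $h$ of size $\tfrac12$ on $\{c,e\}$, with $e$ a private partner.
\item[(c)] One job $r$ of size $b$ allowed on $c$ only.
\end{itemize}
I also add a single job of size $1$ on a dedicated machine, to pin $T_{\mathrm{LP}}\ge 1$ and $\OPT\ge 1$.

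I take $x_{g_k c}=\alpha$, $x_{g_k d_k}=1-\alpha$, $x_{hc}=\varepsilon$, $x_{he}=1-\varepsilon$ and $x_{rc}=1$. Machine $c$'s load is then $m\alpha(1-\varepsilon)+\tfrac\varepsilon2+b=1$ by the definition of $b$. Its big-job budget is $m\alpha\le 1$, which is exactly the hypothesis $m\alpha\le1$. Each $d_k$ carries $(1-\alpha)(1-\varepsilon)\le 1$ and big fraction $1-\alpha\le1$, and $e$ carries $\tfrac12(1-\varepsilon)$.

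Two sign checks are routine.
\begin{itemize}
\item $b>0$: since $m\alpha\le1$, $b\ge 1-(1-\varepsilon)-\tfrac\varepsilon2=\tfrac\varepsilon2$.
\item $b\le\tfrac12$, so $r$ is small and does not touch the big budget: $b-\tfrac12=(m\alpha-\tfrac12)(\varepsilon-1)\le 0$, which uses $m\alpha>\tfrac12$. This is where that hypothesis enters.
\end{itemize}
The schedule $g_k\mapsto d_k$, $h\mapsto e$, $r\mapsto c$ has makespan $1$, so $\OPT=T_{\mathrm{LP}}=1$.

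\emph{The execution.} Every $g_k$ has share $\alpha\ge\beta$ on $c$, so Step~2 may seize all of them onto $c$. When $\beta\le\tfrac12$ the partner share $1-\alpha$ may also reach $\beta$, and this is precisely the Step-2 choice (the triple $(x,A,M)$ of Section~\ref{sec:below}) that we resolve in favour of $c$. Under the strict test $x_{ij}>\beta$ the seizure is still available when $\beta<\alpha$, which is the last sentence of the statement. Note that $c$ then holds $m$ Step-2 jobs, which is allowed below one half.

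After deletion, $c$'s surviving mass is $h$ on $[0,\varepsilon)$ followed by $r$ on $[\varepsilon,1+\varepsilon)$. If $b=\tfrac12$ there is a size tie, which I break by placing $h$ first. This gives $\lceil1+\varepsilon\rceil=2$ slots, with $h$ adjacent to slot~1 and $r$ adjacent to slots 1 and 2 under the half-open convention. Matching $h\mapsto$ slot~1 and $r\mapsto$ slot~2 is valid. Machine $e$ keeps its own slot and receives nothing, and the other machines are unaffected, so the matching extends to all surviving jobs. The load on $c$ is $m(1-\varepsilon)+\tfrac12+b$.

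The stated closed form then follows by substituting $b$ and collecting the $\varepsilon$ terms. Since $m(1-\alpha)+\tfrac12>0$, the makespan is strictly below $\tfrac32+(1-\alpha)m$ and increases to it as $\varepsilon\downarrow0$.

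\emph{Main obstacle.} The main risk is not the arithmetic but the legality of Step~2 below one half. Seizing several big jobs onto one machine, and choosing $c$ over a partner whose share also reaches $\beta$, must be permitted by the extended eligibility rule that Section~\ref{sec:below} sets out. I would check this against that rule first. If the rule forbade choosing $c$ when both shares qualify, I would fall back to restricting to $\beta<\alpha$ with $1-\alpha<\beta$ where possible, or to making the partners unable to qualify, for example by giving $g_k$ a smaller partner share balanced by a filler on $d_k$. In that fallback I would re-verify that $\OPT$ stays $1$.
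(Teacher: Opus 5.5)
Your construction is the paper's own nested collector family, and your argument is correct: $m$ big jobs of size $1-\varepsilon$ at share $\alpha$ on the collector, a non-big rider of size exactly $\tfrac12$ with sliver $\varepsilon$, a filler of size $b$ at share $1$, and a unit pin. It differs only cosmetically in that the paper lets the filler also use a spare machine $f$ at share $0$, and the Step-2 choice you flag as the main risk is legal by the paper's definition (a big job may go to any machine where its share reaches $\beta$), so no fallback is needed. One sharpening: your factorization $b-\tfrac12=(m\alpha-\tfrac12)(\varepsilon-1)$ is strictly negative because $m\alpha>\tfrac12$ and $\varepsilon<1$, so the tie case $b=\tfrac12$ you break by hand never arises and the slot order is forced.
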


\begin{proof}
Normalize $T = 1$ and write $p = 1 - \varepsilon$ for the big-job size and
$\mu = \varepsilon$ for the rider's sliver of fraction. Since
$b = (1 - m\alpha) + \varepsilon\bigl(m\alpha - \tfrac12\bigr)$ and
$m\alpha \in (\tfrac12, 1]$, both terms are nonnegative and the second is
positive, so $b > 0$; and
$b - \tfrac12 = -\bigl(m\alpha - \tfrac12\bigr)(1 - \varepsilon) < 0$, so
$b < \tfrac12$. The filler is therefore not big and sorts strictly after the
rider, which is what makes the slot structure below unambiguous.

The instance has $m + 4$ machines --- a \emph{collector} $0$, partners
$1, \dots, m$, a home $h$ for the rider, a home $f$ for the filler, and a pin
machine --- and $m + 3$ jobs:
\begin{itemize}
\item $m$ big jobs of size $p$, the $i$th allowed on $\{0, i\}$, with shares
      $\alpha$ on the collector and $1 - \alpha$ on its partner. They are big
      because $\varepsilon < \tfrac12$;
\item the \emph{rider}, of size exactly $\tfrac12$ --- not big --- allowed on
      $\{0, h\}$, with shares $\mu$ on the collector and $1 - \mu$ on $h$;
\item one \emph{filler}, of size $b$, allowed on $\{0, f\}$, with share $1$ on
      the collector and share $0$ on $f$;
\item a \emph{pin} of size exactly $1$, alone on its machine.
\end{itemize}

\emph{Feasibility, threshold and optimum.} The collector's load is
$m\alpha p + \tfrac{\mu}{2} + b = 1$ exactly, by the definition of $b$, and its
big-job budget is $m\alpha \le 1$. Partner $i$ carries load $(1-\alpha)p$ and
big budget $1 - \alpha$; $h$ carries $(1-\mu)/2$; $f$ carries nothing; the pin
machine carries exactly $1$. So the displayed solution is feasible at $T = 1$,
and the pin --- which no smaller target can place --- gives
$T_{\mathrm{LP}} = 1$. Sending each big job to its own partner, the rider to
$h$, the filler to $f$ and the pin to its machine schedules the instance at
makespan $1$, so $\OPT = 1$ as well.

\emph{The execution.} Every big job's collector share is $\alpha \ge \beta$, so
a legal Step-2 assignment puts all $m$ of them on the collector; if
$\beta < \alpha$ the same assignment is legal under the strict reading. Their
fractions are deleted from $x$. What survives on the collector is the rider's
$\mu$ and the filler's $1$, a fractional mass of $1 + \mu$, so
$\lceil 1 + \mu\rceil = 2$ slots open there. Nonincreasing size puts the rider
first, because $\tfrac12 > b$ \emph{strictly}, so the rider occupies
$[0, \mu)$ and the filler $[\mu, 1 + \mu)$: the rider touches slot~$1$ only,
and the filler touches both. The matching that sends the rider to slot~$1$ and
the filler to slot~$2$ is therefore valid; it is also the only complete one,
since the rider has no other slot to go to. The collector finishes at
\[
mp + \tfrac12 + b
\ =\ \tfrac32 + (1-\alpha)m \ -\ \varepsilon\bigl(m(1-\alpha) + \tfrac12\bigr) ,
\]
which is the claim. Every other machine carries at most $1$.
\end{proof}

\noindent
Two parameters do the work of three constructions. Taking $\alpha = \beta$ and
$m = \lfloor 1/\beta\rfloor$ gives the family that closes the interval below one
half (Theorem~\ref{thm:belowexact}); taking $m = 1$ gives the decreasing branch
above it, immediately below; and taking $\alpha = \beta + 2\delta$ with
$m = \lceil 1/\beta\rceil - 1$ gives the strict-rule family of
Proposition~\ref{prop:strictbelow}. What the earlier draft of this note left
unsaid --- how many fillers there are, what each one's fraction is, and which
one is adjacent to which slot --- does not arise, because there is one filler
and its fraction is the whole of what the seized jobs and the rider leave.

\begin{theorem}[above one half, against the optimum]\label{thm:optexact}
For every $\beta \in (\tfrac12, 1)$ the guarantee against the \emph{optimum} is
also exactly $g(\beta) = \max\{\tfrac32 + \tfrac\beta2,\ \tfrac52 - \beta\}$, a
supremum that no execution attains.
\end{theorem}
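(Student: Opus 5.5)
The plan is to prove the upper bound by transferring Proposition~\ref{prop:betaceil} from $T_{\mathrm{LP}}$ to $\OPT$, and the lower bound by one optimum-tight family per branch of $g$.

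\emph{Upper bound.} Run the scheme at $T = T_{\mathrm{LP}}(I)$. As recorded in Section~\ref{sec:prelim}, $T_{\mathrm{LP}}(I) \le \OPT(I)$, because an integral schedule of makespan $\OPT$ is itself a feasible fractional solution at $\OPT$. Proposition~\ref{prop:betaceil} gives, for every feasible $x$ and every valid matching, that each machine's load is strictly below $g(\beta)\,T_{\mathrm{LP}}$. Hence $\ALG < g(\beta)\,\OPT$ on every execution. This is the whole ceiling, and it is strict, so no execution attains $g(\beta)$. It remains to show that the supremum reaches $g(\beta)$, using families on which $T_{\mathrm{LP}} = \OPT$.

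\emph{Decreasing branch $\tfrac52 - \beta$.} Apply Lemma~\ref{lem:nested} with $\alpha = \beta$ and $m = 1$. Its hypotheses hold: $m\alpha = \beta \le 1$ and $m\alpha = \beta > \tfrac12$. The lemma gives $T_{\mathrm{LP}} = \OPT$ and a legal execution of makespan
\[
\tfrac32 + (1-\beta) - \varepsilon\bigl((1-\beta) + \tfrac12\bigr),
\]
which increases to $\tfrac52 - \beta$ as $\varepsilon \downarrow 0$. The execution uses the non-strict test $x_{ij} \ge \beta$, which is Wang and Sitters' rule, so no further care is needed here.

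\emph{Increasing branch $\tfrac32 + \tfrac\beta2$.} Generalize Theorem~\ref{thm:ratio}. Take machines $\{0,1,2\}$ and three jobs: $q$ of size $1$ on $\{0,1\}$, $h$ of size $\tfrac12$ on $\{0,2\}$, and $r$ of size $\tfrac\beta2 - \delta$ on $\{0\}$ only. Take $0 < \delta < \min\{2\beta - 1,\ \tfrac\beta2\}$. Set the shares
\[
x_{q0} = 1 - \beta + \delta,\quad x_{q1} = \beta - \delta,\quad x_{h0} = \beta,\quad x_{h2} = 1 - \beta,\quad x_{r0} = 1 .
\]
\begin{itemize}
\item \emph{Feasibility.} Machine~$0$'s load is $(1-\beta+\delta) + \tfrac\beta2 + \tfrac\beta2 - \delta = 1$. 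The only big job is $q$, since $h$ has size exactly $\tfrac12$. So the point is feasible at $T = 1$.
\item \emph{Threshold and optimum.} The size-$1$ job forces $T_{\mathrm{LP}} \ge 1$. The schedule $q \mapsto 1$, $h \mapsto 2$, $r \mapsto 0$ has makespan $1$, so $T_{\mathrm{LP}} = \OPT = 1$.
\item \emph{Step~2 does not fire.} We have $1 - \beta + \delta < \beta$ because $\delta < 2\beta - 1$, and $\beta - \delta < \beta$. This holds under either form of the test.
\item \emph{Slot structure.} Machine~$0$ has mass $2 + \delta$, hence three slots. Nonincreasing size lays out $q$ on $[0, 1-\beta+\delta)$, $h$ on $[1-\beta+\delta,\ 1+\delta)$ and $r$ on $[1+\delta,\ 2+\delta)$. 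So $q$ touches slot~$1$, $h$ touches slots $\{1,2\}$ and $r$ touches slots $\{2,3\}$.
\item \emph{The bad matching.} Sending $q, h, r$ to slots $1, 2, 3$ is a valid matching. Machine~$0$ then finishes at $1 + \tfrac12 + \tfrac\beta2 - \delta$, which tends to $\tfrac32 + \tfrac\beta2$ as $\delta \downarrow 0$.
\end{itemize}

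The step needing the most care is this second family. One must check that every ordering and adjacency comparison holds uniformly for all $\beta \in (\tfrac12, 1)$ once $\delta$ is small. In particular $r$ must sort strictly after $h$, which needs $\tfrac\beta2 - \delta < \tfrac12$; this is automatic. Also $q$'s share on machine~$0$ must lie inside slot~$1$, which holds since $1 - \beta + \delta < 1$. The other machines carry no slot jobs in this matching, so they impose no constraint. Combining the two branches with the strict ceiling gives a supremum of exactly $g(\beta)$ against the optimum, attained by no execution.
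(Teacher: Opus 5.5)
Your proof is correct and follows the paper's own argument: the strict ceiling of Proposition~\ref{prop:betaceil} transferred via $T_{\mathrm{LP}} \le \OPT$, Lemma~\ref{lem:nested} with $\alpha = \beta$ and $m = 1$ for the decreasing branch, and for the increasing branch the $\beta$-parametrized three-rung chain with the paper's shares and its constraint $\delta < \min\{2\beta - 1,\ \beta/2\}$. The only difference is that you omit the paper's separate pin machine, which is harmless, since $p_q = 1$ already forces $T_{\mathrm{LP}} \ge 1$, exactly as in the proof of Theorem~\ref{thm:ratio}.
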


\begin{proof}
Proposition~\ref{prop:betaceil} gives the upper half, with strictness: the
scheme is run at $T = T_{\mathrm{LP}}(I)$, so
\[
  \ALG \ <\ g(\beta)\, T_{\mathrm{LP}} \ \le\ g(\beta)\, \OPT ,
\]
the second step by $T_{\mathrm{LP}} \le \OPT$. Writing the substitution out
leaves no room to ask why a bound at some feasible target should bound the
ratio to the optimum: the target is not just any feasible one. What was
missing was families binding against the optimum, and the obstruction was never
the mathematics. Proposition~\ref{prop:beta}'s families pin the least feasible
threshold by saturating their partner machines with forced singletons, and
those singletons inflate the optimum: the ratio to $T$ survives, the ratio to
$\OPT$ does not. Replace that wall with a \emph{pin} --- one job of size exactly
$T$, alone on its own machine. It forces $T_{\mathrm{LP}} = T$ just as well,
since at any $T' < T$ the relaxation cannot place it, and it costs the optimum
nothing, because it has a machine to itself. Each branch of $g$ then has a
family with $\OPT = T$.

\emph{The decreasing branch, $\tfrac12 < \beta \le \tfrac23$.} Apply
Lemma~\ref{lem:nested} with $\alpha = \beta$ and $m = 1$: the hypotheses hold
because $m\alpha = \beta \in (\tfrac12, 1]$. The collector seizes one big job at
share exactly $\beta$, the rider takes the first slot and the filler the second,
and the makespan tends to $\tfrac32 + (1-\beta) = \tfrac52 - \beta$, which is
$g(\beta)$ on this branch. The lemma is stated for every admissible
$(\alpha, m)$ rather than only below one half, so this is an instance of it and
not an extension of it.

\emph{The increasing branch, $\tfrac23 \le \beta < 1$.} Use the same three-rung chain as
Theorem~\ref{thm:ratio}, together with a private pin of size $T$ --- the pin
replaces the forced singletons of \emph{Proposition~\ref{prop:beta}}'s families,
not anything in Theorem~\ref{thm:ratio}, whose three jobs are $q$, $h$ and $r$
and which needs no wall. Normalize $T = 1$. Fix $\delta$ with
$0 < \delta < \min\{2\beta - 1,\ \beta/2\}$, which is a nonempty range because
$\beta > \tfrac12$; the first bound is what makes $q$'s two shares fall
\emph{strictly} below $\beta$, so Step~2 misses it, and the second keeps $r$'s
size positive. Four machines: the
collector $0$, a partner $P$, a home $H$, and the pin's own machine. The jobs
are these.
\begin{itemize}
\item $q$, of size $1$, allowed on $\{0, P\}$, with shares
$x_{q0} = 1 - \beta + \delta$ and $x_{qP} = \beta - \delta$ --- \emph{both}
strictly below $\beta$, so Step~2 misses it and it enters the slot structure.
\item $h$, of size $\tfrac12$, allowed on $\{0, H\}$, with shares
$x_{h0} = \beta$ and $x_{hH} = 1 - \beta$. Its size does not exceed $T/2$, so it
is not big and Step~2 does not consider it whatever its shares.
\item $r$, of size $\tfrac\beta2 - \delta$, allowed on $\{0\}$ alone.
\item the pin, of size $1$, allowed on its own machine alone.
\end{itemize}
The shares on the collector sum to
$(1 - \beta + \delta) + \beta + 1 = 2 + \delta$, so
$\lceil 2 + \delta\rceil = 3$ slots are opened there --- it is this
\emph{fractional mass} that sets the slot count, while the collector's weighted
fractional load is $1 \cdot (1 - \beta + \delta) + \tfrac12 \cdot \beta +
(\tfrac\beta2 - \delta) = 1$ exactly, as feasibility at $T = 1$ requires. The
matching that takes the three jobs in nonincreasing order finishes at
$1 + \tfrac12 + \tfrac\beta2 - \delta \to \tfrac32 + \tfrac\beta2$.

Both families have $\OPT = T$, by an exhibited schedule rather than by each
job fitting somewhere on its own. On the increasing branch send $q \mapsto P$,
$h \mapsto H$, $r \mapsto 0$ and the pin to its own machine: the loads are
$1$, $\tfrac12$, $\tfrac\beta2 - \delta$ and $1$, so the makespan is $1$. On the
decreasing branch the schedule is Lemma~\ref{lem:nested}'s, which sends each big
job to its own partner, the rider to $h$, the filler to $f$ and the pin to its
machine.
So the supremum against the optimum is at least $g(\beta)$ on each branch, and
Proposition~\ref{prop:betaceil} caps it at $g(\beta)$ strictly.
\end{proof}

\begin{measurement}[the optimum-relative families, checked
exactly]\label{meas:optexact}
\texttt{p29sup\_beta\_below\_half.py} builds both families at
$\beta \in \{\tfrac{11}{20}, \tfrac35, \tfrac58, \tfrac23, \tfrac7{10},
\tfrac34, \tfrac45, \tfrac9{10}\}$ in exact rational arithmetic and checks, at
each: that the relaxation solution is feasible at $T$; that a job of size
exactly $T$ pins the threshold; that $\OPT = T$ \emph{by exhaustive enumeration}
rather than by an exhibited schedule; and the makespan over every legal Step-2
assignment and every valid matching. Each run also asserts that the makespan
falls \emph{strictly below} $g(\beta)$ --- a construction that reached it would
be refuting Proposition~\ref{prop:betaceil} rather than confirming the family,
and the check fails rather than reports agreement --- and that it falls short by
less than the perturbation. The shortfalls run from $0.00038$ to $0.001$.

Above one half the two scales therefore agree: the guarantee against the
relaxation's value and against the optimum are both $g(\beta)$. Below one half
they agreed already, since the families of Section~\ref{sec:below} have
$T_{\mathrm{LP}} = \OPT$. So the scheme's worst case is now known against the
optimum at \emph{every} threshold in $(0,1)$, and $\tfrac23$ minimizes it on
both scales.
\end{measurement}

\section{Below one half, the guarantee is unbounded as \texorpdfstring{$\beta \to 0$}{beta -> 0}}\label{sec:below}

Proposition~\ref{prop:beta} and Corollary~\ref{cor:betaexact} settle the
interval $(\tfrac12, 1)$. They do not reach the question of whether a threshold
$\beta \le \tfrac12$ improves the guarantee, and one half is not an arbitrary
place to stop: it is where two facts about the relaxation go slack at once.
A machine's big-job budget is $1$, so two shares of at least $\beta$ fit on one
machine exactly when $2\beta \le 1$; and a single big job's two shares sum to
$1$, so both of them reach $\beta$ exactly when $2\beta \le 1$ as well. Above
one half neither can happen and Step~2 is a rule that fires or does not. At one
half and below, both happen, and Step~2 becomes a choice.

That choice has to be added to the object under study. Where
Section~\ref{sec:prelim} writes an execution as a pair $(x, M)$, at
$\beta \le \tfrac12$ it is a triple $(x, A, M)$ with $A$ a legal Step-2
assignment --- legal meaning that each big job goes to a machine where its share
reaches $\beta$. Above one half $A$ is determined by $x$, so this extends the
definition rather than replacing it. It is also not a technicality: the whole of
what follows is what the third coordinate costs.

The answer is that below one half no single constant covers every threshold.
At each fixed $\beta$ the guarantee is a constant, and an exact one; what fails
is uniformity in $\beta$. It grows like $1/\beta$, it is unbounded as the
threshold falls, and it jumps at one half itself. The reason is a count rather
than a chain.

\begin{lemma}[below one half, Step~2 seizes everything]\label{lem:allseized}
Let $\beta \le \tfrac12$. Then every big job is assigned in Step~2, so every job
entering the slot structure has size at most $\tfrac12$, and $s_z \le \tfrac12$
for every slot on every machine. Cases (N0) and (N1) of
Proposition~\ref{prop:ceiling} are therefore empty.
\end{lemma}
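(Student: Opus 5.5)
The plan is a pigeonhole on the two shares of each big job, followed by a bookkeeping step through the slot order.

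Fix $\beta \le \tfrac12$, a feasible $x$ at $T = 1$, and a big job $j$, so $p_j > \tfrac12$. If $p_j > 1$ the relaxation forbids any fraction of $j$, so feasibility already excludes it and we may take $p_j \le 1$. Job $j$ has at most two machines, and its shares sum to $1$ by the assignment equality. If $j$ has a single machine its share there is $1 \ge \beta$. If it has two, the larger share is at least $\tfrac12 \ge \beta$. Either way $j$ has a machine whose share reaches $\beta$, so every legal Step-2 assignment $A$ seizes $j$. Legality here is in the sense just introduced: each big job goes to a machine where its share reaches $\beta$, and one such machine always exists.

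The one point to check is that ``assigned in Step~2'' is read per job. Below one half a machine may receive several big jobs, and the per-machine exclusivity argument of Step~2 no longer applies. So I would state explicitly that Step~2 here is the rule applied job by job under the triple $(x, A, M)$. Capacity is not a constraint in that rule, and the makespan consequences are what the rest of the section prices.

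Once every big job is seized, its fractions are deleted from $x$ on both of its machines. Every job entering the slot structure is therefore small, of size at most $\tfrac12$. Since $s_z$ is the largest size of a job adjacent to slot $z$, and adjacency is only to jobs with surviving fraction, $s_z \le \tfrac12$ for every slot on every machine. Under the half-open convention a job adjacent to a slot has positive-length overlap with it, so it is a surviving job. Hence no big job is adjacent to any slot, and in particular no big job can be matched to slot~$1$. That empties case (N1).

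For (N0), I would observe that it requires ``no Step-2 job'' on the machine. Read literally, that case is not empty: a machine carrying only small slot jobs and no seized big job falls into it. So I would state the intended claim precisely. The case split of Proposition~\ref{prop:ceiling} organized around an unseized big job is what vanishes, namely (N1). The exhaustiveness argument that separated (N0) from (S) by whether a big job occupies slot~1 also collapses, since every slot occupant is small. A machine without a Step-2 job is then bounded as in (N0) by $\tfrac32$, with no threshold dependence.

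If the paper's intended sense is that every machine relevant to the worst case holds Step-2 jobs, I would phrase the conclusion as follows. After the lemma, the only cases that can exceed $\tfrac32$ are Step-2 machines. This reconciliation of the word ``empty'' with the literal case definitions is the one step I expect to need care. The mathematical content, that all big jobs are seized and $s_z \le \tfrac12$, is the pigeonhole above.
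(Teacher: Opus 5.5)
Your argument is the paper's: the larger of a big job's at most two shares is at least $\tfrac12 \ge \beta$, so every legal Step-2 assignment seizes it, and deleting its fractions on both machines leaves only jobs of size at most $\tfrac12$ in the slot structure. The half-open convention plays no role here, since seized jobs never enter the structure under either reading. Your caveat is also right: a machine with slot jobs and no seized job still falls in (N0), so only (N1) is literally empty, and the paper's proof does not address that sentence. The slip is harmless, because Theorem~\ref{thm:belowceil} handles such machines with $B_i = 0$ and uses nothing from this lemma beyond $s_1 \le \tfrac12$.
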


\begin{proof}
A big job's fraction is supported on at most two machines and sums to $1$, so
its larger share is at least $\tfrac12 \ge \beta$ and Step~2's rule fires there.
A job seized in Step~2 is deleted from $x$ on both of its machines, so no big
fraction survives into the slot structure.
\end{proof}

\noindent
So the three-case analysis collapses to one case, and what replaces the other
two is the question of how much a single machine can be made to swallow. The
budget answers it: shares of at least $\beta$ that sum to at most $1$ number at
most $\lfloor 1/\beta \rfloor$, and each can carry a whole job of size up to
$T$.

\begin{theorem}[the ceiling below one half]\label{thm:belowceil}
Let $\beta \le \tfrac12$, write $k = \lfloor 1/\beta \rfloor$, let the
relaxation be feasible at $T$, and let $x$, a legal Step-2 assignment and a
valid matching be arbitrary. Then every machine's load is strictly below
$\bigl(\tfrac32 + (1 - \beta)k\bigr) T$.
\end{theorem}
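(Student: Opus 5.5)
Normalize $T = 1$ and fix a machine $i$. The plan is to run the single surviving case, the analogue of case~(S) of Proposition~\ref{prop:ceiling}, with one change: machine $i$ may now hold several Step-2 jobs instead of at most one. By Lemma~\ref{lem:allseized}, every job entering the slot structure has size at most $\tfrac12$. So $s_1 \le \tfrac12$ and $s_z \le \tfrac12$ for every $z$, whatever Step~2 did. The only new ingredients are therefore a count of the Step-2 jobs on $i$ and a bound on the relaxation load they leave behind.

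First I will bound the count. Let $A_i$ be the set of big jobs that the legal assignment $A$ places on $i$, with sizes $p_1, \dots, p_n$. Each such job has $x_{ij} \ge \beta$, because that is what legality means. The third relaxation constraint gives $\sum_{j \in A_i} x_{ij} \le 1$, so $n\beta \le 1$ and hence $n \le \lfloor 1/\beta \rfloor = k$. Next I will bound the load that remains. Let $L'_i$ be the relaxation load on $i$ of the jobs that enter the slot structure. The load constraint gives
\[
L'_i \ \le\ 1 - \sum_{j \in A_i} x_{ij} p_j \ \le\ 1 - \beta \sum_{j \in A_i} p_j .
\]
If $i$ carries no slot job, its load is $\sum_{j \in A_i} p_j \le n \le k$, which is below the claimed bound. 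Otherwise I apply Lemma~\ref{lem:chain}(ii) with $b_i$ replaced by $B = \sum_{j \in A_i} p_j$. The lemma's proof only adds the Step-2 sizes to the slot sizes, so the replacement is legitimate. Together with $s_1 \le \tfrac12$ and $w_{K_i} > 0$, this gives
\[
\text{load} \ <\ B + \tfrac12 + 1 - \beta B \ =\ \tfrac32 + (1-\beta) B \ \le\ \tfrac32 + (1-\beta) k ,
\]
using $B \le n \le k$ because each $p_j \le 1$, and $1 - \beta > 0$.

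The step that needs care is the reuse of Lemma~\ref{lem:chain}. As stated, the lemma has a single $b_i$, and its strictness $w_{K_i} > 0$ needs at least one slot job. I will restate~(ii) for a Step-2 load $B$ made of several jobs and check that the chain $s_z \le w_{z-1}$ never looks at which jobs were seized, only at the surviving fractions. I also need $L'_i \le 1 - \beta B$ to use each seized job's own share on $i$, and not the share on its other machine. Below one half both of a job's shares may reach $\beta$, so the argument must use the fact that $A$ put job $j$ on $i$ and that $x_{ij} \ge \beta$ for that particular $i$. Legality gives exactly this. Strictness of the final bound comes from $w_{K_i} > 0$ when a slot job is present. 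When none is present, it comes from $k < \tfrac32 + (1-\beta)k$.
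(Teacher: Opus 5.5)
Your proposal is correct and follows the paper's proof essentially step for step. The steps are: the budget count $|A_i| \le \lfloor 1/\beta\rfloor = k$; the residual bound $L'_i \le 1 - \beta B$, taken from each seized job's own share on $i$; the no-slot-job case $B \le k$; and Lemma~\ref{lem:chain}(ii) with the seized total $B$ in place of $b_i$, using $s_1 \le \tfrac12$ from Lemma~\ref{lem:allseized} and $w_{K_i} > 0$ for strictness.
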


\begin{proof}
Normalize $T = 1$ and fix a machine $i$. Write $S_i$ for the jobs Step~2 put on
$i$ and $B_i$ for their total size. Every one of them is big and holds at least
$\beta$ of the machine's big-job budget of $1$, so $|S_i| \le k$, and since each
has size at most $1$, $B_i \le k$. They contribute at least $\beta B_i$ to the
machine's relaxation load, so what is left for the slot jobs satisfies
$L'_i \le 1 - \beta B_i$ --- which is the note's own $L'_i \le 1 - \beta b_i$
with the single seized job's size replaced by the total.

If $i$ carries no slot job its load is $B_i \le k$, below the claimed bound
because $\beta k \le 1$. Otherwise $w_{K_i} > 0$ by Lemma~\ref{lem:chain}(i),
and Lemma~\ref{lem:chain}(ii) --- whose proof adds the seized jobs' sizes to the
matched sizes and so holds verbatim with $B_i$ in place of $b_i$ --- gives
\[
\text{load} \ \le\ B_i + s_1 + L'_i - w_{K_i}
\ <\ B_i + \tfrac12 + 1 - \beta B_i
\ =\ \tfrac32 + (1 - \beta) B_i
\ \le\ \tfrac32 + (1 - \beta)k ,
\]
using $s_1 \le \tfrac12$ from Lemma~\ref{lem:allseized} and $w_{K_i} > 0$ for
the strictness. In words: a machine pays for the jobs Step~2 hands it, for one
slot job of size at most half the target, and for whatever load the seized jobs
left it, and the seized jobs pay for themselves at the rate $\beta$.
\end{proof}

\noindent
The bound is not decoration. A family meets it, and meets it against the
\emph{optimum} rather than against the relaxation's value.

\begin{theorem}[the family below one half]\label{thm:belowfamily}
Let $\beta \le \tfrac12$ and $k = \lfloor 1/\beta \rfloor \ge 2$. There are
instances with $T_{\mathrm{LP}} = \OPT = T$ carrying an execution of makespan
$(k + \tfrac12)T$ when $\beta k < 1$, and makespans approaching $(k+\tfrac12)T$
when $\beta k = 1$. Hence the guarantee at threshold $\beta$, against the
optimum and so also against the relaxation's value, is at least
$\lfloor 1/\beta \rfloor + \tfrac12$.
\end{theorem}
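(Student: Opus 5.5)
The plan is to build the witnesses directly on the collector pattern of Lemma~\ref{lem:nested}. The case $\beta k = 1$ is handled by the lemma itself. The case $\beta k < 1$ uses a simpler variant that realizes $k + \tfrac12$ exactly, with no $\varepsilon$-loss. Normalize $T = 1$ throughout.

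\emph{Case $\beta k = 1$.} Apply Lemma~\ref{lem:nested} with $\alpha = \beta$ and $m = k$. Its hypotheses hold:
\begin{itemize}
\item $m\alpha = 1 \le 1$;
\item $m\alpha = 1 > \tfrac12$;
\item $0 < \beta \le \alpha$.
\end{itemize}
The lemma then gives instances with $T_{\mathrm{LP}} = \OPT = 1$ and makespans
\[
\tfrac32 + (1-\beta)k - \varepsilon\bigl(k(1-\beta) + \tfrac12\bigr).
\]
Since $\beta k = 1$, we have $(1-\beta)k = k - 1$, so these makespans are $k + \tfrac12 - O(\varepsilon)$. They increase to $k + \tfrac12$ as $\varepsilon \downarrow 0$. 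Nothing beyond the substitution is needed.

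\emph{Case $\beta k < 1$.} Here I would take the big jobs at size exactly $1$; this is allowed because the support condition excludes only jobs with $p_j > 1$. The instance consists of:
\begin{itemize}
\item a collector $0$;
\item $k$ big jobs of size $1$, the $i$th allowed on $\{0, i\}$, with shares $\beta$ on $0$ and $1 - \beta$ on partner $i$;
\item a rider of size $\tfrac12$ allowed on $\{0, h\}$, with share $\mu$ on $0$, where $0 < \mu \le 2(1 - k\beta)$; this range is nonempty precisely because $\beta k < 1$;
\item a pin of size $1$ alone on its own machine.
\end{itemize}

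Feasibility of the displayed $x$ at $T = 1$ follows from four checks:
\begin{itemize}
\item the collector's load is $k\beta + \mu/2 \le 1$;
\item the collector's big-job budget is $k\beta \le 1$;
\item each partner carries $1 - \beta \le 1$, with big budget $1 - \beta$;
\item $h$ carries $(1-\mu)/2$.
\end{itemize}
The pin forces $T_{\mathrm{LP}} = 1$. The schedule that sends each big job to its partner, the rider to $h$, and the pin home has makespan $1$, so $\OPT = 1$.

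For the execution, every collector share equals $\beta$, so the Step-2 assignment $A$ sending all $k$ big jobs to $0$ is legal. After their fractions are deleted, the collector carries only the rider's mass $\mu \le 1$. That gives $\lceil \mu \rceil = 1$ slot, and the rider touches it. Sending the rider there is a valid matching, and the remaining jobs are matched on their own machines as in Lemma~\ref{lem:nested}. The collector then finishes at exactly $k + \tfrac12$.

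Since both families have $T_{\mathrm{LP}} = \OPT$, each bound holds against the optimum and hence against the relaxation's value, which yields the final sentence of the statement.

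The step I expect to need the most care is checking that the Step-2 choice is genuinely legal and that the matching is complete. Legality rests on the non-strict test: shares exactly $\beta$ satisfy $x_{ij} \ge \beta$. If the strict reading were required, I would first try setting the collector shares to $\beta + \eta$ with $k(\beta + \eta) < 1$, which is possible because $\beta k < 1$; this keeps the exact value $k + \tfrac12$. For completeness, I would confirm that each partner's leftover slots can absorb whatever survives there. In the $\beta k < 1$ family nothing survives on the partners, because the big jobs are deleted from both machines, so the matching reduces to the rider's single slot.
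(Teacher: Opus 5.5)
Your proposal is correct and is essentially the paper's proof. For $\beta k<1$ your instance is exactly the paper's family with $\eta=0$, and for $\beta k=1$ your appeal to Lemma~\ref{lem:nested} with $\alpha=\beta$, $m=k$ yields the paper's $\eta$-shrunk family (big jobs of size $1-\varepsilon$) plus a filler of size $\varepsilon/2$, so its makespans $k+\tfrac12-\varepsilon\bigl(k-\tfrac12\bigr)$ play the role of the paper's $k(1-\eta)+\tfrac12$. The points you leave implicit all hold as you use them: the rider of size exactly $\tfrac12$ is not big and cannot be seized, the pin is big and is seized onto its own machine, and $\mu\le 2(1-k\beta)<2\beta\le 1$, so the collector opens a single slot.
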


\begin{proof}
Normalize $T = 1$ and fix $\eta = 0$ if $\beta k < 1$, and a small
$\eta \in (0, 1)$ otherwise. The instance has $k + 3$ machines --- a
\emph{collector} $0$, partners $1, \dots, k$, a home $k+1$, and a machine $k+2$
--- and $k + 2$ jobs:
\begin{itemize}
\item $k$ big jobs of size $1 - \eta$, the $i$th allowed on $\{0, i\}$, with
      shares $\beta$ on the collector and $1 - \beta$ on its partner;
\item a \emph{rider} of size exactly $\tfrac12$, allowed on $\{0, k+1\}$, with
      share $\mu$ on the collector and $1 - \mu$ on $k+1$, where $\mu$ is any
      number with $0 < \mu \le 1$ and $\tfrac{\mu}{2} \le 1 - \beta k (1 - \eta)$;
\item a \emph{pin} of size exactly $1$, allowed only on machine $k+2$.
\end{itemize}
The rider is not big: big means size strictly greater than $T/2$, and the rider
is exactly $T/2$. Such a $\mu$ exists because $\beta k \le 1$, with room bought
by $\eta$ in the case $\beta k = 1$.

\emph{Feasibility at $T = 1$.} The collector's big-job budget is $\beta k \le 1$
by the definition of $k$, and its load is $\beta k (1-\eta) + \tfrac{\mu}{2} \le 1$
by the choice of $\mu$. Partner $i$ carries load $(1-\beta)(1-\eta)$ and big
budget $1 - \beta$, both at most $1$; machine $k+1$ carries at most $\tfrac12$;
machine $k+2$ carries exactly $1$.

\emph{The threshold and the optimum are both $1$.} The pin has size exactly $1$,
so at any $T' < 1$ the relaxation forces its shares to zero and no solution
exists; with the solution above feasible at $1$, the least feasible threshold is
$1$. An integral schedule of makespan $1$ puts each big job on its own partner,
the rider on $k+1$ and the pin on $k+2$; and no schedule does better, since the
pin has size $1$. So $T_{\mathrm{LP}} = \OPT = 1$ and the ratios below are
against the optimum.

\emph{The execution.} Every big job's share on the collector is exactly $\beta$,
so a legal Step-2 assignment may send all $k$ of them there --- the choice is
real, because the partner share $1 - \beta$ also reaches $\beta$. The pin is
seized onto $k+2$. What remains on the collector is the rider's fraction $\mu$,
giving it one slot, to which the rider is adjacent; a valid matching sends the
rider there rather than to its home. The collector's load is then
$k(1-\eta) + \tfrac12$, and no other machine exceeds $1$. With $\eta = 0$ this
is $k + \tfrac12$ exactly; otherwise it approaches $k + \tfrac12$ as $\eta$
falls, which is all the statement claims.
\end{proof}

\noindent
Why the two cases differ is worth a sentence, because it is the only place the
arithmetic of $\beta$ shows through. When $1/\beta$ is a whole number the
collector's budget is already tight at $\beta k = 1$, and with big jobs of full
size its load constraint is tight as well, so there is no room left for the
rider's sliver of fraction; shrinking the big jobs by $\eta$ buys the room and
costs $k\eta$ of makespan. When $\beta k < 1$ the slack is there from the start
and the value is reached exactly.

\begin{corollary}[the whole range]\label{cor:wholerange}
Write $R_T(\beta)$ for the guarantee at threshold $\beta$ --- as in
Proposition~\ref{prop:beta} above one half, and with the supremum taken over
legal Step-2 assignments as well at or below it, where they are a third choice.
Then
\begin{enumerate}
\item[(i)] $\beta = \tfrac23$ is the unique minimizer of $R_T$ over all of
$(0,1)$, not only over $(\tfrac12, 1)$: at or below one half
$R_T(\beta) \ge \lfloor 1/\beta\rfloor + \tfrac12 \ge \tfrac52 > \tfrac{11}{6}$.
\item[(ii)] $R_T(\beta)$ is unbounded as $\beta \to 0$, and below one half it is
exactly $\tfrac32 + (1-\beta)\lfloor 1/\beta\rfloor$
(Theorem~\ref{thm:belowexact}).
\item[(iii)] At a reciprocal threshold the two bounds coincide:
$R_T(1/k) = k + \tfrac12$ exactly, for every integer $k \ge 2$, as a supremum
that is approached and not attained.
\item[(iv)] $R_T$ jumps at one half: $g(\beta) \to 2$ as $\beta \downarrow \tfrac12$,
while $R_T(\tfrac12) = \tfrac52$.
\end{enumerate}
\end{corollary}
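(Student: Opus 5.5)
The corollary assembles four claims from results already in place, and I will prove them in the order (ii), (iii), (i), (iv), because (i) and (iv) only need the lower bound and one closed form.

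For (ii), I take the exact value below one half from Theorem~\ref{thm:belowexact}. Its upper half is Theorem~\ref{thm:belowceil}, which gives every machine a load strictly below $\bigl(\tfrac32+(1-\beta)k\bigr)T$ with $k=\lfloor 1/\beta\rfloor$. Its lower half is Lemma~\ref{lem:nested} with $\alpha=\beta$ and $m=k$. That choice is admissible because $k\beta\le 1$ by definition of the floor, and $k\beta>1-\beta\ge\tfrac12$. Unboundedness then follows from the lower bound of Theorem~\ref{thm:belowfamily}, namely $R_T(\beta)\ge\lfloor 1/\beta\rfloor+\tfrac12$, which tends to infinity as $\beta\to 0$. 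The same conclusion can be read off the exact form, since $(1-\beta)\lfloor 1/\beta\rfloor\ge(1-\beta)(1/\beta-1)$.

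For (iii), I substitute $\beta=1/k$, so that $\lfloor 1/\beta\rfloor=k$. The exact value becomes $\tfrac32+(1-\tfrac1k)k=k+\tfrac12$. Non-attainment comes from the strict inequality in Theorem~\ref{thm:belowceil}, which holds on every execution. That the value is approached comes from the $\eta\downarrow0$ case $\beta k=1$ of Theorem~\ref{thm:belowfamily}, or equivalently from $\varepsilon\downarrow0$ in Lemma~\ref{lem:nested}.

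For (i), the interval $(\tfrac12,1)$ is handled by Corollary~\ref{cor:betaexact} together with Proposition~\ref{prop:beta}: they give $R_T=g$, and $g>\tfrac{11}{6}$ except at $\tfrac23$, where $R_T=\tfrac{11}{6}$. On $(0,\tfrac12]$ we have $\lfloor1/\beta\rfloor\ge2$, so Theorem~\ref{thm:belowfamily} gives $R_T(\beta)\ge\tfrac52>\tfrac{11}{6}$. Uniqueness over all of $(0,1)$ follows.

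For (iv), I note that $g(\beta)\to\max\{\tfrac74,2\}=2$ as $\beta\downarrow\tfrac12$. At $\beta=\tfrac12$ item (iii) with $k=2$ gives $R_T(\tfrac12)=\tfrac52$, so the guarantee jumps there.

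The main obstacle is a definitional point rather than arithmetic. $R_T$ is defined differently on the two sides of one half: the supremum takes a third coordinate $A$ at or below one half. For (i) to compare the two regimes legitimately, I must check that $\tfrac12$ itself falls under the triple definition, and that Theorem~\ref{thm:belowceil}, Theorem~\ref{thm:belowexact} and the families are all stated for $\beta\le\tfrac12$ inclusive. They are, so I expect this check to be a short verification.
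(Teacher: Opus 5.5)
Your proposal is correct and follows the paper's proof essentially step for step. You get (ii) from Theorem~\ref{thm:belowexact}, with the ceiling from Theorem~\ref{thm:belowceil} and the family from Lemma~\ref{lem:nested} at $\alpha=\beta$, $m=\lfloor 1/\beta\rfloor$; (iii) by substituting $\beta=1/k$; (i) from Corollary~\ref{cor:betaexact} together with the bound $\lfloor 1/\beta\rfloor+\tfrac12\ge\tfrac52$ of Theorem~\ref{thm:belowfamily}; and (iv) from continuity of $g$ plus (iii) at $k=2$. Your extra check that the below-half results are stated for $\beta\le\tfrac12$ inclusive, under the weak test, is what licenses reading off $R_T(\tfrac12)=\tfrac52$, and the paper leaves that check implicit.
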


\begin{proof}
(i) and (ii) are Theorems~\ref{thm:belowceil}, \ref{thm:belowfamily}
and~\ref{thm:belowexact} together with Corollary~\ref{cor:betaexact}, using
$\lfloor 1/\beta\rfloor \ge 2$ below one half. For (iii), $\beta = 1/k$ makes the
ceiling $\tfrac32 + (1 - \tfrac1k)k = k + \tfrac12$, which the family
approaches; it is the case of (ii) in which the two bounds already coincided. For
(iv), $g$ is continuous with $g(\tfrac12^+) = 2$, and (iii) at $k = 2$ gives
$\tfrac52$ at one half itself; the jump is the count $\lfloor 1/\beta \rfloor$
passing from $1$ to $2$.
\end{proof}

\noindent
The reading of all this is that a lower threshold does not trade a little
guarantee for a simpler rule. It gives up the constant. The scheme is also
discontinuous in its own parameter, at the one place the relaxation's two
budget facts go slack together, so ``$\beta$ slightly below one half'' is not a
small perturbation of ``$\beta$ slightly above'' it.

\begin{theorem}[the interval closes: the guarantee below one half]\label{thm:belowexact}
Let $\beta \le \tfrac12$ and $k = \lfloor 1/\beta\rfloor$. The guarantee at
threshold $\beta$, against the optimum, is exactly
$\tfrac32 + (1-\beta)k$ --- a supremum that is approached and never attained.
\end{theorem}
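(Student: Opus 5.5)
The proof has two halves: an upper bound against the optimum, and a family of instances approaching it.

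\emph{Upper bound.} Theorem~\ref{thm:belowceil} already does almost all the work. It holds for every feasible target $T$, every feasible $x$, every legal Step-2 assignment and every valid matching, and gives each machine a load strictly below $\bigl(\tfrac32 + (1-\beta)k\bigr)T$. Running the scheme at $T = T_{\mathrm{LP}}(I) \le \OPT(I)$, as in the proof of Theorem~\ref{thm:optexact}, gives
\[
\ALG \;<\; \bigl(\tfrac32 + (1-\beta)k\bigr)\,T_{\mathrm{LP}} \;\le\; \bigl(\tfrac32 + (1-\beta)k\bigr)\,\OPT
\]
on every execution $(x,A,M)$. So the supremum is at most this value, and no execution attains it.

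\emph{Lower bound.} I would invoke Lemma~\ref{lem:nested} with $\alpha = \beta$ and $m = k = \lfloor 1/\beta\rfloor$. Its hypotheses need checking:
\begin{itemize}
\item $m\alpha = \beta k \le 1$ holds by the definition of the floor.
\item $m\alpha > \tfrac12$ holds because $k > 1/\beta - 1$ gives $\beta k > 1 - \beta \ge \tfrac12$. At $\beta = \tfrac12$ itself, $k = 2$ and $\beta k = 1 > \tfrac12$ directly.
\item The threshold condition $0 < \beta \le \alpha$ holds with equality.
\end{itemize}
The lemma then supplies, for each $\varepsilon \in (0,\tfrac12)$, an instance with $T_{\mathrm{LP}} = \OPT = T$ and a legal execution of makespan $\tfrac32 + (1-\beta)k - \varepsilon\bigl(k(1-\beta)+\tfrac12\bigr)$. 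This increases to $\tfrac32 + (1-\beta)k$ as $\varepsilon \downarrow 0$. Since the denominator is the optimum itself, the guarantee against $\OPT$ is at least this value.

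I also need to confirm that the lemma's execution is legal in the triple sense $(x,A,M)$ defined at the start of this section. Each big job's collector share equals $\alpha = \beta$, so sending all $k$ big jobs to the collector is a legal choice of $A$. The rider-to-slot-1, filler-to-slot-2 matching is valid under the half-open convention, as the lemma's proof verifies.

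\emph{Main obstacle.} The delicate point is reconciling the two ends. Theorem~\ref{thm:belowfamily} only reaches $k + \tfrac12$, which is strictly smaller than $\tfrac32 + (1-\beta)k$ whenever $\beta k < 1$; the nested family is what closes that gap. I would therefore check carefully that the filler size
\[
b \;=\; 1 - k\beta(1-\varepsilon) - \tfrac{\varepsilon}{2}
\]
lies strictly in $(0,\tfrac12)$ for every $\beta \le \tfrac12$. This matters even in the reciprocal case $\beta k = 1$, where $b = \varepsilon(k\beta - \tfrac12) > 0$ still holds. With $b$ in that range, the rider sorts strictly before the filler and the slot adjacencies are as the lemma states.

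Putting the two halves together: the supremum equals $\tfrac32 + (1-\beta)k$, it is approached along $\varepsilon \downarrow 0$, and it is never attained by the strictness of Theorem~\ref{thm:belowceil}.
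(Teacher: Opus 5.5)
Your proposal is correct and matches the paper's proof. The strict ceiling of Theorem~\ref{thm:belowceil}, with the step $T_{\mathrm{LP}} \le \OPT$ you spell out, gives the unattained upper half; Lemma~\ref{lem:nested} with $\alpha = \beta$ and $m = k$, justified by the same check $\beta k > 1 - \beta \ge \tfrac12$, gives the family approaching it, and your re-verification of the filler size $b \in (0,\tfrac12)$ and of Step-2 legality repeats what the lemma's proof already establishes.
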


\begin{proof}
Theorem~\ref{thm:belowceil} is the upper half, with the strictness that makes
the supremum unattained. For the lower half apply Lemma~\ref{lem:nested} with
$\alpha = \beta$ and $m = k$. Its two hypotheses hold: $\beta k \le 1$ by the
definition of $k$, and $\beta k > \tfrac12$ because $\beta(k+1) > 1$ forces
$\beta k > 1 - \beta \ge \tfrac12$ at $\beta \le \tfrac12$. The lemma returns
instances with $T_{\mathrm{LP}} = \OPT = T$ carrying a legal execution of
makespan
\[
\tfrac32 + (1-\beta)k \ -\ \varepsilon\bigl(k(1-\beta) + \tfrac12\bigr) ,
\]
strictly below the ceiling for every $\varepsilon > 0$ and tending to it as
$\varepsilon \downarrow 0$. The convergence is linear in the perturbation, which
is what separates an approach from a structural shortfall.

What that family spends is what Theorem~\ref{thm:belowfamily}'s left on the
table. There the collector opened one slot and collected $k + \tfrac12$, while
the load budget $1 - \beta k$ sat unused. A full slot costs its own weighted
mass and returns the largest size adjacent to the next one, so budget spent on a
second slot comes back as matched size --- here as the single filler, whose size
is exactly the budget the seized jobs and the rider leave behind.
\end{proof}

\begin{corollary}[the guarantee at every threshold, against the optimum]\label{cor:optwholerange}
Write $R_{\OPT}(\beta)$ for the supremum of $\ALG/\OPT$ at threshold $\beta$,
taken over instances, over feasible solutions returned by Step~1, over valid
matchings taken by Step~3, \emph{and, at or below one half, over the legal
Step-2 assignments as well}. The last of these is not a formality: above one
half Step~2's action is determined by $x$, since no two shares of at least
$\beta$ fit in the budget, but at or below one half a big job can be eligible at
both of its machines and several can be eligible at one, so the assignment is a
third choice the specification leaves open --- and it is the choice
Theorem~\ref{thm:belowfamily}'s families exploit. Then for every
$\beta \in (0,1)$
\[
R_{\OPT}(\beta) \;=\;
\begin{cases}
\tfrac32 + (1-\beta)\lfloor 1/\beta\rfloor, & 0 < \beta \le \tfrac12,\\[4pt]
\max\{\tfrac32 + \tfrac\beta2,\ \tfrac52 - \beta\}, & \tfrac12 < \beta < 1,
\end{cases}
\]
and the supremum is attained at no threshold. In particular
$\inf_{\beta} R_{\OPT}(\beta) = R_{\OPT}(\tfrac23) = \tfrac{11}{6}$, and
$\tfrac23$ is its unique minimizer.
\end{corollary}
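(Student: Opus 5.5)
The corollary is an assembly of results already proved, so the plan is to split at one half and, on each side, pair an upper bound valid on every execution with a lower-bound family whose least feasible target equals the optimum.

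\emph{Upper bounds.} For every instance $I$ the scheme is run at $T = T_{\mathrm{LP}}(I)$. Since $T_{\mathrm{LP}}(I) \le \OPT(I)$, any strict per-machine bound $\ALG < c\,T_{\mathrm{LP}}$ becomes $\ALG < c\,\OPT$.
\begin{itemize}
\item On $(\tfrac12,1)$ take $c = g(\beta)$ from Proposition~\ref{prop:betaceil}.
\item On $(0,\tfrac12]$ take $c = \tfrac32 + (1-\beta)\lfloor 1/\beta\rfloor$ from Theorem~\ref{thm:belowceil}. That theorem already quantifies over arbitrary legal Step-2 assignments, so the third coordinate of the execution is covered.
\end{itemize}
Strictness on every execution makes each of these suprema unattained.

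\emph{Lower bounds.} Each family must have $T_{\mathrm{LP}} = \OPT$, so that ratios to $T$ are ratios to $\OPT$.
\begin{itemize}
\item On $(\tfrac12,1)$, Theorem~\ref{thm:optexact} supplies the needed families. These are Lemma~\ref{lem:nested} with $m=1$, and the pinned three-rung chain.
\item On $(0,\tfrac12]$, Theorem~\ref{thm:belowexact} applies Lemma~\ref{lem:nested} with $\alpha=\beta$ and $m = \lfloor 1/\beta\rfloor$. Its hypothesis $m\alpha > \tfrac12$ is checked there.
\end{itemize}
Together these give the displayed piecewise formula, with no threshold attaining its value.

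\emph{The infimum.} On $(\tfrac12,1)$, Proposition~\ref{prop:beta}'s computation shows $g$ has a transversal crossing of its two monotone branches at $\tfrac23$, with value $\tfrac{11}{6}$. Hence $g(\beta) > \tfrac{11}{6}$ for every $\beta \ne \tfrac23$ in that interval.

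On $(0,\tfrac12]$ we have $k = \lfloor 1/\beta\rfloor \ge 2$. Then $(1-\beta)k \ge k - \beta k \ge k - 1 \ge 1$ because $\beta k \le 1$, so the value is at least $\tfrac52 > \tfrac{11}{6}$. This is Corollary~\ref{cor:wholerange}(i) restated against $\OPT$.

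So $\tfrac23$ is the unique minimizer, and $\inf_\beta R_{\OPT}(\beta) = R_{\OPT}(\tfrac23) = \tfrac{11}{6}$. This is consistent with Proposition~\ref{prop:ceiling} and Theorem~\ref{thm:ratio}.

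\emph{Main obstacle.} The delicate step is bookkeeping about which supremum each cited result controls, not any new estimate. Two points need care.
\begin{itemize}
\item The lower-bound families must be legal executions under the non-strict test $x_{ij} \ge \beta$. Lemma~\ref{lem:nested} at $\alpha = \beta$ is legal under that test.
\item On $(\tfrac12,1)$ the choice over Step-2 assignments is vacuous, since no two shares of at least $\beta$ fit in the budget. So the triple-based supremum coincides with the pair-based one used in Theorem~\ref{thm:optexact}.
\end{itemize}
I would state both points explicitly before combining the pieces.
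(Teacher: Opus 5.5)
Your proposal is correct and follows the paper's own proof. The paper combines Theorem~\ref{thm:belowexact} and Theorem~\ref{thm:optexact} for the two branches, each already carrying the strictness that gives non-attainment, and cites Corollary~\ref{cor:wholerange}(i) for the minimizer; you have unpacked those citations into their ceiling halves (via $T_{\mathrm{LP}} \le \OPT$) and their pinned-family halves, and checked $\tfrac32 + (1-\beta)\lfloor 1/\beta\rfloor \ge \tfrac52$ directly on the exact values, which spares you the paper's remark that the $R_T$-based argument of Corollary~\ref{cor:wholerange}(i) transfers unchanged to the optimum.
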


\begin{proof}
The two branches are Theorem~\ref{thm:belowexact} and
Theorem~\ref{thm:optexact}, which between them cover $(0,1)$; non-attainment is
part of each. The value at $\tfrac23$ is $\tfrac{11}{6}$ from the upper branch,
and Corollary~\ref{cor:wholerange}(i) gives uniqueness of the minimizer, its
argument depending only on the ordering of the two branches and not on which
reference point they are measured against.
\end{proof}

\noindent
This is the whole of what the note establishes about the threshold, in one
place. Corollary~\ref{cor:wholerange} says the same thing against the
relaxation's target $T$; the corollary above states it against $\OPT$, which is
the denominator an approximation ratio carries. The two agree, which is not
automatic: the families
that bind against $T$ saturate their partner machines with forced singletons,
and those singletons inflate $\OPT$, so the optimum-relative branches needed
constructions of their own rather than a remark.

\begin{corollary}[the below-half curve, interval by interval]\label{cor:reciprocal}
Let $k \ge 2$ be an integer. On the interval
$\tfrac1{k+1} < \beta \le \tfrac1k$ we have $\lfloor 1/\beta\rfloor = k$, so
Theorem~\ref{thm:belowexact} reads
\[
R_{\OPT}(\beta) \;=\; \tfrac32 + k(1 - \beta),
\]
affine in $\beta$ with slope $-k$. The curve below one half is therefore
piecewise affine, steepening by one unit of slope at each reciprocal threshold,
and it jumps there: at $\beta = \tfrac1k$ its value is $k + \tfrac12$, while its
limit from above is $k - \tfrac12 + \tfrac1k$, a jump of exactly
$1 - \tfrac1k = 1 - \beta$. The case $k = 1$ of the same formula is the
decreasing branch $\tfrac52 - \beta$ above one half, and the jump it predicts at
$\beta = \tfrac12$ is $\tfrac12$, which is
Corollary~\ref{cor:wholerange}(iv).
\end{corollary}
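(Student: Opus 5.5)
The statement is a direct specialization of Theorem~\ref{thm:belowexact} followed by elementary arithmetic, and I would prove it in that order. First, I would fix an integer $k \ge 2$ and note that $\tfrac1{k+1} < \beta \le \tfrac1k$ is equivalent to $k \le 1/\beta < k+1$. That gives $\lfloor 1/\beta\rfloor = k$. Since $k \ge 2$, the interval lies inside $(0, \tfrac12]$. So Theorem~\ref{thm:belowexact}, equivalently the lower branch of Corollary~\ref{cor:optwholerange}, applies at every point of the interval and yields $R_{\OPT}(\beta) = \tfrac32 + k(1-\beta)$. For fixed $k$ this is affine in $\beta$ with slope $-k$. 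As the intervals are traversed right to left, $k$ increases by one at each reciprocal $1/k$, so the slope steepens by one unit there.

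Next, I would compute the two one-sided values at $\beta = \tfrac1k$.
\begin{itemize}
\item At the point itself, which belongs to the $k$-interval because it is closed on the right, the value is $\tfrac32 + k - 1 = k + \tfrac12$.
\item Just above $\tfrac1k$ the point lies in the $(k-1)$-interval $(\tfrac1k, \tfrac1{k-1}]$, where the formula reads $\tfrac32 + (k-1)(1-\beta)$.
\end{itemize}
For $k \ge 3$ that interval is still within $(0,\tfrac12]$, so the formula is legitimate there, and its limit as $\beta \downarrow \tfrac1k$ is
\[
\tfrac32 + (k-1)\bigl(1 - \tfrac1k\bigr) \;=\; \tfrac32 + k - 1 - 1 + \tfrac1k \;=\; k - \tfrac12 + \tfrac1k .
\]
Subtracting gives a jump of $1 - \tfrac1k$, which equals $1 - \beta$ at $\beta = \tfrac1k$.

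The one step that needs care is $k = 2$, i.e.\ the jump at $\beta = \tfrac12$. The interval immediately above is no longer governed by Theorem~\ref{thm:belowexact}; it falls under the upper branch of Corollary~\ref{cor:optwholerange}. There I would observe that near $\tfrac12$ the maximum $g(\beta) = \max\{\tfrac32 + \tfrac\beta2,\ \tfrac52 - \beta\}$ is attained by $\tfrac52 - \beta$. This is because $\tfrac52 - \beta \ge \tfrac32 + \tfrac\beta2$ exactly when $\beta \le \tfrac23$. That branch is precisely the "$k = 1$" instance of the formula $\tfrac32 + k(1-\beta)$, so the general computation goes through verbatim:
\begin{itemize}
\item the limit from above is $\tfrac52 - \tfrac12 = 2 = k - \tfrac12 + \tfrac1k$ at $k = 2$;
\item the value at $\tfrac12$ is $\tfrac52$;
\item the jump is $\tfrac12 = 1 - \tfrac12$.
\end{itemize}
This matches Corollary~\ref{cor:wholerange}(iv). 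I would state this identification of the decreasing branch with $k = 1$ explicitly, since it is the only place the above-half result enters.

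Nothing here is a real obstacle; the content lies entirely in Theorem~\ref{thm:belowexact} and Theorem~\ref{thm:optexact}. The only things to get right are the endpoint bookkeeping (the intervals are half-open on the left and closed on the right) and the $k = 2$ boundary case.
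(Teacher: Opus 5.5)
Your proposal is correct and follows the paper's proof. Both substitute $\lfloor 1/\beta\rfloor = k$ on $(\tfrac1{k+1}, \tfrac1k]$ into Theorem~\ref{thm:belowexact}, then take the difference between the value at $\beta = \tfrac1k$ and the limit from above of the $(k-1)$-formula. Your explicit handling of $k = 2$ makes precise a step the paper's one-line proof leaves implicit: you invoke Theorem~\ref{thm:optexact} and check that $g(\beta) = \tfrac52 - \beta$ for $\beta \le \tfrac23$, so the limit from above at $\tfrac12$ really is $2$.
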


\begin{proof}
$\lfloor 1/\beta\rfloor = k$ exactly on $(\tfrac1{k+1}, \tfrac1k]$, and
substituting it into Theorem~\ref{thm:belowexact} gives the displayed form.
The jump is the difference between $\tfrac32 + (1-\beta)k$ at $\beta = 1/k$ and
$\tfrac32 + (1-\beta)(k-1)$ in the limit from above, which is $1 - \beta$.
\end{proof}

\noindent
Nothing here is new: it is Theorem~\ref{thm:belowexact} read one interval at a
time. It earns its place because the jumps are what a reader takes away --- each
one is a machine's big-job budget admitting one more seized job --- and a single
floor function hides them.

\begin{figure}[htbp]\centering
\includegraphics[width=0.92\textwidth]{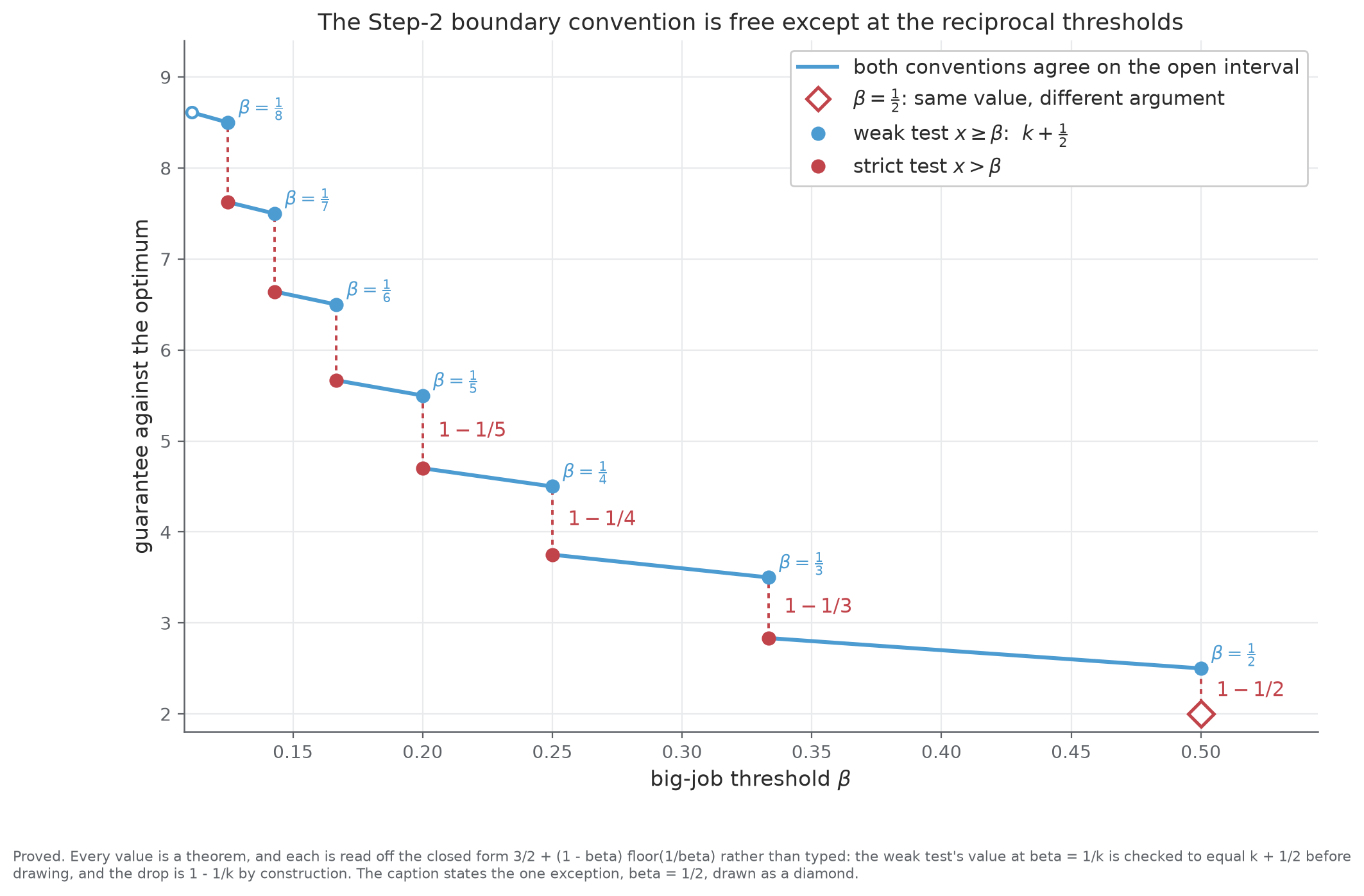}
\caption{What the Step-2 boundary convention costs, and where. Step~2 seizes a
big job when its share on a machine \emph{reaches} the threshold $\beta$ (the
weak test, $x_{ij} \ge \beta$) or when it \emph{exceeds} it (the strict test,
$x_{ij} > \beta$). The two give the same guarantee on every open interval
between consecutive reciprocal thresholds, drawn as one line because it is one
line: only the count of big jobs a machine's budget admits differs, and off the
reciprocals the two counts are equal. At $\beta = 1/k$ the strict test admits
one job fewer and the guarantee falls by exactly $1 - 1/k$, marked by the dashed
drop. Open circles are limits the interval does not contain; filled circles are
values. \textbf{The one exception is $\beta = \tfrac12$}, drawn as a diamond: a
big job split exactly in half has share exactly $\tfrac12$, which \emph{reaches}
the weak threshold but does not \emph{exceed} the strict one. So
Lemma~\ref{lem:allseized} applies under the weak rule and not under the strict
one, and under the strict rule the value $2$ comes from
Proposition~\ref{prop:betaceil}'s argument above one half rather than from
Theorem~\ref{thm:belowexact}, as Remark~\ref{rem:strictathalf} explains.
Proposition~\ref{prop:strictbelow} supplies the family that meets it, so the
diamond is a value proved from both sides and not a ceiling with nothing
beneath it. Generated by
\texttt{p29sup\_fig\_conventions.py}, which re-checks both claims the picture
makes before drawing it.}
\label{fig:conventions}
\end{figure}

\begin{corollary}[both conventions in one formula]\label{cor:bothconventions}
Write $k_{\ge}(\beta) = \lfloor 1/\beta\rfloor$ and
$k_{>}(\beta) = \lceil 1/\beta\rceil - 1$ for the number of big jobs a machine's
budget admits under the weak test $x_{ij} \ge \beta$ and the strict test
$x_{ij} > \beta$. Then for $\beta \le \tfrac12$ the guarantee under either test is
\[
R_{\ge}(\beta) = \tfrac32 + (1-\beta)\,k_{\ge}(\beta),
\qquad
R_{>}(\beta) = \tfrac32 + (1-\beta)\,k_{>}(\beta) .
\]
That is: the two readings give the same formula, and the only thing the
convention changes is how many big jobs the count admits. Since
$\lceil 1/\beta\rceil - 1 = \lfloor 1/\beta\rfloor$ unless $1/\beta$ is a whole
number, the two agree at every threshold except the reciprocals, where
\[
R_{\ge}(1/k) - R_{>}(1/k) \;=\; 1 - \tfrac1k .
\]
In words: at a reciprocal threshold the strict test admits one big job fewer
than the weak one, and the guarantee falls by what that one job was carrying.
Everywhere else the choice of convention costs nothing at all.
Figure~\ref{fig:conventions} draws both readings; they are one curve except at
the countably many thresholds where they are not, which is the whole content of
this corollary. The endpoint $\beta = \tfrac12$ obeys the same formula but not
by the same argument: under the strict test its ceiling comes from the analysis
above one half rather than from the count, for the reason
Remark~\ref{rem:strictathalf} gives, and its family is
Proposition~\ref{prop:strictbelow}'s.
\end{corollary}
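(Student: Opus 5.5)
The plan is to treat the weak reading as already done and prove only the strict-test formula by the same two halves, with the budget count as the single changed ingredient. Under the weak test, $R_{\ge}(\beta) = \tfrac32 + (1-\beta)\lfloor 1/\beta\rfloor$ is Theorem~\ref{thm:belowexact} verbatim, so nothing further is needed there. Under the strict test a seized job holds share strictly above $\beta$, so $j$ seized jobs on one machine use budget $> j\beta$. Hence $j\beta < 1$ is necessary, and the largest such integer is $k_{>} = \lceil 1/\beta\rceil - 1$. This agrees with $\lfloor 1/\beta\rfloor$ off the reciprocals and is one less at $\beta = 1/k$. The closing identity $R_{\ge}(1/k) - R_{>}(1/k) = (1-\tfrac1k)\cdot 1$ is then arithmetic, since the two counts differ by exactly $1$ there.

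\emph{Upper half for the strict test.} For $\beta < \tfrac12$ I would rerun Theorem~\ref{thm:belowceil} with the strict rule. Lemma~\ref{lem:allseized} still holds, because a big job's larger share is at least $\tfrac12 > \beta$ and so strictly exceeds $\beta$; thus every slot job has size at most $\tfrac12$. Each seized job now contributes more than $\beta p_j$ to the relaxation load, so $L'_i \le 1 - \beta B_i$ continues to hold. The count bound becomes $|S_i| \le k_{>}$, so $B_i \le k_{>}$. The chain Lemma~\ref{lem:chain}(ii) then gives load $< \tfrac32 + (1-\beta)k_{>}$, with strictness supplied by $w_{K_i} > 0$ exactly as before.

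\emph{Lower half for the strict test.} Apply Lemma~\ref{lem:nested} with $m = k_{>}$ and $\alpha = \beta + 2\delta$, where $\delta$ is small, as the text after the lemma already announces for Proposition~\ref{prop:strictbelow}. I need $m\alpha \le 1$, which holds for small $\delta$ because $k_{>}\beta < 1$ strictly. I also need $m\alpha > \tfrac12$, which holds because $(k_{>}+1)\beta \ge 1$ gives $k_{>}\beta \ge 1 - \beta \ge \tfrac12$, and the $+2\delta$ makes it strict. Since $\beta < \alpha$, the lemma's execution is legal under the strict reading, and its makespan tends to $\tfrac32 + (1-\alpha)k_{>}$. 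Letting $\delta,\varepsilon \downarrow 0$ gives $\tfrac32 + (1-\beta)k_{>}$ against $\OPT = T_{\mathrm{LP}}$. The check $m = k_{>} \ge 1$ needs $\beta < 1$, which holds.

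\emph{Main obstacle: the endpoint $\beta = \tfrac12$ under the strict test.} Here $k_{>} = 1$ and the formula predicts $2$, but Lemma~\ref{lem:allseized} fails because a job split $\tfrac12$--$\tfrac12$ is not seized. A machine may then round an unseized big job, and the count argument no longer covers it. For the ceiling I would instead use the three-case proof of Proposition~\ref{prop:betaceil} with $\beta = \tfrac12$, checking that each step survives the strict rule:
\begin{itemize}
\item No machine holds two seized jobs, since two shares above $\tfrac12$ exceed the budget.
\item In case (N1), an unseized job has $x_q \ge \tfrac12$, so the load is $< 2 - \tfrac14 < 2$.
\item In case (S), a seized job's share is above $\tfrac12$, so any unseized big fraction on that machine would be below $\tfrac12$. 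Its other share would then exceed $\tfrac12$, and that job would have been seized; so none survives. The bound is $< \tfrac32 + \tfrac12 p \le 2$.
\end{itemize}
The matching family is Lemma~\ref{lem:nested} with $m = 1$ and $\alpha = \tfrac12 + 2\delta$, which approaches $\tfrac52 - \tfrac12 = 2$. This is the step where I expect details to need the most care, particularly that $s_z \le \tfrac12$ for $z \ge 2$ survives in case (N1).
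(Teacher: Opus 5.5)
Your proposal is correct and is essentially the paper's argument: the paper proves the corollary by citing Theorem~\ref{thm:belowexact} for the weak test and Proposition~\ref{prop:strictbelow} for the strict one, and your three paragraphs reproduce that proposition's proof step for step (Theorem~\ref{thm:belowceil}'s computation with the strict count $\lceil 1/\beta\rceil - 1$; Lemma~\ref{lem:nested} with $\alpha = \beta + 2\delta$ and $m = k_{>}$; Proposition~\ref{prop:betaceil}'s three cases at $\beta = \tfrac12$, with the $m = 1$ family). The point you flag is harmless: $s_z \le \tfrac12$ for $z \ge 2$ uses only the big-job budget, the nonincreasing sort and the half-open convention, never $\beta$, and case (N0), which you do not list, is likewise threshold-free.
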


\begin{proof}
The counts are Proposition~\ref{prop:conventions}(ii). The weak reading is
Theorem~\ref{thm:belowexact}; the strict one is
Proposition~\ref{prop:strictbelow}, whose family raises the collector's shares
to $\beta + 2\delta$ and re-spends the load budget that raise costs, rather than
inheriting the weak family unchanged. The two counts differ exactly when
$1/\beta \in \mathbb{Z}$, where $\lceil 1/\beta\rceil - 1 = \lfloor
1/\beta\rfloor - 1$, and the values then differ by $1 - \beta = 1 - 1/k$.
\end{proof}

\begin{remark}[why $\beta = \tfrac12$ is excluded]\label{rem:strictathalf}
Under the strict rule the argument above stops working at exactly one point, and
it is worth saying where rather than carrying the formula through it. The
below-half analysis rests on Lemma~\ref{lem:allseized}, whose proof observes
that a big job's larger share is at least $\tfrac12$ and so reaches the
threshold. Under the weak test that suffices at $\beta = \tfrac12$. Under the
strict test it does not: a big job split exactly $(\tfrac12, \tfrac12)$ has
larger share $\tfrac12$, which is not \emph{greater} than $\tfrac12$, so
Step~2 fires nowhere and the job survives into the slot structure. Cases (N0)
and (N1) of Proposition~\ref{prop:ceiling} are then not empty, and the machinery
of this section does not apply.

What applies instead is the analysis above one half, which never used
Lemma~\ref{lem:allseized}: at $\beta = \tfrac12$ under the strict rule no
machine can hold two seized big jobs, since two shares each exceeding $\tfrac12$
sum to more than the budget of $1$, so
Proposition~\ref{prop:betaceil}'s three cases are exhaustive and give
$g(\tfrac12) = \max\{\tfrac32 + \tfrac14,\ \tfrac52 - \tfrac12\} = 2$. The
corollary's formula returns $2$ there as well, so the endpoint needs its own
argument even though the two routes agree on the number. That argument bounds
the endpoint from above only; Proposition~\ref{prop:strictbelow} supplies the
matching family, which completes the equality.
\end{remark}

\begin{proposition}[the strict Step-2 rule, below one half and at it]\label{prop:strictbelow}
Suppose Step~2 fires only when $x_{ij} > \beta$, and let $0 < \beta \le
\tfrac12$. Write $m(\beta) = \lceil 1/\beta\rceil - 1$ for the number of big
jobs a machine's budget then admits. The guarantee against the optimum is
exactly
\[
R_{>}(\beta) \ =\ \tfrac32 + (1 - \beta)\,m(\beta) ,
\]
a supremum approached and never attained. At the endpoint $m(\tfrac12) = 1$ and
$R_{>}(\tfrac12) = 2$.
\end{proposition}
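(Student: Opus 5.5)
The proof has two halves: a ceiling matching Theorem~\ref{thm:belowceil} with the count changed, and a family from Lemma~\ref{lem:nested} with $\alpha$ raised just above $\beta$, as the remark after that lemma announces.

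\emph{Upper bound, $\beta < \tfrac12$.} First the analogue of Lemma~\ref{lem:allseized} under the strict rule. A big job's larger share is at least $\tfrac12 > \beta$, so it is seized and no big fraction enters the slot structure. Hence $s_1 \le \tfrac12$ on every machine.

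Next the count. Seized jobs on machine $i$ each hold a share strictly greater than $\beta$, and their shares sum to at most $1$. So their number $n$ satisfies $n\beta < 1$, i.e.\ $n \le \lceil 1/\beta\rceil - 1 = m(\beta)$.

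I then rerun the proof of Theorem~\ref{thm:belowceil} with this count. Write $B_i$ for the total seized size, so $B_i \le m(\beta)$. The seized jobs contribute at least $\beta B_i$ to the relaxation load, so $L'_i \le 1 - \beta B_i$. Lemma~\ref{lem:chain}(ii) with $w_{K_i} > 0$ gives
\[
\text{load} \;<\; \tfrac32 + (1-\beta)B_i \;\le\; \tfrac32 + (1-\beta)\,m(\beta).
\]
A machine with no slot job has load $B_i \le m(\beta)$, which is below this.

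\emph{Upper bound, $\beta = \tfrac12$.} Here I invoke Remark~\ref{rem:strictathalf}. No machine holds two seized jobs, so the three cases of Proposition~\ref{prop:betaceil} are exhaustive. Each case's argument goes through verbatim at $\beta = \tfrac12$ under the strict rule:
\begin{itemize}
\item[(N1)] an unseized big job has both shares at most $\tfrac12$, hence $x_q \ge \tfrac12$, giving load $< 2 - \tfrac14 = \tfrac74$;
\item[(S)] a seized job has share $> \tfrac12$, so the machine carries no unassigned big fraction, and the bound is $< \tfrac32 + \tfrac p2 \le 2$.
\end{itemize}
Both are strictly below $2$.

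\emph{Lower bound.} Apply Lemma~\ref{lem:nested} with $m = m(\beta)$ and $\alpha = \beta + 2\delta$, where $\delta > 0$ is small. The lemma's execution is legal under the strict reading because $\beta < \alpha$, and it has $T_{\mathrm{LP}} = \OPT = T$. I must check its hypotheses:
\begin{itemize}
\item $m\alpha \le 1$: since $m\beta < 1$ strictly, this holds for $\delta$ small enough.
\item $m\alpha > \tfrac12$: by minimality of the ceiling, $(m+1)\beta \ge 1$, so $m\beta \ge 1 - \beta \ge \tfrac12$. Adding $2m\delta > 0$ makes it strict; at $\beta = \tfrac12$, $m = 1$ and $\alpha = \tfrac12 + 2\delta > \tfrac12$ directly.
\end{itemize}
The lemma then yields makespans
\[
\tfrac32 + (1-\alpha)m - \varepsilon\bigl(m(1-\alpha) + \tfrac12\bigr).
\]
Letting $\varepsilon, \delta \downarrow 0$ these tend to $\tfrac32 + (1-\beta)m(\beta)$, and the upper bound keeps every one strictly below it, so the supremum is approached and not attained.

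The main obstacle is the hypothesis check at the reciprocal thresholds $\beta = 1/k$. There $m = k - 1$ and $m\beta = 1 - 1/k$, and I need $m\alpha > \tfrac12$, i.e.\ $1 - \tfrac1k > \tfrac12$ up to the $\delta$ slack. This holds for $k \ge 3$, and at $k = 2$ (the endpoint $\beta = \tfrac12$) it is rescued only by the $+2\delta$. I would verify that case separately.
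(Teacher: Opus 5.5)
Your proposal is correct and follows the paper's proof essentially step for step: the count $r\beta<1\Rightarrow r\le\lceil 1/\beta\rceil-1$ fed into Theorem~\ref{thm:belowceil}'s computation for $\beta<\tfrac12$, Proposition~\ref{prop:betaceil}'s three cases at $\beta=\tfrac12$, and Lemma~\ref{lem:nested} with $\alpha=\beta+2\delta$ and $m=m(\beta)$ for the family. Your uniform check, $m\beta\ge 1-\beta$ from $\lceil 1/\beta\rceil\beta\ge 1$, replaces the paper's split into reciprocal and non-reciprocal thresholds, and it already settles the $k=2$ case you flag at the end, so that case needs no separate verification.
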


\begin{proof}
\emph{The ceiling below one half.} Let $\beta < \tfrac12$. Every big job is
still seized, since its larger share is at least $\tfrac12 > \beta$, so
Lemma~\ref{lem:allseized} holds verbatim under the strict test. If a machine
receives $r$ seized jobs, each of their shares strictly exceeds $\beta$ and they
sum to at most the budget $1$, so $r\beta < 1$; an integer strictly below
$1/\beta$ is at most $\lceil 1/\beta\rceil - 1$, so $r \le m(\beta)$.
Theorem~\ref{thm:belowceil}'s computation then runs unchanged with $m(\beta)$ in
place of $k$ and puts every load strictly below $\tfrac32 + (1-\beta)m(\beta)$.

\emph{The ceiling at the endpoint.} At $\beta = \tfrac12$
Lemma~\ref{lem:allseized} fails, for the reason
Remark~\ref{rem:strictathalf} gives, and the count argument is unavailable. What
replaces it is the analysis above one half: two shares each exceeding $\tfrac12$
cannot fit a budget of $1$, so no machine holds two seized big jobs,
Proposition~\ref{prop:betaceil}'s three cases are exhaustive, and every load is
strictly below $g(\tfrac12) = 2$. Since $\tfrac32 + (1-\tfrac12)\cdot 1 = 2$,
the formula holds at the endpoint too, by a different route.

\emph{The family, at every $\beta \le \tfrac12$ at once.} Fix $\delta > 0$ small
enough that $m(\beta)(\beta + 2\delta) \le 1$, and apply Lemma~\ref{lem:nested}
with $\alpha = \beta + 2\delta$, $m = m(\beta)$ and $\varepsilon = \delta$. The
lemma's hypotheses hold. The budget condition $m\alpha \le 1$ is the choice of
$\delta$. For the other, $m\alpha > m\beta$, and $m\beta \ge \tfrac12$ in both
cases: at a non-reciprocal $\beta$ below one half $m(\beta) =
\lfloor 1/\beta\rfloor$, so $m\beta > 1 - \beta \ge \tfrac12$; at
$\beta = 1/j$ with $j \ge 2$ we have $m(\beta) = j - 1$, so
$m\beta = 1 - 1/j \ge \tfrac12$. Because $\alpha > \beta$ the lemma's execution
is legal under the strict test. Its makespan tends to
$\tfrac32 + (1-\alpha)m(\beta)$ as $\varepsilon \downarrow 0$ and so to
$\tfrac32 + (1-\beta)m(\beta)$ as $\delta \downarrow 0$, and every instance has
$T_{\mathrm{LP}} = \OPT = T$, so the supremum is against the optimum. At
$\beta = \tfrac12$ the family is one big job seized at share
$\tfrac12 + 2\delta$, the rider and the filler, finishing at
$2 - 3\delta + 2\delta^2$.
\end{proof}

\noindent
This is the piece the note asserted and did not prove. Raising the collector's
shares from $\beta$ to $\beta + 2\delta$ is not free: it also changes the load
budget the rider and the filler spend, so the weak-rule family does not simply
``carry through''. Lemma~\ref{lem:nested} takes the share as a parameter
precisely so that the perturbed family is the same construction rather than a
new one, and at the endpoint $\beta = \tfrac12$ it supplies the matching lower
bound, completing the equality there.

\begin{remark}[the strict reading above one half]\label{rem:strictabove}
The claim that $\tfrac23$ is the best threshold under \emph{both} readings needs
one more line, because above one half
Theorem~\ref{thm:optexact}'s decreasing family gives the collector a share of
exactly $\beta$, which the strict test does not seize. Lemma~\ref{lem:nested}
covers that too: take $\alpha = \beta + \rho$ and $m = 1$, admissible because
$m\alpha = \beta + \rho \in (\tfrac12, 1]$ for small $\rho > 0$, and legal under
the strict test because $\alpha > \beta$. Its limit is $\tfrac52 - \alpha$,
which tends to $\tfrac52 - \beta$ as $\rho \downarrow 0$. The increasing
branch needs no repair at all: its family works by having Step~2 \emph{miss} a
big job whose two shares are both strictly below $\beta$, and a strict test only
makes missing easier. So $g(\beta)$ is the guarantee on $(\tfrac12, 1)$ under
both readings, and with Proposition~\ref{prop:strictbelow} below one half,
$\tfrac23$ minimizes the guarantee under both readings across the whole range.
\end{remark}

\begin{measurement}[the repaired constructions, checked exactly]\label{meas:repairs}
\texttt{p29sup\_\allowbreak threshold\_\allowbreak constructions\_\allowbreak verify.py} builds Lemma~\ref{lem:nested}'s family in
exact rational arithmetic at twenty thresholds: nine below one half, four above
it at $m = 1$, and seven in the strict form of
Proposition~\ref{prop:strictbelow}. At each it checks
\begin{itemize}
\item that the displayed solution is feasible at $T$;
\item that $\OPT = T$ by \emph{exhaustive enumeration} of integral assignments,
      not by an exhibited schedule;
\item that a job of size exactly $T$ pins $T_{\mathrm{LP}}$;
\item that the Step-2 assignment is legal under the reading being tested, weak
      or strict;
\item that the collector opens exactly two slots;
\item and that the worst valid matching finishes at the predicted value, and
      strictly below the limit.
\end{itemize}
Four diagnostic conditions accompany them. The one-filler
family must reach the same limit as the superseded spread-filler construction at
every threshold: two constructions disagreeing would put the number itself in
doubt. Every instance built must finish strictly below $\tfrac{11}{6}T$ at
$\beta = 2/3$. The strict reading must be \emph{refused} by the big-job budget
at a reciprocal threshold, which is the one place the two conventions differ.
And
the $m = 1$ instances must land on $\tfrac52 - \beta$, the value
Theorem~\ref{thm:optexact}'s decreasing branch needs. The shortfall falls by a
factor of ten for each decade of $\varepsilon$. At $\beta = \tfrac12$ under the
strict rule the family reaches $1.99997\,T$ at $\delta = 10^{-5}$, which is
$2 - 3\delta + 2\delta^2$ to the digit.

What it does not do: rational arithmetic instantiates rational thresholds only,
so the irrational ones are reached by the proof and not by the check.
\end{measurement}

\begin{remark}[one instance per threshold, in closed form]\label{rem:midpoint}
No limit is needed to see that $\lfloor 1/\beta\rfloor + \tfrac12$ is not the
answer where the two bounds differ. Write $g = 1 - \beta k$ for the slack, which
is positive exactly when $1/\beta$ is not a whole number.
Lemma~\ref{lem:nested} carries one perturbation, so there is one number to set.
Taking
\[
\varepsilon \ =\ \frac{g}{2\bigl(k(1-\beta) + \tfrac12\bigr)}
\]
makes that lemma's execution finish at
\[
k(1-\varepsilon) + \tfrac12 + b \ =\ \bigl(k + \tfrac12\bigr) + g - \varepsilon\bigl(k(1-\beta) + \tfrac12\bigr)
\ =\ \bigl(k + \tfrac12\bigr) + \tfrac{g}{2} ,
\]
the midpoint of the interval, since the ceiling exceeds $k + \tfrac12$ by exactly
$g$. One instance, in closed form, strictly above $\lfloor 1/\beta\rfloor +
\tfrac12$ at every non-reciprocal $\beta \le \tfrac12$ at once. Tying
$\varepsilon$ to $g$ is what makes it uniform: a fixed perturbation loses
$k\varepsilon$ of makespan and would fail just above a reciprocal, where $g$ is
small.
\end{remark}

\begin{proposition}[the two boundary conventions, priced]\label{prop:conventions}
The constructions of this section sit on two boundary conventions: a job is
\emph{big} when its size strictly exceeds $T/2$, and Step~2 fires when a share
is \emph{at least} $\beta$. Their prices differ.
\begin{enumerate}
\item[(i)] \textbf{The size convention is free.} Under the other reading ---
big means size at least $T/2$ --- the rider of size exactly $T/2$ is big, so
Step~2 seizes it on its own home and it never reaches the collector's slot.
Shrinking it to $T/2 - \delta$ restores the family at a cost of $\delta$, so
the supremum is unchanged.
\item[(ii)] \textbf{The share convention costs exactly one seized job, and only
at reciprocal thresholds.} Under the strict reading $x_{ij} > \beta$ the
collector's shares must rise to $\beta + \delta$, so the big-job budget of $1$
admits $\lceil 1/\beta\rceil - 1$ of them rather than $\lfloor 1/\beta\rfloor$.
The two counts agree unless $1/\beta$ is a whole number, where the strict rule
is one smaller. Both bounds move together --- Lemma~\ref{lem:allseized}'s count
gives the ceiling $\tfrac32 + (1-\beta)(\lceil 1/\beta\rceil - 1)$ and the
perturbed family of Proposition~\ref{prop:strictbelow} meets it --- so at $\beta = 1/k$ the
guarantee falls by exactly $1 - \beta$, and nowhere else. This reasoning runs
through Lemma~\ref{lem:allseized}, which the strict rule invalidates at
$\beta = \tfrac12$ and only there; Remark~\ref{rem:strictathalf} says why, and
Proposition~\ref{prop:strictbelow} settles that endpoint from both sides ---
the ceiling by the above-half argument, the family by raising one share above
$\tfrac12$.
\end{enumerate}
Neither reading changes the negative result: below one half the guarantee is
unbounded as $\beta$ falls under both, and $\tfrac23$ remains the unique best
threshold under both.
\end{proposition}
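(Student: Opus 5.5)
The plan is to treat the two parts of Proposition~\ref{prop:conventions} separately. Each part re-runs an existing construction under the alternative convention and checks that the ceiling argument transfers.

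For (i), the first observation is that the ceilings are untouched by the size convention. They are Theorem~\ref{thm:belowceil} and Proposition~\ref{prop:betaceil}, and they use only two facts: that every slot job has size at most $T/2$, so $s_1\le\tfrac12$, and that seized jobs have size at most $T$. Under the reading ``big iff $p_j\ge T/2$'' a job of size exactly $T/2$ becomes big. For $\beta\le\tfrac12$ it is then seized by Lemma~\ref{lem:allseized}'s argument, and slot jobs are strictly smaller than $T/2$, so the ceiling only improves or stays. One caveat must be checked: a size-$T/2$ job now also counts against the big-job budget. That only strengthens the relaxation, so feasibility of the family must be re-verified but the ceilings are not affected. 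For the lower bound I would take Lemma~\ref{lem:nested} and replace the rider's size by $\tfrac12-\delta$. This keeps it non-big and keeps it sorting before the filler, since $b<\tfrac12$ strictly and I take $\delta$ smaller than $\tfrac12-b$. I would enlarge $b$ by $\mu\delta/2$ so the collector's load stays exactly $1$, which is harmless. The pin still fixes $T_{\mathrm{LP}}=\OPT=T$, the slot structure is identical, and the makespan drops by at most $\delta$. Letting $\delta\downarrow0$ recovers the same supremum.

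For (ii), I would prove the counting claim first. Under the strict test a seized job's share on its machine exceeds $\beta$, so $r$ seized jobs on one machine give $r\beta<1$, and hence $r\le\lceil1/\beta\rceil-1$. Under the weak test the same reasoning gives $r\le\lfloor1/\beta\rfloor$. The two quantities coincide unless $1/\beta\in\mathbb{Z}$, in which case the strict count is one smaller. I would then substitute the strict count into Theorem~\ref{thm:belowceil}'s computation in place of $k$ to obtain the ceiling. This is valid for $\beta<\tfrac12$, where Lemma~\ref{lem:allseized} still holds because a larger share of at least $\tfrac12$ strictly exceeds $\beta$. The matching lower bound is Proposition~\ref{prop:strictbelow}'s family. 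Subtracting the two formulas at $\beta=1/k$ gives the difference $1-\beta$, and at non-reciprocal $\beta$ the difference is zero.

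The only delicate point is the endpoint $\beta=\tfrac12$ under the strict rule. There Lemma~\ref{lem:allseized} fails, so I would defer entirely to Remark~\ref{rem:strictathalf} and Proposition~\ref{prop:strictbelow}, which give value $2$ from both sides. The closing sentence then follows from two facts. First, both formulas grow at least like $1/\beta-1$, which gives unboundedness as $\beta$ falls. Second, $R\ge\tfrac52>\tfrac{11}{6}$ for $\beta<\tfrac12$, and under the strict rule $R_>(\tfrac12)=2>\tfrac{11}{6}$. Combined with Remark~\ref{rem:strictabove}, which handles the range above one half, this shows that $\tfrac23$ is the unique minimizer under both readings.
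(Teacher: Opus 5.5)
Your proposal follows the paper's own argument. For (i) you shrink the rider to $T/2-\delta$. For (ii) you count $r\beta<1$, substitute $\lceil 1/\beta\rceil-1$ into Theorem~\ref{thm:belowceil}, take the family from Proposition~\ref{prop:strictbelow}, and hand the strict endpoint $\beta=\tfrac12$ to Remark~\ref{rem:strictathalf}. Your explicit check that the ceilings survive the size convention is a useful addition the paper leaves implicit. Above one half, though, the reason is that Proposition~\ref{prop:betaceil}'s three cases go through verbatim with big meaning $p_j\ge\tfrac12$. It is not that slot jobs are at most $T/2$, since unseized big jobs do enter the slots there.

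Two small repairs are needed. First, keeping the collector's load at exactly $1$ requires raising $b$ by $\mu\delta$, not $\mu\delta/2$; this is immaterial, since load at most $1$ plus the pin suffices. Second, Remark~\ref{rem:strictabove} covers only the share convention. So uniqueness of $\tfrac23$ under the size reading also needs the increasing-branch family of Theorem~\ref{thm:optexact} shrunk the same way: its job $h$ has size exactly $T/2$ and would otherwise push the collector's big-job budget to $1+\delta$. The paper does not spell out this step either.
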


\noindent
This is stated as a proposition rather than a caveat because the two readings
give different answers, and the difference is a number: the strict share rule
changes the value at the countably many reciprocal thresholds, by $1 - \beta$
each.

\begin{measurement}[the conventions, checked exactly]\label{meas:conventions}
\texttt{p29sup\_beta\_below\_half.py} rebuilds the family under each reading in
exact rational arithmetic. The share convention, at six thresholds: at
$\beta \in \{\tfrac25, \tfrac3{10}\}$ the two rules give the same value, and at
$\beta \in \{\tfrac13, \tfrac14, \tfrac15, \tfrac18\}$ the strict rule drops the
guarantee from $\tfrac72$ to $\tfrac{17}6$, from $\tfrac92$ to $\tfrac{15}4$,
from $\tfrac{11}2$ to $\tfrac{47}{10}$ and from $\tfrac{17}2$ to $\tfrac{61}8$
--- in each case by $1 - \beta$, and in each case the rebuilt family reaches the
new value to within the perturbation. The size convention, at two thresholds:
the two readings differ by $10^{-4}$, the perturbation itself. At
$\beta = \tfrac12$ the strict share rule removes the phenomenon altogether,
leaving the collector one seized job, which is the regime above one half.
\end{measurement}

\begin{measurement}[the family and the ceiling, checked exactly]\label{meas:belowhalf}
At $\beta \in \{\tfrac12, \tfrac25, \tfrac13, \tfrac3{10}, \tfrac14, \tfrac15,
\tfrac18\}$, \texttt{p29sup\_beta\_below\_half.py} verifies in exact rational
arithmetic that the family's relaxation solution is feasible at $T$, that the
threshold is pinned, that $\OPT = T$ by an exhibited schedule, that the Step-2
branch is legal, and that the worst valid matching finishes at
$k(1-\eta) + \tfrac12$; and, over \emph{every} legal Step-2 assignment and
\emph{every} valid matching of those instances, that no execution reaches
Theorem~\ref{thm:belowceil}'s ceiling. Exact arithmetic is required for these checks because the construction sits on
boundary equalities: shares equal to $\beta$, a big-job budget equal to $1$, and
a rider of size exactly $T/2$.

The run additionally verifies five diagnostic conditions:
\begin{itemize}
\item the same instances at $\beta = 2/3$ reach $1.5\,T$, below
      $\tfrac{11}{6}$;
\item the threshold is pinned by a job of size exactly $T$;
\item Theorem~\ref{thm:ratio}'s chain, rebuilt at target $T = 1 + 2\delta$ with
      forced singletons saturating the partners in place of this note's pin,
      reaches a makespan of exactly $\tfrac{11}{6}$ at $\delta = 1/100$ through
      this file's independent slot-and-matching code. That is a ratio of
      $275/153 \approx 1.7974$ to \emph{that} target, strictly inside
      Proposition~\ref{prop:ceiling}'s bound of $\tfrac{11}{6}T = 1.8700$. It is
      not $\tfrac{11}{6} - \delta$, because this control is a
      scaled variant of the family rather than the family, and its singletons
      inflate its optimum to $163/120$ --- which is why the optimum-relative
      families of Theorem~\ref{thm:optexact} replace those singletons with a
      pin;
\item two big jobs stack on one machine at $\beta = \tfrac12$ and cannot just
      above it;
\item and \texttt{p29amb\_ws\_algorithm}'s floating-point implementation, handed
      the same $x$, reproduces all seven family makespans.
\end{itemize} What
was not done: no random search was run below one half with stacking enabled, so
the nested family is a construction rather than a search result.
\end{measurement}

\begin{measurement}[the nested family, and the closed form, checked
exactly]\label{meas:belowexact}
Two further runs, both in exact rational arithmetic.
\texttt{p29sup\_beta\_below\_half.py} builds
Theorem~\ref{thm:belowexact}'s nested family at seven thresholds and confirms the
approach rather than assuming it: the gap to
$\tfrac32 + (1-\beta)\lfloor 1/\beta\rfloor$ falls by a factor of ten for each
decade of $\varepsilon$ at $\beta = \tfrac25$ and $\beta = \tfrac18$, which is the
linear convergence the proof claims and not a plateau.
\texttt{p29sup\_N2\_nonreciprocal\_verify.py} then instantiates
Remark~\ref{rem:midpoint}'s closed form at $110$ non-reciprocal thresholds,
reaching to within $1/1000$ of each reciprocal $1/k$ from above --- the tightest
cases, where the margin $g/2$ is smallest --- and checks at every one of them that
the instance is feasible with $T_{\mathrm{LP}} = \OPT = T$, that the Step-2 branch
is legal, that the collector takes exactly two slots, and that the worst valid
matching finishes at $(k + \tfrac12 + \tfrac{g}{2})T$, strictly above
$\lfloor 1/\beta\rfloor + \tfrac12$ and strictly below the ceiling. The smallest
margin in the sweep is at $\beta = 2/29$, where $g/2 = 1/58$.

The makespan is computed twice at each threshold --- once from the closed form and
once by walking every valid matching --- and the reciprocal thresholds are the
negative control: there $g = 0$, the same construction finishes \emph{below}
$k + \tfrac12$, and the family gains nothing, as it must, since the budget is
already spent by the seized jobs. What the sweep does not cover: irrational
thresholds, which only the proof reaches, since rational arithmetic cannot
instantiate them; and denominators above $30$ other than the near-reciprocal
points added explicitly.
\end{measurement}

\medskip
\noindent
\emph{The sweep that came first.} Before any of the above, the failure of the
$\tfrac{11}{6}$ ceiling below one half was recorded as a measurement, and it is
kept because it is evidence of a different kind --- what happens on random
instances rather than on a construction --- and because its own honest limit is
what made the construction findable. It excluded every Step-2 branch that puts
two big jobs on one machine, on the note's argument, which is valid only above
$\beta = \tfrac12$; that is exactly the branch Theorem~\ref{thm:belowfamily}
uses. It voids the hypothesis Proposition~\ref{prop:ceiling}'s proof leans on to
fix the base loads: that proof takes a machine to carry at most one big job's
worth of fraction \emph{because} $\beta > \tfrac12$, and at $\beta = \tfrac18$
two shares of $\tfrac18$ forbid nothing.

\begin{measurement}[the ceiling is a fact about $\beta = 2/3$]\label{meas:ceilingoffbeta}
Below $\beta = 1/2$, some instance, some feasible $x$, some legal Step-2
branch and some valid matching finish \emph{above} $\tfrac{11}{6}T$.

The smallest witness we hold is seven machines and thirteen jobs at
$\beta = 1/8$, with sizes and allowed pairs
\[
\begin{aligned}
\bigl[\,&(1,\{5,6\}),\ (1,\{2,6\}),\ (1,\{2,6\}),\ (0.512,\{5,6\}),\ (0.462,\{1,4\}),\\
        &(0.516,\{1,6\}),\ (0.533,\{0,1\}),\ (0.541,\{3,4\}),\ (0.497,\{2,3\}),\\
        &(1,\{0,2\}),\ (0.465,\{2,3\}),\ (0.502,\{3,4\}),\ (1,\{1,4\})\,\bigr].
\end{aligned}
\]
Its least feasible threshold is $T = 1.289714$, and the relaxation is
infeasible at $0.999\,T$, verified rather than assumed. Five jobs exceed $T/2$;
a legal Step-2 branch puts one on each of five distinct machines, and some
valid matching then finishes at $1.946943\,T$, against
$\tfrac{11}{6} = 1.833333$. The witness is re-verified outside the search
that found it: least feasibility, branch legality and makespan each
recomputed independently.

\emph{What kind of number these are.} The sizes above are terminating decimals
and are read as the rationals they name ($0.512$ is $512/1000$), but unlike
every other measurement in this note the witness itself is a
\emph{floating-point} result: $T$ is a bisection to $10^{-7}$ over a
linear-programming feasibility test accurate to about $10^{-6}$, so
$1.289714$ is rounded and the infeasibility at $0.999\,T$ is a solver's verdict
rather than a rational infeasibility certificate. The independent recomputation
re-runs the same arithmetic on different code, which catches a coding error and
not a conditioning one. Nothing proved in this note rests on it: its job is to
show that the $\tfrac{11}{6}$ ceiling is a fact about $\beta = 2/3$ and not
about the scheme, and the exact statements below one half are the theorems of
Section~\ref{sec:below}, certified in exact rational arithmetic by
Measurements~\ref{meas:belowhalf}, \ref{meas:belowexact} and \ref{meas:repairs}.

A sweep of $4{,}000$ random instances at
$\beta \in \{\tfrac12, \tfrac25, \tfrac13, \tfrac14, \tfrac15, \tfrac18\}$
(seed $77$) reaches $2.021834\,T$, on six machines and eleven jobs at
$\beta = 1/8$. Counting the instances on which some legal branch finishes
\emph{strictly} above $\tfrac{11}{6}T$ --- none merely reached it --- gives

\smallskip
\centerline{\begin{tabular}{lcccccc}
\hline
$\beta$ & $1/2$ & $2/5$ & $1/3$ & $1/4$ & $1/5$ & $1/8$\\
instances above $11/6$ & $\mathbf{0}$ & $5$ & $8$ & $10$ & $16$ & $27$\\
\hline
\end{tabular}}
\smallskip

\noindent
The qualitative pattern is informative: no violation occurs at
$\beta = 1/2$, which is the hypothesis Proposition~\ref{prop:ceiling}'s proof
uses, and violations become more frequent at the sampled thresholds further
below it.

Limitations of the sweep. The maximum is a lower bound on a supremum, because the
search \emph{excluded} every Step-2 branch that assigns two big jobs to one
machine --- on the note's own argument, which is valid only above
$\beta = 1/2$ and so excludes exactly the branches that may be legal here.
And $4{,}000$ instances at six thresholds is a sample, not a range: what is
certified is each named witness, by exact recomputation, not a statement
about all $\beta < 1/2$. \texttt{p29amb\_ceiling\_under\_free\_step2.py}
regenerates the sweep and the counts.
\end{measurement}

\begin{figure}[!b]
\centering
\includegraphics[width=0.97\textwidth]{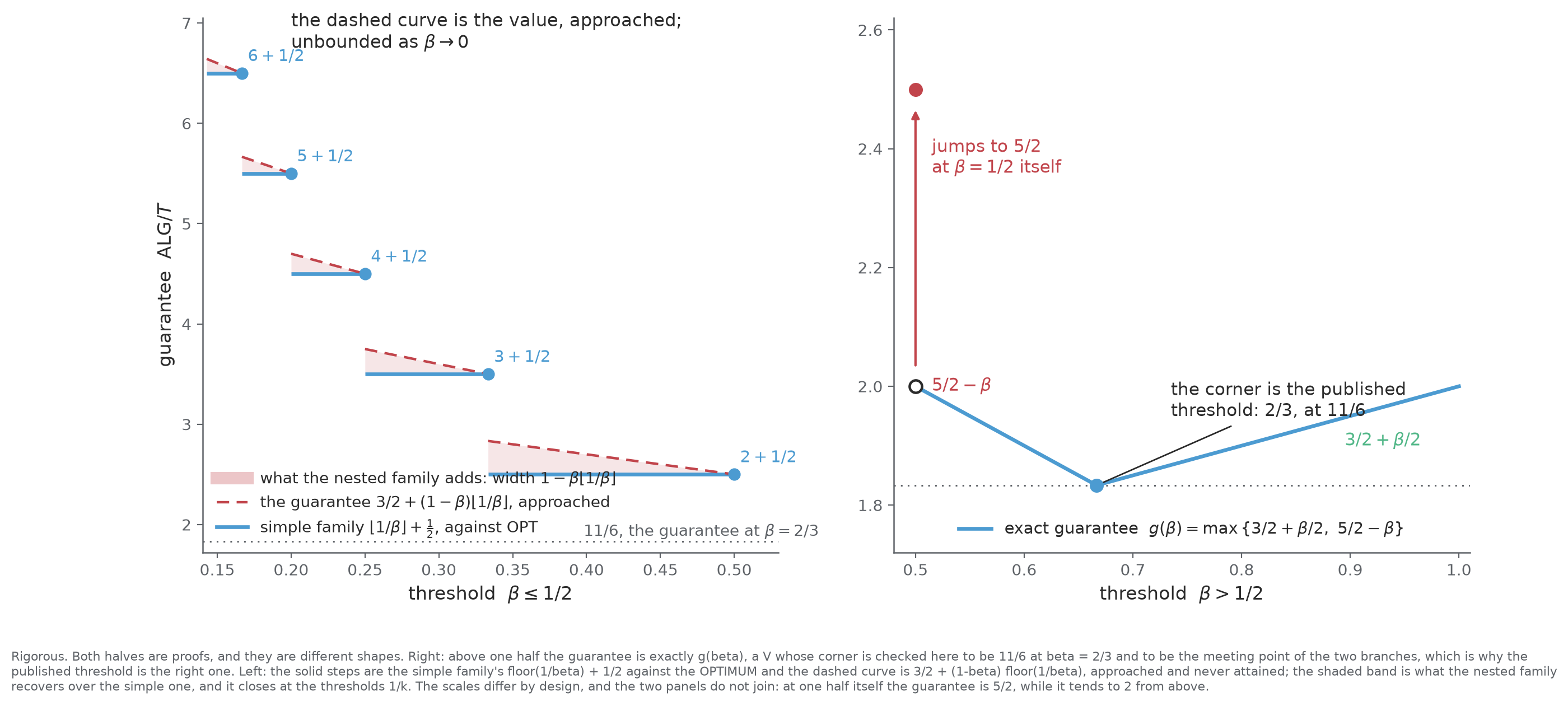}
\caption{The guarantee at every threshold, both halves proved, on two scales
because the regions have different ones. \emph{Right:} above one half the
guarantee is exactly $g(\beta)$ --- Proposition~\ref{prop:beta}'s families from
below, Proposition~\ref{prop:betaceil}'s ceiling from above
(Corollary~\ref{cor:betaexact}) --- a V whose corner is $\beta = 2/3$ at exactly
$11/6$. The corner is the content: it is why the published threshold is the
right one, and every other threshold in the interval is strictly worse.
\emph{Left:} at or below one half the dashed curve
$\tfrac32 + (1-\beta)\lfloor 1/\beta\rfloor$ \emph{is} the guarantee, approached
and never attained (Theorem~\ref{thm:belowexact}): Theorem~\ref{thm:belowceil}
caps it there and the nested family reaches it in the limit. The solid curve is
what the simpler family of Theorem~\ref{thm:belowfamily} reaches,
$\lfloor 1/\beta\rfloor + \tfrac12$, also against the optimum, and the shaded
band is what spending the collector's remaining load budget on further slots
recovers. The band closes at the thresholds $1/k$, marked with dots, where the
two constructions already agree. The two panels deliberately do not join: at one half itself the
guarantee is $\tfrac52$ while it tends to $2$ from above, because the count
$\lfloor 1/\beta\rfloor$ passes from one to two there. Both sides are proved; the random-instance sweep of
Measurement~\ref{meas:ceilingoffbeta} is not drawn here.}
\label{fig:threshold}
\end{figure}

\section{What this note establishes}\label{sec:summary}

We conclude by collecting the analytic results, the computational
observations, and the questions that remain open. Every line points at the
result that carries it.

\paragraph{Proved.}
Together the two freedoms in Wang and Sitters' algorithm permit exactly
$\tfrac{11}{6}$, and no more: Theorem~\ref{thm:ratio} exhibits instances, feasible solutions
and valid matchings reaching $\tfrac{11}{6} - \delta$ for every
$\delta \in (0, 1/6)$, and Proposition~\ref{prop:ceiling} keeps every machine
\emph{strictly} below $\tfrac{11}{6}T$ on every instance, so the supremum is
$\tfrac{11}{6}$ and no instance attains it. A second, six-job family
(Theorem~\ref{thm:tight}) is tight only against the relaxation's target, where
its ratio to the true optimum tends to $\tfrac{11}{8}$; the difference between those
two scales is the distinction the note keeps throughout.

Changing Wang and Sitters' threshold cannot help, and the range is now covered
end to end. Above one half the guarantee against the relaxation's own value is
exactly $g(\beta) = \max\{\tfrac32 + \tfrac\beta2,\ \tfrac52 - \beta\}$:
Proposition~\ref{prop:beta} carries it from below with two families each
covering an interval of thresholds rather than a grid, and
Proposition~\ref{prop:betaceil} carries it from above by taking
Proposition~\ref{prop:ceiling}'s three cases with $\beta$ in them
(Corollary~\ref{cor:betaexact}). At or below one half every big job is seized in
Step~2 (Lemma~\ref{lem:allseized}), a machine can be handed
$\lfloor 1/\beta\rfloor$ of them, and Theorem~\ref{thm:belowfamily} builds
instances with $T_{\mathrm{LP}} = \OPT$ where exactly that happens: the
guarantee is at least $\lfloor 1/\beta\rfloor + \tfrac12$ \emph{against the
optimum} and at most $\tfrac32 + (1-\beta)\lfloor 1/\beta\rfloor$
(Theorem~\ref{thm:belowceil}); Theorem~\ref{thm:belowexact} closes the interval
between the two at \emph{every} threshold below one half, not only at the
thresholds $1/k$ where they already agreed, and the exact value is
$\tfrac32 + (1-\beta)\lfloor 1/\beta\rfloor$, approached and never attained. The
same question above one half against the \emph{optimum} rather than the
relaxation's value is settled too: Theorem~\ref{thm:optexact} replaces
Proposition~\ref{prop:beta}'s forced singletons, which inflate the optimum, with
a pin that does not, and gets $g(\beta)$ there as well. So $\tfrac23$ is the
unique minimizer over all of $(0,1)$, the guarantee is unbounded as the
threshold falls, and it jumps at one half rather than degrading smoothly
(Corollary~\ref{cor:wholerange}). Corollary~\ref{cor:optwholerange} states both
branches as one function of $\beta$ against the optimum.

\paragraph{Measured, and not proved.}
Proposition~\ref{prop:interp}'s two branches are attained only in the limit and
only at their own corners, as its restatement check records.
Measurement~\ref{meas:ceilingoffbeta} certifies each named witness below
$\beta = \tfrac12$ by exact recomputation, but $4{,}000$ instances at six
thresholds is a sample and not a statement about all $\beta < \tfrac12$; what is
now a statement about all of them is Theorem~\ref{thm:belowceil}, proved rather
than sampled, and the sweep is corroboration.
Measurement~\ref{meas:belowhalf} is arithmetic verification of
Theorem~\ref{thm:belowfamily}'s family at seven thresholds, not evidence for the
theorem, which is proved at every threshold. Every
statement labelled \emph{Measurement} reports what the deposited code returned
at the stated parameters; none of them is a proof.

\paragraph{Open.}
The threshold-dependent guarantee is determined in both directions by
Corollary~\ref{cor:optwholerange}, so what remains is elsewhere. Below one half
the interval between
family and ceiling, of width $1 - \beta\lfloor 1/\beta\rfloor$ at thresholds
that are not reciprocals of whole numbers, is closed by
Theorem~\ref{thm:belowexact}; above one half the same question against the
\emph{optimum} rather than the relaxation's value is closed by
Theorem~\ref{thm:optexact}. What the companion note \cite{Paper2} closes --- the
oracle, the vertex restriction, and both $7/4$ statements --- is not open
either. What remains open is outside the threshold: whether a different
relaxation, or a rounding rule that is not Shmoys--Tardos slot matching, moves
the constant at all. Nothing here bears on that.

Nothing in this note improves the approximation ratio for graph balancing,
which stands at $1.75$.

\appendix

\section{Restatement checks}\label{app:restatement}

Each result in this note is stated once and proved once, and the two can drift:
a proof can quietly assume more than the statement grants, or establish less than
the statement claims. The checks below re-read one result against its own proof and
ask whether the quantifiers match --- what is asserted for \emph{every} object,
what is only exhibited for \emph{one}, and which hypotheses the proof actually uses.
They are addressed to a reader verifying the note rather than to one following its
argument, which is why they are here and not in the body.

\subsection*{Lemma~\ref{lem:chain}}

Lemma~\ref{lem:chain} is asserted for every
machine with at least one slot job and every valid matching, with no
hypothesis on $x$ beyond relaxation feasibility; the proof uses only the sort
order, the slot lengths, and the one-job-per-slot cap, so the quantifiers
match.

\subsection*{Theorem~\ref{thm:ratio}}

The claim was: for every $\delta$ in the stated
range there exist an instance, a feasible relaxation solution and a valid
matching with $\ALG/\OPT = 11/6 - \delta$. The proof exhibits all three
explicitly and verifies feasibility, optimality of the claimed $\OPT$,
the Step-2 dodge and the validity of the matching. The supremum statement
follows because $\delta$ ranges over an interval with infimum $0$. The
proof does \emph{not} claim that every run of the scheme on this
instance is bad, and it does not claim the bad point is a vertex of the
polytope; see the remark below.

\subsection*{Proposition~\ref{prop:ceiling}}

The claim quantified over all instances, all
$T$ at which the relaxation is feasible, all feasible $x$, and all valid
matchings, asserting a strict inequality per machine. The proof fixes an
arbitrary such tuple and an arbitrary machine, splits into three trivially exhaustive
cases, uses \cite{WS16}'s observation inside case (S), and derives a strict
inequality in each. The strictness in (S) comes from
$w_{K_i} > 0$; in (N0) from $3/2 < 11/6$; and in (N1) from
$x_q > \tfrac13$ alone, which already gives load $\le 2 - x_q/2 < \tfrac{11}{6}$
without any appeal to $w_{K_i}$. The intermediate strict step in that chain
does use $w_{K_i} > 0$; the conclusion does not. In (N0) the slot-$1$ term is
the size of the job \emph{matched} to slot $1$, not $s_1$; the two differ,
and $s_1$ can exceed $1/2$ there: machine $A_1$ of the two-phase instance of \cite{Paper2} has $s_1 = 1$ under a valid matching that
puts a job of size $1/2$ in its only slot. Cases (N0) and (S) are, up to
notation, the two cases of \cite[Lemma~11]{SY21}---their first case is our
(N0) and (N1) together, and gives the weaker bound $3/2 + \alpha/2$---%
including the tracking
of the Step-2 job's own size that sharpens \cite{WS16}'s flat
$5/2 - \beta$ to $3/2 + b_i/3$; the case analysis here is new only in
(N1), where carrying $x_q$ through gives the parametrised bound
$2 - x_q/2$.

\subsection*{Proposition~\ref{prop:interp}}

The bound is over \emph{every} valid matching,
and it is legitimate to take a maximum because $b_{\max}$, $p_{\max}$ and
$\sigma$ depend only on the instance, $x$ and Step~2 --- never on the
matching. Two things it does not say. Both branches are attained in
the limit, and each only at its own corner: the second at
$(b_{\max}, p_{\max}, \sigma) = (0, 1, \tfrac12)$, by
Theorem~\ref{thm:ratio}'s family, where the first branch reads
$\tfrac32$; and the first at $(1, \tfrac12, \tfrac12)$, by
the threshold-relative family of \cite{Paper2}, where the second reads $\tfrac32$. Nothing here
shows the bound is sharp anywhere between those two corners. Its instance-level
corollary --- every Step-2 job at most $3/4$ and every job at most $7/8$
implies $\tfrac74$, since then $2p_{\max} + \sigma \le \tfrac74 +
\tfrac12$ --- is a statement one can check from the input alone, whereas
$\sigma$ itself depends on $x$ and on Step~2, so the refinement is at the
level of a certificate for a particular run rather than of a class of
instances. And $7/8$ is exactly the value of $p_{\max}$ at which the
$\sigma$ constraint stops binding, which is why that constant appears.

\subsection*{Theorem~\ref{thm:tight}}

The claim asserts relaxation feasibility exactly at
$T$, the $O(\delta)$-box, existence of a valid matching of makespan
$\tfrac{11}{6}T - O(\delta)$, the exact value of $\OPT$, and the
resulting limit $11/8$ for $\ALG/\OPT$. Each is established above. The
theorem does \emph{not} claim tightness of the approximation ratio; that
is Theorem~\ref{thm:ratio}.

\subsection*{Proposition~\ref{prop:beta}}

The proposition is about the guarantee against
$T$, not against $\OPT$, because that is what \cite{SY21}'s instances
witness: the distinction of Section~\ref{sec:sy}, and the reason their
Figure-4 instances sit at $1.37$ against the optimum while binding at $1.83$
against $T$. So changing $\beta$ cannot improve $11/6$ against the relaxation's value.
Proposition~\ref{prop:beta} itself is only $T_{\mathrm{LP}}$-relative;
\textbf{Theorem~\ref{thm:optexact} later supplies the optimum-relative
analogue} above one half, and Theorem~\ref{thm:belowexact} gives it below. An improvement in the
approximation ratio would still need a family binding against $\OPT$ at some
$\beta$ below $2/3$, and neither of \emph{these two} does, both collapsing there
for the reason in Remark~\ref{rem:beta}.

Two quantifiers, both statements about an interval and both proved on it. The
construction is uniform in $\beta \in (\tfrac12, 1)$: the two families are
parameterized by $\beta$, and every condition their proofs need is a comparison
between affine functions of $\beta$, so each covers an interval of thresholds
rather than a point. The second quantifier is the minimizer, and an argmin over
an interval needs a strict inequality at every point of it:
$R_T(\beta) \ge g(\beta) > \tfrac{11}{6} = R_T(\tfrac23)$ for every
$\beta \ne \tfrac23$, so the argmin is $\{\tfrac23\}$. In the proposition $g$ is
a floor; the matching upper bound is Proposition~\ref{prop:betaceil}, stated and
proved separately.

\subsection*{Propositions~\ref{prop:betaceil} and Corollary~\ref{cor:betaexact}}

The proposition quantifies over all instances, all $T$ at which the relaxation
is feasible, all feasible $x$ and all valid matchings, at a fixed
$\beta \in (\tfrac12, 1)$, and asserts a strict per-machine inequality; the
proof fixes an arbitrary such tuple and an arbitrary machine and derives one in
each of three cases, so the quantifiers match. Its cases are
Proposition~\ref{prop:ceiling}'s, and it inherits their hypotheses: the
half-open slot convention of Section~\ref{sec:prelim}, and the exhaustiveness
argument that a machine holding a Step-2 job carries no unassigned big
fraction, which is re-derived at general $\beta$ rather than quoted, because the
published version of that step is stated at $\tfrac23$. The corollary's
equality is a supremum, not a maximum: the ceiling is strict on every execution
and the families approach $g(\beta)$ as their perturbation vanishes, so no
execution attains it. Nothing here is claimed at $\beta \le \tfrac12$, where
Proposition~\ref{prop:ceiling}'s case split is not exhaustive in the same way;
that region is Section~\ref{sec:below}.

\subsection*{Lemma~\ref{lem:allseized} and Theorem~\ref{thm:belowceil}}

The lemma asserts that at $\beta \le \tfrac12$ every big job is seized. It uses
only that a job's fraction is supported on at most two machines and sums to one,
which is the relaxation's first constraint together with the problem's own
restriction; instances outside that model are outside the lemma. The theorem
quantifies over all instances, all feasible $x$, all \emph{legal Step-2
assignments} and all valid matchings. The third of these reflects the enlarged
execution space at $\beta \le \tfrac12$, which is why
Section~\ref{sec:below} extends the definition before stating anything. The
proof applies Lemma~\ref{lem:chain}(ii) with the Step-2 \emph{total} in place of
a single Step-2 job's size, which the lemma permits: its proof adds seized sizes
to matched sizes and never counts the seized jobs. The bound is on
$\ALG/T$; against the optimum it is at least as strong, since
$T_{\mathrm{LP}} \le \OPT$.

\subsection*{Theorem~\ref{thm:belowfamily} and Corollary~\ref{cor:wholerange}}

The theorem is an existence claim, and exhibits one instance, one feasible
solution, one legal Step-2 assignment and one valid matching per $\beta$; it does
\emph{not} claim that every execution on those instances is bad, and the
verification in Measurement~\ref{meas:belowhalf} reports the maximum over all of
them rather than assuming it. Two hypotheses do real work and are stated where
they are used: the rider has size exactly $T/2$ and so is not big, the
definition of big being strict; and the threshold is pinned at $T$ by a job of
size exactly $T$, without which the ratio would be against a target chosen
rather than forced. The value is attained when $\beta\lfloor 1/\beta\rfloor < 1$
and only approached otherwise, and the statement separates the two cases instead
of reporting a supremum as a maximum. The corollary's part~(iii) is an equality
only at reciprocal thresholds; parts~(i) and~(ii) are inequalities, and the
minimizer claim over $(0,1)$ combines them with
Corollary~\ref{cor:betaexact}.
\section*{Code and data}
The code and artifacts needed to reproduce every computational result reported
in this note are deposited on the Open Science Framework as
\texttt{osf\_package\_p29amb.zip} at
\url{https://doi.org/10.17605/OSF.IO/SKX86}, the same project that holds
the deposit of \cite{Shavit26}: the claim-supporting scripts, the artifacts and logs they write, a
driver that regenerates all of it, and pinned dependency versions. The
deposit is the replication set, not the project's working directory. A
README maps each claim to the command and artifact that establish it, and
the driver runs in two tiers: a quick pass of minutes, and a full pass
that adds the extended verification suite.

The deposit reproduces every numerical measurement reported here. One
exception is recorded in the README: the script that draws
Figure~\ref{fig:threshold} is not deposited, since that figure plots closed
forms proved in the text rather than anything measured.

All computational results reported here were regenerated from the deposited
code and artifacts. The public package contains the final claim-supporting
checks; exploratory proof attempts and the internal revision diary are not part
of the replication set.

\section*{Acknowledgements}

I thank Baruch Schieber (NJIT) for introducing me to the
restricted-assignment and graph-balancing gap problems; this note began
from reading Wang and Sitters within the problem area he opened up for
me. I thank JMS and CS for reading and commenting on drafts.

\section*{Authorship and computational process}

I conceived and directed this work and am its author. Claude and ChatGPT
assisted with writing; ChatGPT and Gemini provided adversarial review; and
Claude helped me run the simulations. The three-job family of
Theorem~\ref{thm:ratio} first emerged during AI-assisted adversarial review and
was then checked by the deposited verification and the proof given here. I did
not personally inspect every generated line of code or independently repeat
every computation. The computational results are supported by deposited code
and artifacts, exact checks, cross-checks, and positive controls. I chose the
claims presented here and take responsibility for the paper.

\section*{Author's note}

I offer this work in the spirit of community-supported research---the
distributed protein-folding projects are the model I have in mind---where the
computing is contributed rather than bought, the claim-bearing record is
public, and the worth of the result is what others can build on it. The public
replication package maps every computational result reported here, including
reported searches that found nothing, to the code that produces it and to a
stored artifact or log where the run creates one. It is a curated replication
set, not the entire development record. Additional retained developmental
materials, including superseded approaches, refuted attempts, and unreported
exploratory searches, are available upon reasonable request, subject to privacy
and copyright constraints. The negative results are part of the evidence here.
I hope the work is of use to the field.

\vfill
\begin{center}
\begin{minipage}{0.72\textwidth}
\itshape\small
3 empty slots\\
$1$, $1/2$, $1/3$\\
$11/6$ for the horizon,\\
aspire always\ldots{} never match
\end{minipage}
\end{center}

\end{document}